\renewcommand{\cite}{\citep} 
\newcommand{\eps}{\varepsilon}
\newcommand{\CN}{{\mathcal N}}
\newcommand{\CF}{{\mathcal F}}
\newcommand{\ff}{\mbox{\boldmath $f$}}
\newcommand{\tff}{\mbox{\boldmath $\tilde{f}$}}
\newcommand{\prt}{\partial}
\def\NE{Nash equilibrium\xspace}
\def\NEa{Nash equilibria\xspace}
\def\nr{\mathrm{\textit{NR}}}
\newcommand{\positiveintegers}{\mathbb{Z}^+}
\newcommand{\players}{\mathcal{P}}
\newcommand{\graph}{\mathcal{G}}
\newcommand*{\game}{(\players, \graph, \bm{\beta}, \bm{u})}
\begin{document}
\title{Resource Allocation Game on Social Networks: Best Response Dynamics and Convergence}
\author{
	Wei-Chun Lee\inst{1}
	\and
	Vasilis Livanos\inst{2}
	\and
	Ruta Mehta\inst{2}
	\and
	Hari Sundaram\inst{2}
}

\institute{Google LLC, 
		\email{wclee@google.com}
\and
		Department of Computer Science, 
		University of Illinois at Urbana-Champaign, \\
		\email{livanos3,rutameht,hs1@illinois.edu}
		}

\maketitle


\begin{abstract}
The decisions that human beings make to allocate time has significant bearing on economic output and to the sustenance of social networks. The time allocation problem motivates our formal analysis of the resource allocation game, where agents on a social network, who have asymmetric, private interaction preferences, make decisions on how to allocate time, a bounded endowment, over their neighbors. Unlike the well-known opinion formation game on a social network, our game appears not to be a potential game, and the {\em Best-Response dynamics} is non-differentiable making the analysis of Best-Response dynamics non-trivial.

In our game, we consider two types of player behavior, namely {\em optimistic} or {\em pessimistic}, based on how they use their time endowment over their neighbors. 
To analyze Best-Response dynamics, we circumvent the problem of the game not being a potential game, through the lens of a novel two-level potential function approach. We show that the Best-Response dynamics converges point-wise to a Nash Equilibrium when players are all: {\em optimistic}; {\em pessimistic}; a mix of both types. 
Finally, we show that the Nash Equilibrium set is non-convex but connected, and Price of Anarchy is unbounded while Price of Stability is one. Extensive simulations over a stylized grid reveals that the distribution of quality of the convergence points is unimodal---we conjecture that presence of unimodality is tied to the connectedness of Nash Equilibrium.

\end{abstract}


\section{Introduction}
\label{sec:Introduction}

Social networks play a central role in our lives, both for personal and professional growth. Information diffusion over a social network is one extensively analyzed class of problems~\citep{FILW14,Bakshy2012,Myers2012,Chierichetti2011,Karp2000}. Within the Theoretical CS community, the information diffusion problem is often modeled as a {\em network game} where a player's utility depends on her own and her neighbor's actions, {\em e.g.,} \citep{CKO13,FILW14,FGV16}. A common underlying assumption: a player has adequate resources (e.g. time) to interact with each of her neighbors, and that her payoff is independent of the amount of time spent with each of its neighbors. In the real world, agents have a finite endowment of resources. The dynamics of network games where agents have bounded resources is less well studied and is the focus of this paper.

Research on how individuals spend time, has a rich history in Economics~\citep{Juster1991,Becker1965,Gronau1977}. As~\citet{Juster1991} state in the introduction to their article, ``... the fundamental scarce resource in the economy is the availability of human time, and that the allocation of time to various activities will ultimately determine the relative prices of goods and services, the growth path of real output, and the distribution of income.'' The work of~\citet{Becker1965} spurred the analysis of ``non-market'' time (i.e. outside of work), in particular how individuals utilized their time at home and consequences of time use at home on the market.~\citet{Gronau1977} further sharpened the  analysis of ``non-market time'' by distinguishing between time spent on ``work at home'' and ``leisure.''~\citet{Juster1991} further identified ``socializing'' as a component of ``leisure.''

Despite having a limited amount of time (an inelastic resource) each day, we spend it on sustaining and expanding our social capital. While there is no formal definition of ``social capital'', there is a growing consensus that ``social capital stands for the ability of actors to secure benefits by virtue of membership in social networks or other social structures.''~\cite{Portes1998}. To develop social capital,~\citet{Miritello2013a} suggests that we spend time on communication to maintain friendships and~\cite{Roberts2011} finds that expending time is crucial to the sustenance of social networks.
This connection between expenditure of time and its effect on the social network motivates the following question:

\begin{center}
	{\em How would social interactions evolve on a social network when individuals are resource constrained?}
\end{center}

Consider for example, an academic who receives requests on her time to meet for the coming week: from her PhD students, undergrads who are taking her class, colleagues who want to go to lunch, the occasional meeting request from her department chair. The academic may have private preferences with whom she should spend time and will respond with counter proposals: a PhD student who wants to meet for an hour may get thirty minutes; an undergrad fifteen minutes; agree to spend an hour with her peers; agree to meet for the time that her chair asks. Each person asking her for time, is simultaneously responding to requests on their time from \textit{their} social network. In each case, when someone asks for time, the participants will typically agree to meet for the smaller of the two proposed times. As individuals in a social network make decisions on allocating time, we would like to know: does the social network converge to an equilibrium? If it does, is it stable? In other words, how sensitive is the network to a small change in proposals by any one person? What is the social welfare at equilibria?

Motivated by these scenarios, we define a {\em resource allocation} game on a social network, and analyze the game for convergence of the {\em best-response} dynamics to Nash equilibria.
Next, we summarize our technical contributions.

\subsection{Our Technical Contributions}
\label{sub:Technical Contributions}

We define the {\em resource allocation} game on a (social) network $G=(V,E)$ where each node $i\in V$ is a rational player (agent) with finite time endowment, say $\beta_i$. A player $i$ obtains utility by spending time with her network neighbors $\mathcal{N}_i$ which may differ from neighbor to neighbor. First, she has a {\em private} interaction preferences over her neighbors, namely $w_{ij}$ for each $j\in \CN_i$ such that $\sum_{j \in \CN_i} w_{ij}=1$. Second, to capture decreasing marginal returns, a non-negative, concave, and increasing function $u_{ij}$ captures $i$'s utility from interaction with $j \in \CN_i$. 
In general, we consider the asymmetric case with respect to (w.r.t) the interaction preferences $w_{ij}$, i.e., $w_{ij} \neq w_{ji}$, and we consider both symmetric and asymmetric cases w.r.t the interaction utility $u_{ij}$. We consider two types of player behavior, namely {\em optimistic} and {\em pessimistic}, based on their aggressiveness in proposing time to neighbors.


A pair of players $(i, j)$ who are neighbors i.e. $(i,j)\in E$, can interact only when both agree to do so. Thus, if $f_{ij}$ is the (time) interaction frequency ``proposed'' by $i$ to $j$, and $f_{ji}$ by $j$ to $i$, then the pair will agree to $f^*_{i,j} = \min\{f_{ij},f_{ji}\}$. The total utility of agent $i$ is:
\[
	\sum_{j\in \mathcal{N}_i} w_{ij} u_{ij}(f^*_{ij}),
\]



The main contribution of this paper is the analysis of {\em best-response} (BR) dynamics of the resource allocation game. In each round, a chosen player (in some arbitrary sequence) plays her utility maximizing (best-response) strategy given the proposals of her network neighbors. This game exhibits non-continuous dynamics, and is non-trivial to analyze. Except for weighted potential games \citep{monderer:96}, BR dynamics do not converge for most games. 

Most known results on best response dynamics in (social) network games are under following settings, symmetry of weights along edges, discrete choices, and no $u_{ij}$'s, {\em e.g.,} \cite{FGV16,FV17,FILW14}. 
The symmetric games are known to be potential games~\citep{FGV16} and thus the BR dynamics converge to Nash. The general form of our game exhibits asymmetry in both $w_{ij}$ and $u_{ij}$, and thereby our game does not seem to be a (weighted) potential game. 
Despite this, we obtain convergence for a special case and the general case, described next. These results are independent of the player behavior; optimistic, pessimistic, or a mix of two. 

We first consider the case of {\em global ranking weight system} (Section~\ref{sec:conv:seq:gr}): there is an intrinsic social order ({\em e.g.,} an academic hierarchy, comprising say professors, Ph.D. students and undergrads, continuing our stylized example) among players, and the $w_{ij}$'s are proportional to the global social rank, {\em i.e.,} if $R(j)$ is the rank of player $j\in V$ then $w_{ij}=\frac{R(j)}{\sum_{k\in \CN_i} R(k)}$. Note that, still $w_{ij}\neq w_{ji}$. Furthermore, we assume that the utility functions are symmetric on every edge --- $u_{ij} = u_{ji}$ for all $(i,j) \in E$. 

\begin{theorem}[Informal]
	The resource allocation game with global ranking weight system is a weighted potential game. As a consequence, the best-response dynamics in these games converges to a Nash equilibrium.
\end{theorem}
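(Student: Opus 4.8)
The plan is to exhibit an explicit weighted potential function for the game and then invoke the standard convergence guarantee for weighted potential games. Write $D_i := \sum_{k \in \CN_i} R(k)$, a positive constant that depends only on the fixed graph, so that the global-ranking weights are exactly $w_{ij} = R(j)/D_i$. Define
\[
  \Phi(\vecf) \;=\; \sum_{(i,j)\in E} R(i)\,R(j)\, u_{ij}(f^*_{ij}),
\]
the sum taken over undirected edges, each counted once; this is unambiguous precisely because the utilities are symmetric, $u_{ij}=u_{ji}$, so the summand does not depend on how we orient the edge.

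The heart of the argument is the unilateral-deviation computation. Fix $f_{-i}$ and let player $i$ move from $f_i$ to $f_i'$. Since $f^*_{k\ell}=\min\{f_{k\ell},f_{\ell k}\}$ depends only on the two coordinates on edge $(k,\ell)$, the only summands of $\Phi$ that change are those on edges incident to $i$; pulling the common factor $R(i)$ out of them gives
\begin{align*}
  \Phi(f_i',f_{-i})-\Phi(f_i,f_{-i})
  &= R(i)\sum_{j\in\CN_i} R(j)\Big(u_{ij}(\min\{f'_{ij},f_{ji}\})-u_{ij}(\min\{f_{ij},f_{ji}\})\Big)\\
  &= R(i)\,D_i\,\big(u_i(f_i',f_{-i})-u_i(f_i,f_{-i})\big),
\end{align*}
where the second line merely inserts and removes the factor $1/D_i$ and uses $u_i(\vecf)=\sum_{j\in\CN_i}(R(j)/D_i)\,u_{ij}(f^*_{ij})$. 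Hence $\Phi$ is a weighted potential for the game with positive player weights $w_i = 1/(R(i)D_i)$, which proves the first assertion.

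Convergence then follows from the usual potential-game reasoning, with one caveat. Along best-response dynamics the player chosen to move weakly increases her own utility and therefore, since the weights are positive, weakly increases $\Phi$; moreover $\Phi$ is bounded above, because each $u_{ij}$ is increasing on the compact domain $[0,\min\{\beta_i,\beta_j\}]$ and there are finitely many edges. So the potential sequence is monotone and bounded, hence convergent, and any fixed point of the best-response correspondence is a \NE by construction.

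The subtle point --- and the main obstacle --- is upgrading ``the potential converges'' to ``the strategy profile converges pointwise.'' The strategy sets are continua, and $u_i(\cdot,f_{-i})$, although concave in the own strategy (it is the increasing concave $u_{ij}$ composed with the concave map $f_{ij}\mapsto\min\{f_{ij},f_{ji}\}$), need not be strictly concave; best responses may not be unique and $\Phi$ need not be constant on the Nash set, so the generic pathology of iterates cycling through a connected family of equilibria of equal potential is not ruled out a priori. Closing this gap uses the extra structure of the model: the optimistic / pessimistic behavior selects a canonical best response, and combining this selection with compactness of the strategy space and the concavity above one shows the iterates form a Cauchy sequence whose limit is a fixed point of the selected best-response map. (For the general, non-global-ranking game this single potential disappears, which is exactly the difficulty the two-level potential construction of the later sections is designed to overcome.)
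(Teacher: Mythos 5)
Your potential $\sum_{(i,j)\in E} R(i)R(j)\,u_{ij}(f^*_{ij})$ is, up to a factor of $2$, exactly the paper's $\Phi(\ff)=\sum_{i\in\players} R(i)\,\nr(i)\sum_{j\in\CN_i} w_{ij}u_{ij}(f^*_{ij})$, and your unilateral-deviation computation showing that the change in the potential equals $R(i)\,\nr(i)$ times player $i$'s utility change (using $u_{ij}=u_{ji}$ and $w_{ij}=R(j)/\nr(i)$) is the same argument as the paper's proof of Theorem~\ref{theor:potential}. The convergence step is likewise drawn from the weighted-potential-game property, just as the paper does by citing \cite{monderer:96}; your additional caveat about upgrading convergence of the potential to pointwise convergence of profiles goes beyond what the paper argues but does not change the approach.
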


Next we consider the general resource allocation game when both weights $w_{ij}$ and utilities $u_{ij}$ are asymmetric and show the convergence of BR dynamics (in Section~\ref{sec:conv:seq:gen}), despite its seemingly non-potential nature.

\begin{theorem}[Informal]
	Best response dynamics converges to a Nash equilibrium in general resource allocation games with asymmetric preferences.
\end{theorem}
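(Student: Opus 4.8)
The plan is to split any (fair) best-response trajectory into finitely many \emph{regimes}, show that inside each regime the dynamics is driven by an honest weighted potential — exactly in the spirit of the global-ranking case above — and then stitch the regimes together with an outer combinatorial potential, realizing the ``two-level potential'' scheme; a final compactness-plus-concavity argument upgrades convergence of payoffs to convergence of the strategy profile. \textbf{Step 1 (anatomy of a best response).} I would first record that, since every $u_{ij}$ is increasing, a best-responding player $i$ never gains by proposing above a neighbour's current offer $c_j:=f_{ji}$; hence the \emph{effective} profile $(f^*_{ij})_{j\in\CN_i}$ she induces is the unique optimizer of the concave water-filling program $\max\sum_j w_{ij}u_{ij}(g_j)$ subject to $0\le g_j\le c_j$ and $\sum_j g_j\le\beta_i$, and any residual budget is placed as excess proposals by her behaviour rule (none if pessimistic, a fixed deterministic rule if optimistic). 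Two consequences matter downstream: the move is single-valued up to tie-breaking fixed by the behaviour rule, and immediately after $i$ moves she is the binding endpoint (possibly tied) on every incident edge, so the effective interactions around $i$ are pinned down by the offers $c_j$ alone.

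\textbf{Step 2 (the crux: an outer potential on the reciprocation structure).} I would define a finite-valued outer state $\Psi$ recording, for each edge, which endpoint is currently binding together with the induced ``who is waiting on whom'' partial order, and prove that along \emph{any} best-response sequence $\Psi$ changes only finitely many times --- equivalently, exhibit a monotone (integer-valued, or lexicographic over a fixed player ordering) potential governing it. This is the step I expect to be the main obstacle: none of the naive candidates works, because under a single best-response move an individual effective interaction $f^*_{ij}$ is \emph{not} monotone (a player may over-commit to a neighbour and later pull back, creating slack at that neighbour), and the social welfare $\sum_i\sum_j w_{ij}u_{ij}(f^*_{ij})$ is likewise non-monotone due to the weight asymmetry $w_{ij}\ne w_{ji}$. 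The quantity has to be engineered to be insensitive to such reallocations; I would search for it among lexicographic tuples that first ``freeze'' the globally most-constrained edges and then proceed inward (mirroring the ordering used in the global-ranking argument), or a suitably weighted total-slack functional $\sum_{(i,j)}(f_{ij}-f^*_{ij})$ paired with a count of strictly-binding edges, and argue that every move either leaves $\Psi$ unchanged or strictly decreases it.

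\textbf{Step 3 (inside a frozen regime the game is a potential game).} Once $\Psi$ has reached its final value the binding endpoint $c(e)$ of each edge $e$ is fixed; for the non-binding endpoint the proposal on $e$ is pure slack and enters no payoff, so in the residual game each player $i$ only maximizes $\sum_{e:\,c(e)=i} w_{ie}u_{ie}(x_e)$ over her budget-feasible effective interactions $\vecx$. I would then verify, exactly as in the global-ranking / symmetric-$u$ case assumed above, that $\Phi(\vecx):=\sum_{e} w_{c(e),e}\,u_{c(e),e}(x_e)$ is a potential for this residual game --- in fact an exact one, with weight $1$ for every player, since only the controller's variable appears in the matching summand. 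Hence every best-response move in the final regime weakly increases the bounded function $\Phi$, and the payoff profile therefore converges.

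\textbf{Step 4 (from payoff convergence to point-wise convergence).} Finally I would upgrade to convergence of the strategy profile. The joint action set is compact, so the iterates have limit points, and by Step 1 (and the standing assumption that each player is activated infinitely often) every limit point is a fixed point of the single-valued best-response map, i.e.\ a Nash equilibrium. Using strong concavity of the $u_{ij}$, a player not already best-responding raises $\Phi$ by at least a constant times the squared change in her effective interactions, so $\sum_t\|\vecx^{(t+1)}-\vecx^{(t)}\|^2<\infty$ and consecutive iterates approach each other; combining this with the facts that all limit points are Nash equilibria, that $\Phi$ is continuous and monotone (hence constant on the limit set), and that the excess proposals are determined deterministically once the effective interactions settle, the iterates converge to a single Nash equilibrium. (If only convergence to the equilibrium \emph{set} is required, this last step can instead invoke the connectedness of the Nash set established later in the paper.)
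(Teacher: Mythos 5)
Your high-level architecture (an outer combinatorial/monotone quantity whose stabilization yields regimes, plus an inner argument within the final regime) is indeed the paper's ``two-level'' strategy, but the load-bearing step --- your Step 2 --- is precisely the part you leave unproven: you do not exhibit a monotone outer quantity, you only list candidates you ``would search among.'' The paper's choice is the total slack $Sl(\ff)=\sum_{i}\bigl(\beta_i-\sum_{j\in\CN_i}f^*_{ij}\bigr)$, i.e.\ leftover \emph{endowment} measured through realized interactions (not the over-proposal sum $\sum_{(i,j)}(f_{ij}-f^*_{ij})$ you float), and proving it never increases under a best-response move requires a nontrivial case analysis (empty win set; nonempty win set with slack at least $\eta$; nonempty win set with zero slack and ``reshuffling''). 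Moreover, finiteness of the outer phase comes only from a discretization you never introduce: the paper assumes every proposal is an integer multiple of a quantum $\eta>0$, so each strict decrease of $Sl$ is by at least $\eta$ and $Sl$ stabilizes after finitely many decreases; without this, even a correct monotone $\Psi$ could change infinitely often by vanishing amounts. Note also that the paper does \emph{not} freeze a per-edge binding endpoint: after $Sl$ stabilizes it partitions players into $V_1$ (empty win set, already best-responding, monotonically growing) and $V_2$ (nonempty win set, zero slack). Your stronger claim that the reciprocation structure of every edge eventually freezes is essentially as hard as the theorem itself and is not justified anywhere in the proposal.

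The inner level also diverges in a way that matters. The paper does not construct a residual-game potential; it shows each $V_2$ player's utility is non-decreasing, and --- again via the $\eta$-grid, which makes each player's relevant strategy set finite --- every strict improvement is by at least a fixed constant $\Delta_i>0$ while being bounded above by the unconstrained optimum $OPT_i$, so only finitely many strict moves occur and convergence is in finite time. Your Step 4 instead relies on strong concavity to get a square-summability bound, but the model only assumes $u_{ij}$ concave, increasing and non-negative (linear utilities are allowed), so strong concavity is an assumption you do not have; and your Step 3 exact-potential claim presupposes the frozen binding structure whose existence is exactly the missing Step 2. In short: right skeleton, but the decisive constructions (the concrete monotone outer quantity and its case analysis, the $\eta$-discretization forcing finiteness, and the quantized utility increments) are absent or replaced by hypotheses outside the model, so the proposal does not yet constitute a proof.
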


Our method of proof is akin to the construction of a ``two-level'' potential-like function using the unused total time of a specific class of players. We show that the outer function is monotonically decreasing, and when the outer function is fixed, the inner function decreases. We note that, when the outer function is decreasing, there are no guarantees for the inner function.

What if players move {\em simultaneously}? For the {\em simultaneous-play} variant of the best-response dynamics in our game, we show the existence of a cycle through a stylized example. Note that potential games \citep{monderer:96} {\em e.g.,} battle of the sexes may also exhibit cyclic behavior under the simultaneous-play.


We show that, the Price-of-Anarchy (PoA) is unbounded while Price-of-Stability (PoS) is one (Section \ref{sec:proper:pos-poa}). Interestingly, we can show that the set of Nash Equilibrium is convex under certain conditions (when all players follow a particular strategy) and is connected in general.

Experiments on a stylized grid show that best response dynamics converge to equilibria with high social-welfare (see Section \ref{sec:exper}). 
Experiments also demonstrate that the social-welfare distribution of points where BR dynamics converge is unimodal---probably a consequence of connectedness of the equilibrium set. We leave open a formal analysis of this unimodality,  
and the qualitative analysis of the social welfare of convergence points through notion of {\em average-price-of-anarchy}~\cite{PP16}.


\subsection{Related Work}
\label{sub:Prior Work}

The work most related to ours is of \citet{Anshelevich2012} studying {\em contribution games on networks} with symmetric utility functions on each edge. This translates to unweighted case with $u_{ij}=u_{ji}$ for each edge $(i,j) \in E$ in our model. They analyze 2-strong equilibria, where no pair of two players can deviate and both gain, for efficiency (PoA), and shows convergence of best response dynamics. While the notion of 2-strong equilibria is stronger than the notion of a \NE, we observe that the complexity of analyzing best response dynamics in our proposed work comes from non-symmetric weights on edges.

There has been extensive work on information diffusion in social networks, where decisions/opinions are discrete, typically binary, and graphs are weighted (but no $u_{ij}$ functions). For example, \cite{FGV16} study a {\em discrete preference game}, and show that it is a potential game and therefore the best-response dynamics converges to a Nash equilibrium. In addition they show a polynomial time convergence rate for unweighted graphs, and pseudo-polynomial time convergence rate for the weighted graphs. \cite{FV17} studies this game under {\em social pressure} and obtain fast convergence in special cases. \cite{FILW14} studies the consensus game \citep{DeGroot1974} under asynchronous updates and the majority rule -- a special case of the linear threshold model. \cite{CKO13} studies the price-of-stability of the game when the edge functions interpolates between symmetric (coordination) and non-symmetric (unilateral decision-making). Furthermore, a number of works have explored learning \citep{Bala1998a,Acemoglu2011,NPS15} and herd behavior \citep{Banerjee1992}. 

Our work focuses on best-response dynamics. The best-response dynamics does not converge generally due to its discontinuous nature, while in coordination, or more generally in congestion games, the dynamics converges to a pure Nash equilibrium~\cite{MontanariSaberi}. Apart from BR dynamics, there is extensive literature on the analysis of no-regret dynamics~\cite{no-regret-book1,shai2012} within algorithmic game theory, see~\cite{no-regret-ch}. For general games, the average of the points visited by dynamics converges, but to correlated equilibria, a weaker notion than Nash equilibrium~\cite{blum-no-regret}. While in case of coordination games, the dynamics converges point-wise to a pure Nash equilibrium~\cite{LosertAkin,MPP}. ~\citet{PanageasPiliouras} show that, while the social welfare at the limit-points of the dynamics may not be near optimal, the expected welfare (average price-of-anarchy) is almost optimal in a few special cases. Fast convergence of the average is known for various special cases, e.g., see \cite{fast1,fast2,fast3}.


Work in Ecological games on foraging and more generally on predator-prey dynamics~\cite{Brown1999,Rosenzweig1963} analyzes the outcomes of agents who make decisions with resource constraints. The broad idea is that species forage for food, under limited energy constraints. However, most works focus on population dynamics (growth and depletion of species) which is not a focus of our paper.


To summarize: our main technical contribution lies in the analysis of best response dynamics of the resource allocation game where agents have bounded endowment and private, asymmetric interaction preferences. The main challenge: our game is in general not a potential game, but we are able to show, through a novel two-level potential-like function approach, convergence to Nash Equilibria. We analyze PoA, PoS, and characterize the quality 
of Nash Equilibria.

The rest of this paper is organized as follows. In the next section, we formally introduce the game model. 
Then, in~\Cref{sec:conv}, we present convergence results for best-response dynamics. 
In~\Cref{sec:proper}, we prove guarantees for several key properties, including the Price of Anarchy and the Price of Stability. We conclude by presenting experimental results showing the distribution of the quality of Nash Equilibrium in~\Cref{sec:exper}, and summarize and discuss future directions in~\Cref{sec:Conclusion}.


\section{Preliminaries}
\label{sec:prelim}

In this section we first formalize the game played on a social network by the resource constrained agents, its dynamics, and its Nash equilibria under two different types of player behavior. 

\subsection{Game Model}
\label{sec:prelim:model}

Consider a social network with $n$ agents (players) represented by an undirected graph $G=(V,E)$, where $V$ represents the players and $E$ the links between them. The players are numbered $1$ through $n$, and we denote the set of all players as $\players$. An edge (link) $(i,j)\in E$ represents the interaction between player $i$ and player $j$. The set of neighbors of player $i\in V$ is denoted by $\CN_i=\{ j \ |\ (i,j)\in E\}$ -- note that $i$ can be in $\CN_i$.

Players gain utility from communicating/interacting with each other, however the amount of resources to communicate, such as time, is available in limited quantity to each player. In particular, player $i$ has $\beta_i \ge 0$ amount of communication resource that she can distribute among her neighbors. Let $f_{ij} \ge 0$ denote the frequency proposal made by player $i$ to her neighbor $j$. It follows that $\sum_{j \in \CN_i} f_{ij} \le \beta_i.$ If we denote the vector of frequency proposals of $i$ to all players in $\CN_i$ by $\ff_i= (f_{i1}, \dots, f_{in})$, the set of strategies (all possible allocations) of player $i$ is

\[
\CF_i = \left\{ \ff_i \ \bigg| \ \ff_i \ge 0,\ \ \sum_{j \in \CN_i} f_{ij} \le \beta_i \right\}
\]

where it is understood that $f_{ij} = 0$ if $j \notin \CN_i$.
For communication to happen between players $i$ and $j$, naturally both have to agree to do so. Therefore, the realized allocation of resource, also called {\em interaction frequency} from now on, on edge $(i,j)$ is

\[
f^*_{ij} = \min\{f_{ij},f_{ji}\}.
\]

In other words, $f^*_{ij}$ denotes the {\em agreed upon interaction frequency} between $i$ and $j$. To differentiate we will call $f_{ij}$, {\em proposed interaction frequency}.

To capture asymmetric liking of a player and her neighbor, we consider a weighted network with asymmetric weights. The weight assigned by player $i$ to her neighbor $j$ is denoted by $w_{ij}$. Note here that $w_{ij}$ and $w_{ji}$ may be different. Once every player decides her allocation, let $\ff=(\ff_1,\dots,\ff_n)$ denote the allocation profile of all the players, and let $\ff^*$ denote the agreed upon allocation. The utility of player $i$ at profile $\ff$ is

\begin{equation}\label{eq:util}
u_i(\ff) = \sum_{\mathclap{\substack{j \in \CN_i \\ j \neq i}}} {w_{ij} u_{ij}(f^*_{ij}) }
\end{equation}

where $u_{ij}(f^*_{ij})$ is a non-negative increasing concave function of $f^*_{ij}$ and therefore captures decreasing marginal returns.
We are now ready to formally define our game.

\begin{definition}[Game]\label{def:game}
A {\em game} consists of a weighted graph $\graph=(V,E)$ where the nodes are the players $\players=V$, links $E$ represent the underlying social structure, and weights $w_{ij}$ and $w_{ji}$ on link $(i,j)\in E$ represent player preferences.
For each link $(i,j)\in E$ we are given functions $u_{ij}$ and $u_{ji}$ capturing respectively utility of player $i$ from interaction with $j$ and vice-versa.
Vector $\bm{\beta}$ represents amount of resources of all the players, where the resource constraint of player $i \in \players$ is $\beta_i$. We denote such a game by $\game$.
\end{definition}

Interestingly, our convergence results hold for any arbitrary non-negative, increasing and concave function $u_{ij}$ of $f^*_{ij}$; again
$u_{ij}$ and $u_{ji}$ need not be the same. This leads us to the definition of the social welfare of our game.

\begin{definition}[Social Welfare]\label{def:sw}
Let $\game$ be a game and $\ff$ a frequency profile of this game. Then, the {\em social welfare} of $\game$ at $\ff$ is

\[
SW(\ff) = \sum_{i \in \players} {u_i(\ff)}
\]
\end{definition}

\subsubsection{The Global Ranking Model.}
\label{sec:prelim:gr-model}

\par Many a times there is an inherent hierarchy among the players in a social network, {\em e.g.} the social network of a
company, a network of tennis players, etc. Taking this as motivation, we define a special case of our model, in which there
exists a global ranking of players capturing their social status within the network, and the weights $w_{ij}$ reflect this
global ranking.

\begin{definition}[Global-Ranking Weight System]\label{def:ranking}
 A global ranking is a function $R:\players\mapsto\positiveintegers$. By imposing this function on a game $\game$, we can associate
 each player $i$ with a number $R(i)$. The corresponding \textit{global ranking weight system} is a weighting scheme in
 which the weight $w_{ij}$ that player $i$ places on player $j$ is defined as

 \begin{equation}\label{eq:global-ranking}
  w_{ij} = \frac{R(j)}{\nr(i)} \ \ \ \ \ \ \mbox{where } \nr(i)=\sum_{k \in \CN_i} {R(k)}
 \end{equation}
\end{definition}

\subsection{Optimistic/Pessimistic Agents and Nash Equilibria}
\label{sec:prelim:opt-pes}

In this section we discuss the Nash equilibria of our game under two types of players, namely optimistic and pessimistic.
For a given profile $\ff$, to denote proposals of all player but $i$'s we use $\ff_{-i}$.

\begin{definition}[Nash Equilibrium]\label{def:NE}
A strategy profile $\ff$ is said to be a {\em Nash equilibrium} if no player gains utility by unilateral deviation \cite{Nash1951}, i.e.,
\[
\forall i \in V,\ \ \ u_i(\ff) \ge u_i(\ff'_i,\ff_{-i}), \ \ \ \forall \ff'_i \in \CF_i
\]
\end{definition}

At Nash equilibrium every player is playing her utility maximizing strategy given everyone else's strategy.
Given everyone's frequency proposals, the utility maximizing proposal of a player is called her {\em best-response}. Thus, at NE every player is playing a best-response to the strategies of the other players. 
The best-response of player $i$ with respect to a proposal profile $\ff$ can be computed using the following convex program.

\begin{equation} \label{eq:BR}
\begin{array}{lll}
max  & \qquad \displaystyle{\sum_{\mathclap{\substack{j \in \CN_i \\ j \neq i}}} {w_{ij} u_{ij}(f^*_{ij}) } } & \\
s.t. & \qquad f^*_{ij} \leq f_{ij} \quad \& \quad f^*_{ij} \le f_{ji} \qquad & \forall j \in \CN_i \\
     & \qquad \sum_{j \in \CN_i} f_{ij} \leq \beta_i & \\
     & \qquad f^*_{ij} \geq 0 & \forall j \in \CN_i \\
     & \qquad f_{ij} \geq 0 & \forall j \in \CN_i \\
\end{array}
\end{equation}

where note that $f_{ji}$, the proposal of $j$ to $i$ is a constant. 
For players $i$ and $j$ we have $f^*_{ij} = \min{ \{ f_{ij}, f_{ji} \} }$, if the
proposals are not exactly equal, then  one of the players will have some ``leftover'' frequency because they had to settle for a lower
frequency than they would prefer. For a specific player $i$, we call the sum of this ``leftover'' frequency from the interactions
with all $j \in \CN_i$ the \textit{slack} of player $i$, which is equal to

\[
Sl_i(\ff) = \beta_i - \sum_{\mathclap{\substack{j \in \CN_i \\ j \neq i}}} {f^*_{ij}}
\]

and in the case where $Sl_i$ is non-zero, we assign the weight of edge $\{i, i\} \in E$ to be $f^*_{ii} = Sl_i$. Furthermore, we
call the sum of all the players' frequency slack the \textit{total slack} of the game and we denote it by

\[
Sl(\ff) = \sum_{i = 1}^n {Sl_i(\ff)}
\]

If player $i$ {\em proposes} lower interaction frequency to $j$ than what $j$ proposes to $i$, we say that $i$ ``wins'' over $j$, or $j$ is in her ``win'' set.

\[
W_i(\ff) = \big\{ j \in \CN_i \: | \: j \neq i \: and \: f_{ij} < f_{ji} \big\}
\]

Similarly, if $i$'s proposal to $j$ is at least what $j$ proposes to $i$ then $i$ ``loses'' to $j$, or $j$ is in her ``lose'' set.

\[
L_i(\ff) = \big\{ j \in \CN_i \: | \: j \neq i \: and \: f_{ij} \geq f_{ji} \big\}
\]

Note that it is possible for a player $i$ to have either $W_i = \emptyset$ or $L_i = \emptyset$.

The {\em best-response} formulation of \eqref{eq:BR} raises an interesting question: What should a player do with her frequency slack, if she has any left even after she matches the frequency proposals of all her neighbors?
Specifically, what should player $i$ do when $f_{ij} \ge f_{ji}$ for all
$j \in \CN_i$ but $Sl_i(\ff) > 0$? Since no strategy in this case strictly increases $i$'s utility, all possible ways of distributing $Sl_i$ will give a best-response. To answer this question, we consider two types of players, keeping in mind the dynamical nature of the system. Consider a player $i$ who has positive slack at a given profile $\ff$.

$$
|W_i(\ff)| = 0 \: \: \text{and} \: \: Sl_i(\ff) > 0.
$$

We call $i$ \textit{pessimistic} if she decides not to spend any slack frequency on her neighbors, therefore
setting $f_{ij}=f_{ji}$ and $f_{ii} = Sl_i$. Similarly, we call $i$ \textit{optimistic}, if she decides to spend any portion
of her slack frequency on $L_i$, even though this strategy does not increase her utility right now, with the hope that
maybe, at some future time,  some player $j \in L_i(t)$ will have slack frequency and will be willing to agree to interacting
with $i$ at a higher frequency than before.

\par This distinction is important, as different strategy profiles for the players may lead to different results in the network
in terms of dynamics as well as fixed-points. More specifically, we can define two different types of equilibria for our game.

\begin{definition}[Pessimistic/Optimistic Equilibrium]\label{def:pess-opt}
A \NE $\ff$ is called a {\em pessimistic equilibrium} if for all $i, j \in \players$ such that $(i,j) \in E$, we have
$f_{ij} = f_{ji}$. It is called an {\em optimistic equilibrium} if $\exists i \in \players$ such that
$\exists j \in \CN_i$ and $f_{ij} > f_{ji}$.
\end{definition}

In words, any profile where proposals are matched on every link is a {\em pessimistic equilibrium}. On the other hand under
{\em optimistic equilibria} even though players may be proposing higher frequency to a neighbor, the neighbor does not want to
respond by increasing the frequency on the link to her. This is clearly a stricter condition to achieve compared to
{\em pessimistic equilibria}. However, as we will see in the following sections, pessimistic equilibria also turn out to be of
interest, since they possess nice convexity properties.

\subsection{Best-Response Dynamics: Sequential or Simultaneous}
\label{sec:prelim:dynam}
We analyze dynamics of the interaction in our social network for its convergence properties. Whenever a player is given an opportunity to update her strategy, it is natural for her to play a {\em best-response} against the current strategy profile of the other players (solution of \eqref{eq:BR}). Therefore, we consider the {\em best-response} (BR) dynamics under it's two natural variants, {\em simultaneous move}, and {\em sequential move}. 
Rounds are indexed by $t$ and the frequency proposal profile in round $t$ is denoted by $\ff(t)=(\ff_1(t),\dots,\ff_n(t))$. In both cases players start with certain initial proposal at $t=0$ which may be arbitrary or random. 

In the \textit{sequential} BR dynamics, in every round exactly one player updates: In round $t$, if there exists a player not playing best-response against $\ff(t)$ (in other words $\ff(t)$ is not a Nash equilibrium), then an arbitrary such player plays a best-response. That is, a player $i \in \players$ such that $\ff_i(t)$ is not a best-response against $\ff(t)$ is chosen, and then $\ff_i(t+1)$ is a BR of player $i$ against $\ff(t)$, while for all $j \neq i$, $\ff_j(t+1)=\ff_j(t)$. 

In the \textit{simultaneous move} setting, all players simultaneously update their proposal and play best-response to the strategy profile of the previous round, and inform their neighbors. Note that, there is a unique best-response for a pessimistic player, but an optimistic player $i$ may have multiple best-responses due to many possible ways of distributing her slack on the neighbors in $L_i$. In the latter case, we let the best-response be arbitrary.
By the definition of Nash equilibrium (Definition \ref{def:NE}), it follows that under both sequential and simultaneous move, the convergence points of BR dynamics are Nash equilibria, {\em i.e.,} where every player is playing a best-response to other player's strategies. 



\section{Convergence Analysis and Results}
\label{sec:conv}

This section presents our main convergence results. We look at how the best-response (BR) dynamics behave, both in simultaneous
and sequential play. The convergence points of the best-response dynamics are states where no player
wants to unilaterally deviate from her strategy profile, {\em i.e.,} they are \NEa.

For the sequential move case, we show convergence of BR dynamics to a \NE in Sections \ref{sec:conv:seq:gr} and \ref{sec:conv:seq:gen}.
In Section \ref{sec:conv:seq:gr} we show that the global ranking model gives a weighted potential game, and thus
the convergence of BR dynamics follows relatively easily. To prove the convergence in general model however, we need to
analyze through a different, indirect manner. This proof is presented in Section \ref{sec:conv:seq:gen}. Finally, we show in
Section \ref{sec:conv:simul} that in the case of simultaneous play, the best-response dynamics need not converge, through
a simple counterexample.

Before we continue, we make a quick remark regarding the notation used in this section. Since frequency proposals depend
on time, i.e. the current round of our game, and all other quantities depend on the current strategy profile, we clarify
the notation used below. We use $f_{ij}(t)$ to represent the frequency proposal that player $i$ made to player $j$, at round
$t$ of the dynamics. Similarly, the frequency that $i$ and $j$ end up interacting at time $t$ is denoted by $f^*_{ij}(t)$,
i.e. $f^*_{ij}(t) = \min \{ f_{ij}(t), f_{ij}(t)\}$. For brevity, by abuse of notation, we will denote
$Sl_i(\ff(t))$, $W_i(\ff(t))$ and any other quantity that depends on the frequency profile by $Sl_i(t), W_i(t)$, etc,
respectively. Finally, due to space constraint we discuss the main ideas here, while all the missing proofs are presented
in Appendix \ref{app:conv}.
\medskip

%
\noindent{\bf Convergence in Sequential Play.}
First we consider the best-response dynamics under sequential-play and show our two main convergence results in Sections \ref{sec:conv:seq:gen} and \ref{sec:conv:seq:gen}. 
We show that the best-response dynamics converges to a \NE when players change their strategies one at a time. Our proof holds for any
general non-negative, increasing and concave utility function $u_{ij}(f^*_{ij})$, which underlines the generality and
importance of our results. Furthermore, our result is independent of the order in which the players take turns to change
strategies and relies only on the fact that each player is playing their best-response strategy that maximizes their
utility at each time step and that eventually all players get to play their turn at some point. It is also independent
of whether the players are optimistic, pessimistic or a mix of the two.

Since only one player can change their strategy at each turn, we have to clarify the time notation that will be used below.
Consider a player $i$, that makes a proposal $f_{ij}$ to $j$ at time $t_1$ and the next proposal $f'_{ij}$ of $i$ to $j$
happens at time $t_2 > t_1$. Then, we consider $f_{ij}(t) = f_{ij}$ for all times $t_1 \leq t < t_2$, and we imagine a ``jump''
in $f_{ij}(t)$ from $f_{ij}$ to $f'_{ij}$ at time $t_2$. This same logic applies not only to the players' proposals but to all
quantities defined so far.

\subsection{Convergence of the Global Ranking Model}
\label{sec:conv:seq:gr}

We start by providing strong convergence results of our global ranking model. In this section, we show that the global
ranking model is a weighted potential game, as we show the game admits a weighted potential function when for any players
$i, j$ we have $u_{ij}(f^*_{ij}) = u_{ji}(f^*_{ij})$. This condition does not imply symmetry between players'
interaction, since it may be the case that $w_{ij} \neq w_{ji}$.

\begin{theorem}[Weighted Potential Game]\label{theor:potential}
Given a game $\game$, a global ranking weighting system $R$ and a strategy profile $\ff$, if $u_{ij}(f^*_{ij}) = u_{ji}(f^*_{ij})$
for all strategy profiles $\ff$ and for all players $i, j \in \players$ where $j \in \CN_i$, then $\game$ admits a
weighted potential function

\begin{equation}\label{eq:potential-function}
\Phi(\ff) = \sum_{i \in \players} { \left( R(i) \cdot \nr(i) \cdot \sum_{j \in \CN_i} {w_{ij} u_{ij}(f^*_{ij})} \right) }
\end{equation}

and it is a weighted potential game.
\end{theorem}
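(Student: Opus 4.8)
The plan is to verify directly that $\Phi$ is a weighted potential for the game: I will exhibit positive constants $\alpha_k$, one per player, such that
\[
u_k(\ff'_k,\ff_{-k}) - u_k(\ff_k,\ff_{-k}) \;=\; \alpha_k\,\big(\Phi(\ff'_k,\ff_{-k}) - \Phi(\ff_k,\ff_{-k})\big)
\]
for every player $k$, every $\ff_{-k}$, and every $\ff_k,\ff'_k\in\CF_k$. The constant that works is $\alpha_k = \frac{1}{2\,R(k)\,\nr(k)}$, which is positive because $R(k)\in\positiveintegers$.

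Step 1: rewrite $\Phi$ in edge-symmetric form. Plugging $w_{ij}=R(j)/\nr(i)$ into \eqref{eq:potential-function} cancels the factor $\nr(i)$, and --- reading the inner sum over $j\in\CN_i$ with $j\neq i$, consistently with \eqref{eq:util} --- gives $\Phi(\ff)=\sum_{i\in\players}\sum_{j\in\CN_i,\,j\neq i} R(i)R(j)\,u_{ij}(f^*_{ij})$. Since $G$ is undirected, since $f^*_{ij}=f^*_{ji}$ ($\min$ is symmetric), and since by hypothesis $u_{ij}(f^*_{ij})=u_{ji}(f^*_{ij})$, the contribution of an undirected edge $\{i,j\}$ is the same from either endpoint, so
\[
\Phi(\ff)\;=\;2\!\!\sum_{\{i,j\}\in E,\ i\neq j}\!\! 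R(i)R(j)\,u_{ij}(f^*_{ij}).
\]

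Step 2: compare the two differences under a deviation of a single player $k$. Only the proposals $f_{kj}$, $j\in\CN_k$, change, and $f^*_{kj}=\min\{f_{kj},f_{jk}\}$ with $f_{jk}$ held fixed, so only edges incident to $k$ contribute to either side. Writing $\Delta_j = u_{kj}(\min\{f'_{kj},f_{jk}\})-u_{kj}(\min\{f_{kj},f_{jk}\})$ for $j\in\CN_k\setminus\{k\}$, the edge form of $\Phi$ gives $\Phi(\ff'_k,\ff_{-k})-\Phi(\ff_k,\ff_{-k}) = 2R(k)\sum_j R(j)\Delta_j$, while \eqref{eq:util} together with $w_{kj}=R(j)/\nr(k)$ gives $u_k(\ff'_k,\ff_{-k})-u_k(\ff_k,\ff_{-k}) = \frac{1}{\nr(k)}\sum_j R(j)\Delta_j$. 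Hence the two differences are proportional with ratio exactly $2R(k)\nr(k)$, independent of $\ff_{-k}$ and of the particular deviation; this is the claimed identity with $\alpha_k=1/(2R(k)\nr(k))$, so $\Phi$ is a weighted potential function and the game a weighted potential game. The convergence of sequential best-response dynamics to a \NE then follows from the usual potential argument: at each step the updating player strictly increases her utility, hence (by the identity) $\Phi$ strictly increases, and $\Phi$ is continuous, hence bounded, on the compact strategy space $\prod_i\CF_i$.

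The verification is routine once the form of $\Phi$ is in hand; the one genuinely content-bearing ingredient is the multiplier $R(i)\nr(i)$ in \eqref{eq:potential-function}, tailored precisely to cancel the $\nr(i)$ hidden inside $w_{ij}$ and expose the symmetric quantity $R(i)R(j)u_{ij}(f^*_{ij})$, after which undirectedness of $G$ together with $u_{ij}=u_{ji}$ finishes the job. The point I would be careful about is the self-loop $f^*_{ii}=Sl_i$: since \eqref{eq:util} has no $j=i$ term, the inner sum in $\Phi$ must likewise be read over $j\neq i$ (equivalently, the ``self-loop utility'' $u_{ii}$ is inert); otherwise a term $R(k)^2\,u_{kk}(Sl_k)$ would appear in the change of $\Phi$ but not in the change of $u_k$ and break the exact proportionality. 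Note also that no differentiability of $u_{ij}$ or of $f^*_{ij}$ is used anywhere --- only that $\min$ is symmetric --- which is exactly why this ranking case avoids the non-smoothness that complicates the general asymmetric case.
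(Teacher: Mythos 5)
Your proof is correct and follows essentially the same route as the paper's: substitute $w_{ij}=R(j)/\nr(i)$ so the multiplier $R(i)\nr(i)$ exposes the symmetric edge quantity $R(i)R(j)u_{ij}(f^*_{ij})$, then use $f^*_{ij}=f^*_{ji}$ and $u_{ij}=u_{ji}$ to show a unilateral deviation by player $k$ changes $\Phi$ by exactly $2R(k)\nr(k)$ times the change in $u_k$. Your explicit handling of the $j\neq i$ (self-loop/slack) term is a careful touch the paper glosses over, and note that the trailing convergence remark is really the content of the separate corollary (cited to Monderer--Shapley) rather than part of this theorem.
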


The intuition behind our method is twofold. First of all, the players compute their utility based on $\ff^*$, instead
of $\ff$, meaning that their utility only depends on the frequency they end up communicating at instead of their frequency proposals.
Thus, if player $i$ changes their proposal to $j \in \CN_i$, $f^*_{ij}$ changes accordingly and the difference is the same
for both $i$ and $j$. Furthermore, recall that $w_{ij} = \frac{R(i)}{\nr(i)}$. By scaling player's $i$ utility in $\Phi$ by $\nr(i)$,
we obtain a symmetric expression for both players, which allows us to connect $i$'s effect on $j$ with $j$'s effect on $i$.

Since the global ranking model is a weighted potential game, it is well-known that the best-response dynamics converge
to a \NE \cite{monderer:96}.

\begin{corollary}\label{cor:gr-conv}
Consider a game $\game$ that admits a global ranking weighting system $R$ and $u_{ij}(f^*_{ij}) = u_{ji}(f^*_{ij})$
for all strategy profiles $\ff$ and for all players $i, j \in \players$ where $j \in \CN_i$. Then, the best-response
dynamics of $\game$ converge to a \NE.
\end{corollary}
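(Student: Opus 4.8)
The plan is to treat Corollary~\ref{cor:gr-conv} as a direct consequence of the weighted-potential structure proved in Theorem~\ref{theor:potential}, so the only real work is converting that structure into a convergence statement over the continuous strategy space. From Theorem~\ref{theor:potential} I may assume that for each player $i$ there is a constant $c_i > 0$ such that, for every unilateral deviation from $\ff_i$ to $\ff'_i$,
\begin{equation*}
u_i(\ff'_i, \ff_{-i}) - u_i(\ff_i, \ff_{-i}) = c_i \left( \Phi(\ff'_i, \ff_{-i}) - \Phi(\ff_i, \ff_{-i}) \right).
\end{equation*}
In particular, a move strictly raises the mover's own utility if and only if it strictly raises $\Phi$, so best-response moves and potential increases are tied together by positive, player-dependent factors.

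First I would establish monotonicity and boundedness of $\Phi$ along the dynamics. In the sequential BR dynamics the single player who updates in round $t$ switches to a best response, which weakly increases her own utility and strictly increases it precisely when she was not already best-responding; by the displayed identity and $c_i>0$ this gives $\Phi(\ff(t+1)) \ge \Phi(\ff(t))$, strictly on every genuine improvement step. The joint strategy set $\prod_i \CF_i$ is closed and bounded, hence compact, and $\Phi$ is continuous since each $u_{ij}$ is continuous and $f^*_{ij} = \min\{f_{ij},f_{ji}\}$ is continuous in $\ff$; therefore $\Phi$ is bounded, the non-decreasing sequence $\Phi(\ff(t))$ converges to $\Phi^\star = \sup_t \Phi(\ff(t))$, and consecutive increments tend to $0$. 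Dividing each increment by the (positive, finitely many) factors $c_i$, the realized best-response gain of whichever player moves in round $t$ also tends to $0$.

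Next I would show that every limit point of the trajectory is a \NE. By compactness some subsequence satisfies $\ff(t_k) \to \ff^\star$, and by continuity $\Phi(\ff^\star) = \Phi^\star$. If $\ff^\star$ were not a \NE, some player $i$ would have a best-response gain $\ge \delta > 0$ at $\ff^\star$, hence, by continuity of the best-response value, a gain $\ge \delta/2$ on a whole neighborhood of $\ff^\star$. Since the dynamics enters this neighborhood infinitely often while $i$ is not best-responding there, the fairness built into the sequential dynamics (a player who can still improve is eventually selected, Section~\ref{sec:prelim:dynam}) forces $i$ to eventually realize a gain bounded below by a positive constant infinitely often, contradicting the fact that realized gains tend to $0$. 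Hence all limit points are \NEa.

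The hard part will be twofold. The continuous strategy space means the finite-improvement property that gives immediate convergence for finite potential games (\cite{monderer:96}) does not transfer verbatim, so I cannot claim finite-time termination and must argue asymptotically as above; and the bookkeeping in the limit-point step is delicate, because the player who actually moves need not be the improving player $i$ and the profile drifts between consecutive moves of $i$, so I would need to track $i$'s frozen coordinate $\ff_i$ between its updates and use that its improvement gap stays bounded below while the profile remains near $\ff^\star$. Two further subtleties I would check but expect to be benign are the non-differentiability of the $\min$ operator and the non-uniqueness of an optimistic player's best response (she may spread her slack over $L_i$ arbitrarily); both are harmless because the entire argument uses only the continuous \emph{values} of $u_i$ and $\Phi$, never derivatives or uniqueness of the best-response map. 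Finally, to match the point-wise convergence stated in the introduction I would note that all limit points share the value $\Phi^\star$ and are \NEa; upgrading ``every limit point is a \NE'' to convergence of the full trajectory is the one genuinely game-specific step, which I expect to require the concrete structure of the best-response program~\eqref{eq:BR} rather than the bare potential argument.
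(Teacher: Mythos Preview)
The paper's treatment of this corollary is a one-liner: having established in Theorem~\ref{theor:potential} that the game is a weighted potential game, it simply cites \cite{monderer:96} for convergence of best-response dynamics and moves on. No further argument is given.

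Your approach is far more detailed and, commendably, confronts a subtlety the paper glosses over: the strategy sets $\CF_i$ are continuous simplices, so the finite-improvement property of \cite{monderer:96} does not transfer verbatim, and one is pushed toward an asymptotic argument. Your monotonicity/boundedness/compactness chain is correct, and the plan to show every limit point is a \NE via continuity of the best-response value is the standard route for continuous potential games.

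There is, however, a genuine gap in the limit-point step, and it is one you partially anticipate. Knowing that $\ff(t_k)\to\ff^\star$ and that player $i$ has best-response gain at least $\delta/2$ near $\ff^\star$ is not enough: by the time fairness forces $i$ to move, at some $s_k>t_k$, the opponents' profile $\ff_{-i}(s_k)$ may have drifted far from $\ff^\star_{-i}$, so $i$'s best-response gain at $\ff(s_k)$ need not be bounded below. You correctly note that $\ff_i$ is frozen between $t_k$ and $s_k$, but that alone does not close the argument, and the fact that $\Phi$ rose only slightly over that interval does not pin down $\ff_{-i}(s_k)$ without further structure. This difficulty bites already at the ``every limit point is a \NE'' stage, not only at the upgrade to full trajectory convergence that you flag at the end.

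The paper sidesteps all of this: read together with the $\eta$-discretization introduced immediately afterward in Section~\ref{sec:conv:seq:gen}, each $\CF_i$ is effectively a finite grid, and then the classical finite-improvement property gives termination at a \NE in finitely many steps. Under that assumption your machinery collapses to: $\Phi$ strictly increases on every genuine best-response move, $\Phi$ takes finitely many values, hence the dynamics terminates, and a terminal profile is by definition a \NE. That is the intended, and much shorter, proof.
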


\subsection{Convergence of the General Model}\label{sec:conv:seq:gen}

In the previous section, we saw that the best-response dynamics in the special case of the global ranking model
converges to a \NE, by constructing a potential function. In contrast, for the general model, with complete asymmetry in
weights, utility functions and behavior of agents, existence of any such potential function seems unlikely. We obtain
the convergence result for the general model in this section by an in depth analysis of the best-response dynamics.
We show that the best-response dynamics converge to a \NE for the general model as well, for any non-negative, increasing
and concave utility function. As such, our result fully characterizes the best-response dynamics for this game.

Our argument for the model's convergence is akin to using a two-stage potential-like function which decreases at each time
step and reaches a minimum which is equivalent to a \NE in our game. In order to avoid infinitesimal changes, first, we impose a reasonable
constraints on proposals of the players. Let $\eta>0$ be the minimum denominator of the resource under consideration, {\em i.e.,} $\eta=$ one second, if the resource is time. Now on we assume that any pair of players interact only at multiples of a
fixed constant $\eta > 0$. 
In other words, for all pairs of players $i, j$ and all times $t$, $f_{ij}(t) = \lambda_{ij}(t) \cdot \eta$, where $\lambda_{ij}(t) : \mathbb{N} \to \mathbb{N}$ is a pair-specific function and $\eta > 0$ is a fixed constant. In this case, it is without loss of generality to consider $\beta_i$s' as well, as multiples of $\eta$. Before proceeding with our main result, we will describe a sufficient condition for a frequency profile to be a \NE. 

\begin{lemma}\label{lem:wineq}
Consider a frequency profile $\ff$ such that, for every player $i \in \players$, $|W_i(\ff)| = 0$. Then, no player $i$
can strictly increase her utility, and $\ff$ is a \NE.
\end{lemma}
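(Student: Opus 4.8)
The plan is to argue, player by player, that the hypothesis $|W_i(\ff)|=0$ forces every edge incident to $i$ to already be realized at the largest frequency $i$ could ever hope to obtain against the fixed proposals $\ff_{-i}$, so $i$'s current strategy is in fact a best response. Fix any player $i\in\players$. By definition of the win set, $|W_i(\ff)|=0$ means that for every neighbor $j\in\CN_i$ with $j\neq i$ we have $f_{ij}\geq f_{ji}$, and hence $f^*_{ij}=\min\{f_{ij},f_{ji}\}=f_{ji}$. Plugging this into \eqref{eq:util} gives $u_i(\ff)=\sum_{j\in\CN_i,\, j\neq i} w_{ij}\,u_{ij}(f_{ji})$, i.e.\ $i$'s realized utility depends only on the (constant) proposals $f_{ji}$ of her neighbors.

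Next I would bound the utility of an arbitrary deviation. Let $\ff'_i\in\CF_i$ be any feasible strategy for $i$. Since only $i$'s proposal changes, the neighbors' proposals $f_{ji}$ stay fixed, so the realized frequency on edge $(i,j)$ becomes $f^*_{ij}=\min\{f'_{ij},f_{ji}\}\leq f_{ji}$ for every $j\neq i$ in $\CN_i$. Because each $u_{ij}$ is non-decreasing and each weight $w_{ij}\geq 0$, every term satisfies $w_{ij}\,u_{ij}(\min\{f'_{ij},f_{ji}\})\leq w_{ij}\,u_{ij}(f_{ji})$. Summing over $j$ yields $u_i(\ff'_i,\ff_{-i})\leq \sum_{j\in\CN_i,\, j\neq i} w_{ij}\,u_{ij}(f_{ji}) = u_i(\ff)$. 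Thus no unilateral deviation strictly increases $i$'s utility; since $i$ was arbitrary, $\ff$ satisfies Definition~\ref{def:NE} and is a \NE.

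There is essentially no hard step here: the only things to keep straight are bookkeeping details, namely that the self-loop term $f_{ii}$ (the slack) never enters the utility sum, so it is irrelevant to the argument, and that we only use monotonicity of the $u_{ij}$'s here — concavity plays no role in this particular lemma. (One may also note, though it is not needed for the proof, that the strategy $f'_{ij}=f_{ji}$ for all $j\neq i$ is itself feasible, since $\sum_{j\neq i} f_{ji}\leq \sum_{j\neq i} f_{ij}\leq \beta_i$, which makes explicit that $u_i(\ff)$ is the exact optimum of the best-response program \eqref{eq:BR}.)
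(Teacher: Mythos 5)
Your proof is correct and follows essentially the same route as the paper's: both observe that $|W_i(\ff)|=0$ forces $f^*_{ij}=f_{ji}$ on every edge, so the neighbors' fixed proposals already cap the realized frequencies at their maximum attainable values, and monotonicity of the $u_{ij}$'s then rules out any profitable unilateral deviation. Your write-up is in fact slightly more explicit than the paper's (which phrases the same point as the constraints $f^*_{ij}\leq f_{ji}$ in the best-response program being tight), and the feasibility remark is a harmless correct addition.
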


The first stage of our proof is to show that the total slack $Sl(t)$ of the game is monotonically decreasing.

\begin{lemma}\label{lem:slack-decr}
Under best-response dynamics, the total slack $Sl(t)$ is monotonically decreasing.
\end{lemma}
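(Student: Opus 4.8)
The plan is to turn the statement about the global quantity $Sl(t)$ into a local statement about the single player who moves, and then bound from below the total agreed-upon frequency on that player's incident edges. First I would rewrite the total slack in edge form. Since $Sl_i(\ff)=\beta_i-\sum_{j\in\CN_i,\,j\neq i}f^*_{ij}$ and $f^*_{ij}=f^*_{ji}$, summing over all players gives
\[
Sl(\ff)=\sum_{i\in\players}\beta_i \;-\; 2\!\!\sum_{\substack{\{i,j\}\in E\\ i\neq j}}\!\! f^*_{ij},
\]
and $\sum_i\beta_i$ is a constant of the game (self-loops carry no utility and can be ignored throughout). Hence $Sl(t)$ is monotonically decreasing if and only if $\sum_{\{i,j\}\in E}f^*_{ij}(t)$ is monotonically non-decreasing. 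Since the dynamics is \emph{sequential}, in round $t$ exactly one player $p$ updates, and every other proposal — in particular every $f_{jp}$ with $j\in\CN_p$ — is unchanged; so the only edges whose $f^*$ changes are those incident to $p$, and it suffices to prove that a best-responding player never decreases the total agreed frequency on her own edges:
\[
\sum_{\substack{j\in\CN_p\\ j\neq p}} f^*_{pj}(t+1)\;\ge\;\sum_{\substack{j\in\CN_p\\ j\neq p}} f^*_{pj}(t).
\]

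The key is to sandwich both sides by $M:=\min\!\big(\beta_p,\ \sum_{j\in\CN_p,\,j\neq p}f_{jp}(t)\big)$. The upper bound on the old value is immediate from feasibility: $\sum_j f^*_{pj}(t)=\sum_j\min(f_{pj}(t),f_{jp}(t))\le\min\!\big(\sum_j f_{pj}(t),\sum_j f_{jp}(t)\big)\le M$, because $\sum_j f_{pj}(t)\le\beta_p$. For the lower bound on the new value I would show that \emph{every} maximizer of the best-response program \eqref{eq:BR} satisfies $\sum_j f^*_{pj}(t+1)\ge M$, by an exchange argument. Suppose not. Then $\sum_j f^*_{pj}(t+1)<\sum_j f_{jp}(t)$, so there is a neighbor $j_0$ on which $p$ is the strict bottleneck, $f_{pj_0}(t+1)=f^*_{pj_0}(t+1)<f_{j_0p}(t)$; and $\sum_j f^*_{pj}(t+1)<\beta_p$, so either $p$ has unspent budget ($\sum_j f_{pj}(t+1)<\beta_p$) or she over-proposes on some edge $j_1\neq j_0$ (i.e.\ $f_{pj_1}(t+1)>f_{j_1p}(t)=f^*_{pj_1}(t+1)$). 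In the first case move a positive amount $\le f_{j_0p}(t)-f_{pj_0}(t+1)$ of fresh budget onto edge $j_0$; in the second, move a positive amount $\le\min\{f_{pj_1}(t+1)-f_{j_1p}(t),\,f_{j_0p}(t)-f_{pj_0}(t+1)\}$ from $j_1$ to $j_0$. Either perturbation strictly raises $f^*_{pj_0}$ while leaving all other $f^*_{pj}$ unchanged, hence strictly increases $u_p$ (using only that $u_{pj_0}$ is increasing), contradicting that $\ff_p(t+1)$ is a best response. Chaining the two bounds gives $\sum_j f^*_{pj}(t)\le M\le\sum_j f^*_{pj}(t+1)$, and combining with the edge-form identity yields $Sl(t+1)-Sl(t)=-2\sum_{j\in\CN_p,\,j\neq p}\big(f^*_{pj}(t+1)-f^*_{pj}(t)\big)\le 0$.

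I expect the exchange argument to be the main obstacle: one has to simultaneously handle the two distinct sources of ``slackness'' in a non-optimal allocation — unused endowment versus proposing strictly above a neighbor's offer — and be careful that the transferred amount is small enough neither to overshoot $f_{j_0p}(t)$ on the receiving edge nor to perturb $f^*_{pj_1}$ on the draining edge. Note that concavity of the $u_{ij}$ is not used here (only monotonicity), and that the argument is completely indifferent to whether $p$ is optimistic or pessimistic, since in both regimes she plays some maximizer of \eqref{eq:BR} and the inequality $\sum_j f^*_{pj}(t+1)\ge M$ holds for every maximizer. The discretization by $\eta$ is likewise not needed for this lemma; it only becomes relevant later to upgrade ``non-increasing'' to eventual termination.
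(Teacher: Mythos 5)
Your proof is correct, but it takes a genuinely different route from the paper's. The paper argues locally about slacks via a case analysis on the state of the moving player $i$: if $|W_i(t)|=0$ nothing changes (Claim \ref{clm:empty-win}); if $|W_i(t)|>0$ and $Sl_i(t)\ge\eta$ the mover's slack, and hence the total slack, strictly drops (Claim \ref{clm:personal}); and if $|W_i(t)|>0$ with $Sl_i(t)=0$ a ``reshuffling'' argument tracks how the neighbors' slacks in $S\cup S'$ change (Claim \ref{clm:reshuffle}). You instead eliminate all bookkeeping of neighbors' slacks by the double-counting identity $Sl(\ff)=\sum_i\beta_i-2\sum_{\{i,j\}\in E}f^*_{ij}$, reduce the lemma to showing that the mover never decreases the total agreed frequency on her own edges, and prove this by sandwiching both sides with $M=\min\big(\beta_p,\sum_j f_{jp}(t)\big)$, where the lower bound for every maximizer of \eqref{eq:BR} follows from a clean exchange argument (either unspent budget or an over-proposal can be shifted to a bottleneck edge, strictly improving utility). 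Your version is more uniform, needs no $\eta$-discretization, and in fact yields the strict decrease of the paper's Claim \ref{clm:personal} as a corollary (when $|W_p(t)|>0$ and $Sl_p(t)>0$ one gets $\sum_j f^*_{pj}(t)<M\le\sum_j f^*_{pj}(t+1)$); the paper's decomposition, on the other hand, produces standalone claims that are reused verbatim in the proof of Lemma \ref{lem:winslack-decr}, so if your proof were adopted that strict-decrease statement would need to be extracted explicitly. Both arguments rely on the same implicit assumptions at the contradiction step (strictly increasing $u_{ij}$ and positive weights $w_{ij}$, so that raising $f^*_{pj_0}$ strictly raises utility), so this is not a gap relative to the paper.
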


The basic intuition behind the proof of the above lemma is that whenever a player plays her best-response at each turn,
the total slack of the game decreases if she decreases her slack, or stays the same if she increases
her utility without decreasing her slack.

Next, we show that, as the best-response dynamics progress, $Sl(t)$ will decrease and, after some time $t$, it will
remain constant.

\begin{lemma}\label{lem:slack-stabilize}
Under best-response dynamics, if there exist a fixed constant $\eta > 0$ such that for all players $i \in \players$
and all times $t \geq 0$, there exists a $\lambda_{ij} : \mathbb{N} \to \mathbb{N}$ for all neighbors $j \in \CN_i$
such that $f_{ij}(t) = \lambda_{ij}(t) \cdot \eta$, then there exists a time $t_0$ such that $Sl(t) = Sl(t_0)$ for
all times $t \geq t_0$.
\end{lemma}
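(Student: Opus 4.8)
The plan is to combine the monotonicity already established in Lemma~\ref{lem:slack-decr} with the discreteness forced by the granularity assumption, reducing the claim to the elementary fact that a non-increasing sequence taking values in a bounded subset of $\eta\mathbb{N}$ must eventually be constant.

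First I would check that the total slack always lies on the lattice $\eta\mathbb{N}=\{0,\eta,2\eta,\dots\}$. By hypothesis every proposal satisfies $f_{ij}(t)=\lambda_{ij}(t)\cdot\eta$ with $\lambda_{ij}(t)\in\mathbb{N}$, so each agreed frequency $f^*_{ij}(t)=\min\{f_{ij}(t),f_{ji}(t)\}$ is again an integer multiple of $\eta$. Since we may also take each endowment $\beta_i$ to be a multiple of $\eta$, the individual slack $Sl_i(t)=\beta_i-\sum_{j\in\CN_i,\,j\neq i} f^*_{ij}(t)$ is a non-negative multiple of $\eta$, and hence so is $Sl(t)=\sum_{i=1}^n Sl_i(t)$.

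Next I would note that $Sl(t)$ is bounded: trivially $0\le Sl_i(t)\le\beta_i$, so $0\le Sl(t)\le B:=\sum_{i=1}^n\beta_i<\infty$. Consequently $Sl(t)$ takes values in the \emph{finite} set $\{0,\eta,2\eta,\dots,B\}$, which has at most $B/\eta+1$ elements. By Lemma~\ref{lem:slack-decr} the sequence $\big(Sl(t)\big)_{t\ge 0}$ is non-increasing; therefore it can strictly decrease at most $B/\eta$ times. Let $t_0$ be the time of the last strict decrease (or $t_0=0$ if there is none). For every $t\ge t_0$ the sequence is non-increasing and never drops again, so $Sl(t)=Sl(t_0)$, which is the claim.

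There is no real obstacle here: the only points requiring (minor) care are that the slack genuinely stays on the $\eta$-lattice — which is where we use that the $\beta_i$'s can be taken as multiples of $\eta$ — and that $Sl(0)$ is finite, which is immediate from $\beta_i<\infty$. It is worth emphasizing what this lemma does \emph{not} give: freezing of $Sl(t)$ does not imply that the proposal profile $\ff(t)$ has converged, since players may keep reshuffling frequencies among their neighbors (in particular optimistic players may move leftover mass around $L_i$) without changing the total slack. Handling this ``inner'' level — showing that once $Sl(t)$ is constant some further quantity must decrease and bring the dynamics to a \NE — is the role of the subsequent part of the argument, and is where the two-level potential-like structure is genuinely needed.
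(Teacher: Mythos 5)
Your proof is correct and follows essentially the same route as the paper: both arguments use the $\eta$-quantization of proposals (and, w.l.o.g., of the $\beta_i$'s) to conclude that any strict decrease of $Sl(t)$ is by at least $\eta$, and then combine the monotonicity from Lemma~\ref{lem:slack-decr} with the bounds $0 \le Sl(t) \le \sum_i \beta_i$ to deduce that at most $\sum_i \beta_i/\eta$ strict decreases can occur, after which the total slack is constant. The only difference is presentational — you phrase it as a non-increasing sequence on a finite lattice, while the paper argues by contradiction — and your closing remark that slack stabilization alone does not yet give convergence correctly anticipates Lemma~\ref{lem:winslack-decr}.
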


The basic idea behind the proof of the above lemma is that whenever the total slack of the game decreases, it decreases by a
constant amount, and it is also lower bounded by zero by definition. Thus, it can only decrease a finite
number of times.

Finally, in the following lemma, we argue that the stabilization of the total slack of the game is sufficient to prove
that the best-response dynamics converges to a \NE within a finite number of rounds.

\begin{lemma}\label{lem:winslack-decr}
If there exists some time $t_0$ such that $Sl(t) = Sl(t_0)$ for all times $t \geq t_0$, then the best-response dynamics
converge to a \NE within a finite number of rounds after $t_0$.
\end{lemma}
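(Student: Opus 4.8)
The plan is to show that, once the total slack has stabilized at time $t_0$, the dynamics can only make finitely many more moves before reaching a \NE. By Lemma~\ref{lem:wineq}, a profile is a \NE as soon as $|W_i(\ff)| = 0$ for every player $i$, so it suffices to argue that after finitely many rounds past $t_0$ no player has a nonempty win set, or more directly, that no player can keep changing her proposals indefinitely. The key observation I would exploit is that after $t_0$ the agreed-upon interactions $f^*_{ij}$ can no longer change: if some move strictly decreased an $f^*_{ij}$ it would create slack on that edge for the loser, and by the argument behind Lemma~\ref{lem:slack-decr} this would strictly decrease $Sl(t)$, contradicting stabilization; and an $f^*_{ij}$ cannot strictly increase either, because an increase on one edge would have to be compensated by decreasing a proposal (hence possibly an $f^*$) on another edge of the same player, and a careful accounting — a player only raises a proposal on an edge in $L_i$ when she has free slack — shows the total slack is never replenished. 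So I would first prove the claim: \emph{for all $t \ge t_0$ and all edges $(i,j)$, $f^*_{ij}(t) = f^*_{ij}(t_0)$.}

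Given that $f^*$ is frozen after $t_0$, every player's utility $u_i(\ff(t))$ is constant for $t \ge t_0$. Now consider any player $i$ who plays her best response at some round $t \ge t_0$. Since her utility cannot increase (it is already at its frozen value and $f^*$ cannot move), her best response leaves all of her $f^*_{ij}$ unchanged; the only thing that can change is how she distributes proposals that exceed what her neighbors offer — i.e., proposals to members of $L_i$ — using her (now constant) slack $Sl_i$. A \emph{pessimistic} player, when chosen, will set $f_{ij} = f_{ji}$ on every edge and dump all slack into $f_{ii}$; once she does this she is playing a best response and will never be chosen again. An \emph{optimistic} player may redistribute her slack among the edges in $L_i$; the subtle point is to argue this does not go on forever. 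I would handle this by noting that an optimistic player's best response is only ``not yet achieved'' while some neighbor is still adjusting, but since all $f^*$ are frozen, each player's win/lose partition $W_i, L_i$ eventually stabilizes too — $j \in W_i$ iff $f_{ij} < f_{ji}$, and on frozen edges $f^*_{ij} = \min\{f_{ij},f_{ji}\}$ being constant pins down which side is strictly larger unless there is continuing oscillation in the strict inequality. To rule that out, I would invoke the $\eta$-discretization: proposals live on the lattice $\eta\mathbb{N}$, each player's proposal vector lies in the bounded simplex $\{\ff_i \ge 0, \sum_j f_{ij} \le \beta_i\}$, so there are only finitely many lattice points available to each player, and a best-responding optimistic player will, once her situation is static, settle on one of them (by the tie-breaking convention we may fix that her best response is a deterministic function of the static data). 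Hence after finitely many rounds every player who gets a turn is already at a best response, which by definition of sequential BR dynamics means we have reached a \NE.

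The main obstacle I anticipate is the middle step: rigorously proving that $f^*$ truly freezes at $t_0$, and in particular ruling out a ``rotation'' of slack in which no player's individual slack changes but proposals keep shifting forever among $L_i$-edges without the profile ever being a \NE. The clean way to close this is to combine two facts: (i) the slack-conservation bookkeeping from Lemma~\ref{lem:slack-decr} forces $Sl_i(t)$ itself to be eventually constant for each $i$ (not just the sum), since a player who increases her own slack would have to be matched by another decreasing hers, and an infinite such exchange, on the bounded $\eta$-lattice, must cycle — but BR dynamics strictly improves (weakly, with the slack potential strictly decreasing on any slack-reducing move), so genuine cycling is impossible; and (ii) with $Sl_i$ and all $f^*_{ij}$ frozen, the only remaining degrees of freedom are finite, so the process terminates. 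Writing (i) carefully — essentially that the sequence of configurations, living in a finite set once $Sl$ is constant, cannot revisit a state and cannot move without a strict potential decrease — is where the real work lies; everything else is the accounting already set up by Lemmas~\ref{lem:wineq}, \ref{lem:slack-decr}, and \ref{lem:slack-stabilize}.
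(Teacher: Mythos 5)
Your plan hinges on the claim that the agreed frequencies freeze at $t_0$ (``for all $t \ge t_0$, $f^*_{ij}(t)=f^*_{ij}(t_0)$''), and that claim is false. After $t_0$, a player $k$ with $Sl_k=0$ and $W_k\neq\emptyset$ can, as her best response, lower her proposal to a matched neighbour $j$ (with $f_{kj}=f_{jk}$) by some $\delta$ and raise her proposal to a neighbour $m\in W_k$ by the same $\delta$ (this is exactly the ``reshuffling'' of Claim~\ref{clm:reshuffle}, and it is utility-improving whenever the marginal utility on edge $(k,m)$ exceeds that on edge $(k,j)$). Then $f^*_{kj}$ drops by $\delta$, $f^*_{km}$ rises by $\delta$, $Sl_j$ increases by $\delta$, $Sl_m$ decreases by $\delta$, and the total slack is unchanged, so nothing contradicts the stabilization of $Sl$. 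In particular your supporting argument --- that a decrease of some $f^*_{ij}$ ``creates slack for the loser'' and hence, by the reasoning of Lemma~\ref{lem:slack-decr}, would strictly decrease $Sl$ --- does not hold: extra slack for a player only forces a later strict drop of the total slack if that player also has a nonempty win set (Claim~\ref{clm:personal}); if her win set is empty she simply remains best-responding with more slack. Consequently utilities are not constant after $t_0$, the win/lose partition need not be static, and the ``finitely many lattice points, fix a tie-breaking rule'' step has nothing left to bound.

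Your fallback step (i) is where the real content would have to be, and it is not supplied: ``the configurations live in a finite set and BR strictly improves, hence no cycling'' is not a valid inference, because only the \emph{mover's} utility improves --- other players' utilities (e.g.\ the neighbour $j$ above) can drop, and mover-improvement alone does not preclude best-response cycles in a game that is not a potential game; that is precisely the difficulty this lemma must overcome. The paper instead identifies a genuine monotone structure after $t_0$: partition the players into $V_1$ (empty win set) and $V_2$ (nonempty win set, hence forced to have zero slack, else Claim~\ref{clm:personal} applies); players in $V_1$ are already best-responding and $V_1$ only grows (Claims~\ref{clm:v1stasis} and~\ref{clm:v1non-decr}); no move after $t_0$ can decrease the utility of a $V_2$ player, since such a decrease would hand her slack at least $\eta$ while keeping her win set nonempty, which by Claim~\ref{clm:personal} would force a later strict decrease of the total slack, contradicting the hypothesis (Claim~\ref{cl:util-nodec}); and the $\eta$-discretization makes every strict improvement at least a fixed $\Delta_i>0$ while utilities are bounded by the unconstrained optimum $OPT_i$ of \eqref{eq:localconv}, so each player moves at most $OPT_i/\Delta_i$ more times and the dynamics reaches a \NE in finitely many rounds. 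Your proposal is missing this monotonicity argument, and the freezing claim you substitute for it is incorrect.
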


The basic intuition behind the proof of the above lemma is that the total slack does not decrease in a round only when
the chosen player redistributes her frequency, and this redistribution cannot cycle forever. Theorem \ref{theor:gen-conv}
now follows from Lemmas \ref{lem:slack-decr}, \ref{lem:slack-stabilize} and \ref{lem:winslack-decr}.

\begin{theorem}\label{theor:gen-conv}
Consider a game $\game$ where utility function $u_{ij}$'s are arbitrary non-negative, increasing, and concave. The sequential best-response dynamics of $\game$ converge to a \NE. 
\end{theorem}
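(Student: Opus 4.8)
The plan is to prove Theorem~\ref{theor:gen-conv} by chaining the three structural lemmas, so the substance lies in their proofs and in the ``two-level'' potential they realise: the \emph{outer} potential is the total slack $Sl(t)$, and an \emph{inner} potential takes over only once $Sl$ has stabilised.

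\emph{Outer level.} For Lemma~\ref{lem:slack-decr} I would write $Sl(t)=\sum_{i}\beta_i-2\sum_{(i,j)\in E}f^*_{ij}(t)$, so it suffices to show the total realised interaction never drops. Only edges incident to the moving player $i$ change, and I claim $i$'s best response cannot decrease $\sum_{j\in\CN_i}f^*_{ij}$: if it did, then under the new profile $\sum_{j}f^*_{ij}<\beta_i$ (leftover budget, possibly spread over-cap) and $\sum_j f^*_{ij}<\sum_j f_{ji}$, so some edge has $f_{ij}<f_{ji}$ (uncapped); redirecting one $\eta$-unit of budget onto that edge strictly raises $\sum_j w_{ij}u_{ij}(f^*_{ij})$ since $u_{ij}$ is strictly increasing, contradicting optimality. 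This is insensitive to optimistic versus pessimistic behaviour, since spreading leftover budget on $L_i$-edges changes no $f^*$. For Lemma~\ref{lem:slack-stabilize}, the $\eta$-discretisation forces every $f^*_{ij}$, hence $Sl(t)$, into $\sum_i\beta_i-2\eta\,\mathbb{Z}_{\geq 0}$; a strict decrease drops $Sl$ by at least $2\eta$, and $0\le Sl(t)\le\sum_i\beta_i$, so there are at most $\sum_i\beta_i/(2\eta)$ strict decreases. Combined with monotonicity this yields a time $t_0$ beyond which $Sl$ is constant.

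\emph{Inner level.} For Lemma~\ref{lem:winslack-decr} I would first observe that for $t\ge t_0$ every move is \emph{slack-preserving}: the chosen player $i$ is by definition not at a best response, so she strictly increases $u_i$, and since $\sum_{j}f^*_{ij}$ cannot decrease (Lemma~\ref{lem:slack-decr}) nor increase (that would lower $Sl$), it stays fixed --- $i$ merely redistributes realised interaction, raising $f^*_{ij}$ on some edges of $W_i$ and lowering it on others. Because all proposals are bounded multiples of $\eta$, the profile lives in a finite set, so it suffices to rule out cycles; for that I would use as inner potential the total unused time $\sum_{i\in S}Sl_i(t)$ of a suitably chosen class $S$ (e.g.\ the players currently losing everywhere, $W_i=\emptyset$). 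In a slack-preserving move, slack leaves exactly the neighbours whose edges $i$ boosts (their wasted proposals shrink) and enters those whose edges $i$ cuts; I would pick $S$ so that this flow, together with $i$'s strict utility gain, makes $\sum_{i\in S}Sl_i$ strictly decrease, breaking ties with a secondary lexicographic quantity if needed, and use Lemma~\ref{lem:wineq} to identify the reachable terminal states. A bounded, discrete, strictly decreasing quantity cannot cycle, so the dynamics halts after finitely many rounds at a profile where no player can improve, i.e.\ a \NE. Theorem~\ref{theor:gen-conv} then follows by composing Lemmas~\ref{lem:slack-decr}, \ref{lem:slack-stabilize} and \ref{lem:winslack-decr}.

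\emph{Main obstacle.} The delicate step is the inner potential. Once $Sl$ is frozen, neither the individual utilities nor the social welfare $\sum_i u_i$ is monotone --- a redistribution by $i$ strictly helps $i$ and the neighbours she boosts but hurts those she cuts --- and the win/lose structure is not monotone either, since cutting a matched edge turns it into a win edge. Extracting a genuinely monotone quantity from ``which players still hold unused time,'' and verifying it strictly decreases on every slack-preserving best response for an arbitrary mix of optimistic and pessimistic players, is the real work; everything else is bookkeeping built on Lemmas~\ref{lem:wineq}--\ref{lem:winslack-decr}.
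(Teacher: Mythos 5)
Your outer level is sound and in fact somewhat cleaner than the paper's: the paper proves Lemma~\ref{lem:slack-decr} by a three-way case analysis (Claims~\ref{clm:empty-win}, \ref{clm:personal} and \ref{clm:reshuffle}, splitting on whether $|W_i(t)|=0$, $Sl_i(t)\ge\eta$, or $Sl_i(t)=0$ with reshuffling), whereas your single observation --- a best response can never decrease $\sum_{j\in\CN_i}f^*_{ij}$, else an $\eta$-unit of unspent or over-cap budget could be redirected to an edge with $f_{ij}<f_{ji}$, contradicting optimality --- subsumes all three cases; your stabilization step via the $\eta$-grid matches Lemma~\ref{lem:slack-stabilize}.

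The gap is exactly where you locate it: Lemma~\ref{lem:winslack-decr} is not proved, and the candidate inner potential you float provably does not move. Take $S=\{i:W_i=\emptyset\}$ and a slack-preserving move at $t\ge t_0$ by a player $i$ with $W_i\neq\emptyset$, $Sl_i=0$. Any boosted neighbour $j\in W_i$ has $f^*_{ij}=f_{ij}<f_{ji}$, hence $\sum_k f^*_{jk}<\sum_k f_{jk}\le\beta_j$, so $Sl_j>0$; after $t_0$ a player with positive slack and nonempty win set is not at best response and would eventually force $Sl$ to drop strictly (your own outer argument, or Claim~\ref{clm:personal}), so $W_j=\emptyset$ and $j\in S$, and $j$ loses slack. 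Any cut neighbour $k$ must, for the same reason, also have $W_k=\emptyset$ (this is the content of Claim~\ref{cl:util-nodec}); moreover the cut leaves $i$ in $k$'s lose set, so $k$ stays in $S$ and gains exactly the slack the boosted neighbours lost. Hence $\sum_{i\in S}Sl_i$ is \emph{conserved}, not decreased, and since the primary quantity never moves, the entire burden falls on the unspecified lexicographic tie-break --- i.e.\ the argument is missing. The paper's inner level is not slack-based at all: after $t_0$ it partitions players into $V_1$ ($W_i=\emptyset$; permanently at best response and monotonically growing, Claims~\ref{clm:v1stasis} and \ref{clm:v1non-decr}) and $V_2$ ($W_i\neq\emptyset$, $Sl_i=0$), and uses the \emph{utilities} of $V_2$ players as the inner potential: they never decrease (Claim~\ref{cl:util-nodec}, via the contradiction with slack stabilization sketched above), every active move strictly increases the mover's utility by a fixed discrete increment $\Delta_i$ coming from the $\eta$-grid (Claim~\ref{cl:util-inc}), and each utility is bounded by the player's unconstrained optimum $OPT_i$, so only finitely many moves remain and Lemma~\ref{lem:wineq} identifies the limit as a \NE. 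Replacing your ``unused time of a suitable class'' by this utility-based monotone quantity is what is needed to close the proof.
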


\subsection{No Convergence in Simultaneous Play}
\label{sec:conv:simul}

In the previous sections, we showed that when players change their strategies one at a time, their frequency proposals
always converge under the best-response dynamics. Interestingly, the simultaneous setting is inherently different. In
other words, if at each round all the players play their best-response strategy to the strategies of their neighbors
at the previous round simultaneously, the best-response dynamics need not converge to an equilibrium. The initial
starting point of the dynamics, i.e. the starting strategy is arbitrary. To illustrate this point, we present a simple
counterexample that exhibits cyclic behavior of the dynamics. This cyclic behavior is well known, even in potential
games. One such famous example is the simple game {\em battle of the sexes} \cite{tim-book}.
\begin{example}\label{exam:nonconv}
Consider the following game $\game$ with $5$ optimistic players, $\players = \{ 1, 2, 3, 4, 5\}$, where the resource
constraint $\beta_i = 1$ is uniform for all players $i \in \players$, the utility function of each player $i$ for a
neighbor $j$ is $u_{ij}(f^*_{ij}) = f^*_{ij} (1 - f^*_{ij})$, for $0 \leq f^*_{ij} \leq \frac{1}{2}$, and
$\mathcal{G}$ is $K_5$, i.e. the complete graph with $5$ nodes. The weights between the players are represented in the
following matrix, where element $(i, j)$ is equal to $w_{ij}$

\[
\begin{bmatrix}
0 & \frac{1}{4}+\eps  & \frac{1}{4}+\eps  & \frac{1}{4}-\eps  & \frac{1}{4}-\eps  \\
\frac{1}{4}-\eps  & 0 & \frac{1}{4}+\eps  & \frac{1}{4}+\eps  & \frac{1}{4}-\eps  \\
\frac{1}{4}-\eps  & \frac{1}{4}-\eps  & 0 & \frac{1}{4}+\eps  & \frac{1}{4}+\eps  \\
\frac{1}{4}+\eps  & \frac{1}{4}-\eps  & \frac{1}{4}-\eps  & 0 & \frac{1}{4}+\eps  \\
\frac{1}{4}+\eps  & \frac{1}{4}+\eps  & \frac{1}{4}-\eps  & \frac{1}{4}-\eps  & 0
\end{bmatrix}
\]

for some $\eps > 0$. We want to force a cyclic behavior of the best-response dynamics for all players, where the
players mismatch the frequency proposals they make to each other. Specifically, we make each player propose a
slightly higher frequency to two players and a slightly lower frequency to the other two, and the weights guarantee
that, at each round, each player proposes a higher frequency to the players that propose a lower frequency to her.
Therefore, the players never match the proposals they make to each other.

Suppose that at time $t = 0$ the players calculate their best-response. We can easily see that the solution to
\eqref{eq:BR} for this game is $f_{ij} = w_{ij}$. Thus, at time $t = 0$, each player makes the following proposals to
each other, represented in the following matrix, where element $(i, j)$ is equal to $f_{ij}(0)$

\[
F(0) = \begin{bmatrix}
0 & \frac{1}{4}+\eps  & \frac{1}{4}+\eps  & \frac{1}{4}-\eps  & \frac{1}{4}-\eps  \\
\frac{1}{4}-\eps  & 0 & \frac{1}{4}+\eps  & \frac{1}{4}+\eps  & \frac{1}{4}-\eps  \\
\frac{1}{4}-\eps  & \frac{1}{4}-\eps  & 0 & \frac{1}{4}+\eps  & \frac{1}{4}+\eps  \\
\frac{1}{4}+\eps  & \frac{1}{4}-\eps  & \frac{1}{4}-\eps  & 0 & \frac{1}{4}+\eps  \\
\frac{1}{4}+\eps  & \frac{1}{4}+\eps  & \frac{1}{4}-\eps  & \frac{1}{4}-\eps  & 0
\end{bmatrix}
\]

Notice that the agreement frequencies at $t = 0$ are going to be

\[
F^*(0) = \begin{bmatrix}
0 & \frac{1}{4}-\eps  & \frac{1}{4}-\eps  & \frac{1}{4}-\eps  & \frac{1}{4}-\eps  \\
\frac{1}{4}-\eps  & 0 & \frac{1}{4}-\eps  & \frac{1}{4}-\eps  & \frac{1}{4}-\eps  \\
\frac{1}{4}-\eps  & \frac{1}{4}-\eps  & 0 & \frac{1}{4}-\eps  & \frac{1}{4}-\eps  \\
\frac{1}{4}-\eps  & \frac{1}{4}-\eps  & \frac{1}{4}-\eps  & 0 & \frac{1}{4}-\eps  \\
\frac{1}{4}-\eps  & \frac{1}{4}-\eps  & \frac{1}{4}-\eps  & \frac{1}{4}-\eps  & 0
\end{bmatrix}
\]

Now, all players compute their best-response strategies simultaneously. We observe that for player
$i \in \{1, 2, 3, 4, 5 \}$, their outcome sets are $L_i(0) = \{ i \: mod \: 5 + 1, (i+1) \: mod \: 5 + 1 \}$
and $W_i(0) = \{ (i+2) \: mod \: 5 + 1, (i+3) \: mod \: 5 + 1 \}$. Furthermore, each player has exactly
$4 \eps $ slack frequency. Recall that each player $i$ is optimistic and the weights for both players in
$W_i(0)$ are equal. Since $i$ has slack, their best-response, as calculated by \eqref{eq:BR}, is to allocate
$2 \eps$ of their slack to each player in $W_i(0)$.

Therefore, at time $t = 1$, the frequency proposals of every player are

\[
F(1) = \begin{bmatrix}
0 & \frac{1}{4}-\eps  & \frac{1}{4}-\eps  & \frac{1}{4}+\eps  & \frac{1}{4}+\eps  \\
\frac{1}{4}+\eps  & 0 & \frac{1}{4}-\eps  & \frac{1}{4}-\eps  & \frac{1}{4}+\eps  \\
\frac{1}{4}+\eps  & \frac{1}{4}+\eps  & 0 & \frac{1}{4}-\eps  & \frac{1}{4}-\eps  \\
\frac{1}{4}-\eps  & \frac{1}{4}+\eps  & \frac{1}{4}+\eps  & 0 & \frac{1}{4}-\eps  \\
\frac{1}{4}-\eps  & \frac{1}{4}-\eps  & \frac{1}{4}+\eps  & \frac{1}{4}+\eps  & 0
\end{bmatrix}
\]

which again yields $F^*(1) = F^*(0)$. However, note that for all players, $L_i(1) = W_i(0)$ and $W_i(1) = L_i(0)$. We apply
the previous argument now for $t = 2$ and get $F(2) = F(0)$, which shows the existence of a cycle of proposals between all
players, implying our model need not converge in the simultaneous setting.
\end{example}


\section{Properties of the Game}
\label{sec:proper}

In this section we provide several properties of our game's best-response dynamics and equilibria. First,
we show that the global optimum of our game is also a \NE, which implies that the Price of Stability (PoS) is $1$.
We also provide an example with unbounded Price of Anarchy (PoA), which demonstrates that the social welfare of the
\NEa can vary significantly. Next, we show that the set of \NEa for our game is connected and, more importantly,
the set of pessimistic \NEa is convex. Finally, we fully characterize each player's best-response through the
well-known Karush-Kuhn-Tucker (KKT) conditions for local optimality \cite{boyd2004convex}, which provides better
intuition as to how each player calculates her best-response. All the missing proofs of this section are presented
in Appendix \ref{app:proper}.

\subsection{Price of Anarchy and Stability}
\label{sec:proper:pos-poa}

\par In this section, we focus on the quality of \NEa. We show that the optimal frequency distribution for all players is
also a \NE. On the other hand, we also show that there exist arbitrarily low quality \NEa, even for the simpler case of
uniform resource constraints among all players. We quantify these observations through the well-known concepts of the
{\em Price of Anarchy (PoA)} \cite{poa} and {\em Price of Stability (PoS)} \cite{pos} respectively. These results show
that there is a significant difference between the Price of Anarchy and the Price of Stability in our game.

We first provide a definition of the Price of Anarchy and the Price of Stability for our game.

\begin{definition}[Price of Anarchy and Price of Stability]\label{def:poa-pos}
Consider a game $\game$, where $OPT$ is the frequency profile that maximizes the social welfare, and
$\mathcal{C^{\text{\textit{eq}}}}$ is the set of all frequency profiles that are \NEa. Then, the
{\em Price of Anarchy (PoA)} is defined as

\[
PoA = \sup_{\ff \in \mathcal{C^{\text{\textit{eq}}}}} {\frac{SW(OPT)}{SW(\ff)}}
\]

while the {\em Price of Stability (PoA)} is defined as

\[
PoS = \inf_{\ff \in \mathcal{C^{\text{\textit{eq}}}}} {\frac{SW(OPT)}{SW(\ff)}}
\]
\end{definition}

We first want to analyze how good a \NE can be. Theorem \ref{theor:pos} shows that
the optimal solution profile of our game is also a \NE. The optimal solution profile of our game
can be seen as the solution to the following (global) convex program

\begin{equation}\label{eq:gcp}
\begin{array}{lll}
max  & \qquad \displaystyle{\sum_{i = 1}^n {\sum_{\mathclap{\substack{j \in \CN_i \\ j \neq i}}} {w_{ij} u_{ij}(f^*_{ij}) } } } & \\
s.t.    & \qquad f^*_{ij} \leq f_{ij} \quad \& \quad f^*_{ij} \le f_{ji} \qquad & \forall i, \forall j \in \CN_i \\
      	& \qquad \sum_{j \in \CN_i} {f_{ij}} \leq \beta_i & \forall i \\
      	& \qquad f^*_{ij} \geq 0 & \forall i, \forall j \in \CN_i \\
      	& \qquad f_{ij} \geq 0 & \forall i, \forall j \in \CN_i \\
\end{array}
\end{equation}

Note that, unlike \eqref{eq:BR} where $f_{ji}$ was a constant for $i$'s best-response, it is now a variable in this
program. It is clear that the solution to the above program is the optimal solution profile of our game, i.e. the
frequency profile $\tff$ that maximizes the social welfare. Recall that Lemma \ref{lem:wineq} describes a sufficient
condition for a frequency profile to be a \NE. We now look into how we can transform any frequency profile into a
pessimistic \NE, with equal social welfare, through a simple process of making each player match the proposals of
her neighbors.

\begin{lemma}\label{lem:match}
Consider a frequency profile $\ff$. Then, we can construct a frequency profile $\ff'$ such that $SW(\ff') = SW(\ff)$ and
$\ff'$ is a pessimistic \NE.
\end{lemma}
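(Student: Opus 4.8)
The plan is to build $\ff'$ explicitly, in a single step, by having every player truncate each of her proposals down to the \emph{agreed-upon} frequency of the original profile. Concretely, I would set $f'_{ij} := f^*_{ij} = \min\{f_{ij}, f_{ji}\}$ for every ordered pair $(i,j)$ with $j \in \CN_i$, $j \neq i$, and let any leftover budget of player $i$ become her self-loop slack $f'_{ii} = Sl_i(\ff')$ (as in the definition of $Sl_i$). Before anything else, two things about this object need checking. First, it is a legal strategy profile: since $f'_{ij} = f^*_{ij} \le f_{ij}$, we get $\sum_{j \in \CN_i} f'_{ij} \le \sum_{j \in \CN_i} f_{ij} \le \beta_i$, so $\ff'_i \in \CF_i$ for every $i$. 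Second, the truncation is unambiguous as a simultaneous move, because $f^*_{ij} = f^*_{ji}$, so the two endpoints of an edge both move their proposal to the \emph{same} value $f^*_{ij}$ read off the fixed input profile $\ff$.

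The next step is to show $SW(\ff') = SW(\ff)$. On each edge $(i,j)$ we now have $f'_{ij} = f'_{ji} = f^*_{ij}$, hence the new agreed-upon frequency is $(\ff')^*_{ij} = \min\{f'_{ij}, f'_{ji}\} = f^*_{ij}$, i.e., identical to the original one on every edge. By \eqref{eq:util} and Definition~\ref{def:sw}, each $u_i$ and therefore $SW$ depend on the profile only through the agreed-upon frequencies, so the equality of social welfare is immediate.

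The last step is to show $\ff'$ is a pessimistic \NE. By construction $f'_{ij} = f'_{ji}$ for every edge $(i,j) \in E$, so $W_i(\ff') = \{ j \in \CN_i : j \neq i,\ f'_{ij} < f'_{ji} \} = \emptyset$ for every player $i$; Lemma~\ref{lem:wineq} then immediately gives that $\ff'$ is a \NE. Since in addition the proposals match on every edge, $\ff'$ satisfies the defining condition of a pessimistic equilibrium (Definition~\ref{def:pess-opt}), which completes the argument.

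I do not expect a genuine obstacle here: once the construction is written down, both payoff-invariance and the equilibrium property reduce to the single observation that only the agreed-upon frequencies $f^*_{ij}$ matter, and these are preserved by truncation. The only points that warrant a sentence of care are the feasibility of $\ff'$ (controlled by $f^*_{ij} \le f_{ij}$ together with the original budget constraint) and the well-definedness of the simultaneous truncation (controlled by $f^*_{ij} = f^*_{ji}$); both were dispatched above.
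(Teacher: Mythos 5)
Your proof is correct and follows essentially the same route as the paper: setting $f'_{ij} = f^*_{ij}$ for every pair is exactly the paper's construction of lowering each proposal to match the neighbor's (the two coincide since $f'_{ij}=f_{ij}$ whenever $f_{ij}\le f_{ji}$), and both arguments then invoke Lemma~\ref{lem:wineq} together with the observation that utilities depend only on the agreed-upon frequencies, which are preserved. Your extra remarks on feasibility and well-definedness of the simultaneous truncation are fine but add nothing beyond the paper's argument.
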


Lemma \ref{lem:match} is enough to guarantee that the optimal solution of our game is also a \NE.

\begin{theorem}\label{theor:pos}
Let $\game$ be a game. Then, there exists an optimal strategy profile that maximizes the social welfare of
$\game$ and is also a \NE.
\end{theorem}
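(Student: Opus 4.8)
The plan is to combine the global convex program \eqref{eq:gcp} with Lemma \ref{lem:match}. First I would invoke standard results on convex optimization to establish that \eqref{eq:gcp} attains its maximum: the feasible region is a closed, bounded (hence compact) polytope — it is cut out by finitely many linear inequalities, and the constraints $\sum_{j} f_{ij} \le \beta_i$ together with $f_{ij} \ge 0$ bound each coordinate, so it is compact and nonempty (the all-zero profile is feasible). The objective $\sum_i \sum_{j \neq i} w_{ij} u_{ij}(f^*_{ij})$ is continuous in $\ff$ (each $u_{ij}$ is continuous, and $f^*_{ij} = \min\{f_{ij}, f_{ji}\}$ is continuous in $\ff$), so by the extreme value theorem an optimal profile exists. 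Call it $OPT$; its value is $SW(OPT)$, the maximum social welfare. (Concavity of the $u_{ij}$'s is not even needed for mere existence, though it is what makes \eqref{eq:gcp} a genuine convex program.)

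Next I would argue that we may take $OPT$ to be a pessimistic profile. Apply Lemma \ref{lem:match} to $OPT$: this produces a frequency profile $\ff'$ with $SW(\ff') = SW(OPT)$ and such that $\ff'$ is a pessimistic \NE. In particular $\ff'$ is a Nash equilibrium, and since $SW(\ff') = SW(OPT)$ is the maximum possible social welfare, $\ff'$ is itself an optimal strategy profile. Thus $\ff'$ witnesses the statement: it maximizes the social welfare of $\game$ and is a \NE. This completes the proof.

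The only subtlety worth flagging is making sure the social welfare is genuinely invariant under the matching construction of Lemma \ref{lem:match} — but that is exactly the content of that lemma (the agreed-upon frequencies $f^*_{ij} = \min\{f_{ij}, f_{ji}\}$ are unchanged when each player lowers her over-proposals down to what her neighbor offers, so every term $w_{ij} u_{ij}(f^*_{ij})$ is preserved, and hence $SW$ is preserved), so we may cite it directly. There is no real obstacle left once Lemma \ref{lem:match} is in hand; the main step is simply recognizing that "optimal profile" plus "same-welfare pessimistic \NE" immediately yields "optimal profile that is a \NE."

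**Remark (Price of Stability).** As an immediate consequence, since the set $\mathcal{C}^{\text{\textit{eq}}}$ of \NEa contains a profile $\ff'$ with $SW(\ff') = SW(OPT)$, we get $PoS = \inf_{\ff \in \mathcal{C}^{\text{\textit{eq}}}} SW(OPT)/SW(\ff) \le SW(OPT)/SW(\ff') = 1$, and since $SW(OPT) \ge SW(\ff)$ for every feasible $\ff$ the ratio is always at least $1$; hence $PoS = 1$.
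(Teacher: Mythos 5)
Your proposal is correct and follows essentially the same route as the paper: take an optimal solution of the global convex program \eqref{eq:gcp} and apply Lemma \ref{lem:match} to convert it into a pessimistic \NE with identical social welfare. Your explicit compactness/continuity argument for the existence of the optimum is a small addition the paper leaves implicit, but otherwise the arguments coincide.
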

\begin{proof}
Let $\tff$ be the solution of \eqref{eq:gcp}. From Lemma \ref{lem:match}, we can construct a new frequency
profile $\tff'$ from $\tff$ such that $SW(\tff') = SW(\tff)$ and $\tff'$ is a pessimistic \NE. Since $\tff'$
has the same social welfare as $\tff$, we understand that $\tff'$ is also a solution to the global convex program
\eqref{eq:gcp} that maximizes the social welfare. Thus, $\tff'$ both maximizes the social welfare of $\game$ and
is also a \NE.
\end{proof}

Theorem \ref{theor:pos} leads us to the following corollary.

\begin{corollary}\label{cor:pos}
The Price of Stability of a game $\game$ is $1$.
\end{corollary}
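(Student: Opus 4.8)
The plan is to obtain the corollary as an immediate consequence of Theorem~\ref{theor:pos} together with the trivial observation that no feasible profile can outperform $OPT$. First I would unpack Definition~\ref{def:poa-pos}: the Price of Stability is $PoS = \inf_{\ff \in \mathcal{C}^{eq}} \frac{SW(OPT)}{SW(\ff)}$, so to prove $PoS = 1$ it suffices to show (i) every term in this infimum is at least $1$, and (ii) some \NE attains the value $1$.

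For (i), recall that $OPT$ is, by definition, a maximizer of $SW$ over the entire joint strategy space $\CF_1 \times \cdots \times \CF_n$, and that every \NE $\ff$ is in particular a feasible profile. Hence $SW(\ff) \le SW(OPT)$, so $\frac{SW(OPT)}{SW(\ff)} \ge 1$ for every $\ff \in \mathcal{C}^{eq}$, which gives $PoS \ge 1$. (Here one should also note that $\mathcal{C}^{eq} \neq \emptyset$ — this is guaranteed either by Theorem~\ref{theor:pos} itself or by the convergence results of Section~\ref{sec:conv} — so the infimum is taken over a nonempty set and is well defined.)

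For (ii), invoke Theorem~\ref{theor:pos}: there exists a profile $\tff'$ that simultaneously maximizes the social welfare, i.e.\ $SW(\tff') = SW(OPT)$, and is a \NE, i.e.\ $\tff' \in \mathcal{C}^{eq}$. Substituting $\tff'$ into the infimum yields $\frac{SW(OPT)}{SW(\tff')} = 1$, hence $PoS \le 1$. Combining with (i) gives $PoS = 1$.

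I do not expect any real obstacle in this corollary: all of the actual work has been delegated to Lemma~\ref{lem:match} (converting an arbitrary profile, in particular $\tff = $ the solution of \eqref{eq:gcp}, into a welfare-equivalent pessimistic \NE by having each player match her neighbors' proposals) and hence to Theorem~\ref{theor:pos}. The only minor point worth a sentence is the degenerate case $SW(OPT) = 0$, in which the ratio is undefined; there every feasible profile has zero welfare, and under the usual convention $PoS = 1$, so the statement still holds.
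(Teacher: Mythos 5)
Your proof is correct and takes essentially the same route as the paper: the corollary is an immediate consequence of Theorem~\ref{theor:pos}, which exhibits a Nash equilibrium attaining $SW(OPT)$, combined with the trivial bound that no equilibrium can exceed the welfare-maximizing profile. Your added remarks on nonemptiness of the equilibrium set and the degenerate case $SW(OPT)=0$ are harmless extra care but do not change the argument.
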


Next, we look at how bad can equilibria be for our game. Unfortunately, the next theorem demonstrates that there
exist \NEa with arbitrarily bad social welfare, even for the simple case where all players have the same resource
constraint. 

\begin{theorem}\label{theor:poa}
The Price of Anarchy of a game $\game$ is unbounded.
\end{theorem}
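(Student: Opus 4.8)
The plan is to exhibit, for every large parameter $N$, a family of games $\game_N$ together with a Nash equilibrium $\ff_N$ such that $SW(OPT)/SW(\ff_N) \to \infty$ as $N \to \infty$. The cleanest construction is a star (or a disjoint union of edges) in which a single ``central'' interaction could carry almost all of the attainable welfare, but the players at the equilibrium spread their endowment so thinly that this interaction is realized at a vanishing frequency. Concretely, I would take a path or star on $N+1$ vertices: a hub $h$ with $\beta_h = 1$ connected to leaves $v_1,\dots,v_N$, each with $\beta_{v_k} = 1$. The leaves all put essentially all their weight/endowment on the hub, while the hub is forced to split its unit endowment over all $N$ leaves, so on each edge $f^*_{h v_k} \le 1/N$. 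Choosing the utility functions $u_{ij}$ to be increasing and concave — e.g. $u_{ij}(x)=\min\{x,c\}$ truncated to stay concave, or $u_{ij}(x)=1-(1-x)$ capped — so that the marginal value near $0$ is bounded but the total attainable value on a single edge is $\Theta(1)$, makes the welfare at this configuration $O(N \cdot \tfrac1N) = O(1)$ spread over the wrong edges, while $OPT$ (which, by Lemma~\ref{lem:match} and Theorem~\ref{theor:pos}, is attained at a pessimistic \NE) can concentrate endowment to make a single edge contribute $\Theta(1)$ and, across a better-balanced allocation, reach welfare $\Theta(N)$ or at least a constant bounded away from the equilibrium value by an unbounded ratio. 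Even simpler: a two-player gadget $\{a,b\}$ with $\beta_a=\beta_b=1$, utility $u(x) = \sqrt{x}$, where the unique ``bad'' outcome forced by a third dummy neighbor drains almost all of $a$'s endowment, leaving $f^*_{ab}=\eps$; $SW$ at this profile is $\Theta(\sqrt\eps)$ whereas $OPT$ gives $\Theta(1)$, and letting $\eps\to 0$ sends the ratio to infinity.

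The key steps, in order, are: (i) specify the graph $\graph_N$, the weights $w_{ij}$, the endowments $\bm\beta$, and the concave utilities $u_{ij}$ explicitly, with a parameter $N$ (or $\eps$) governing the gap; (ii) write down the candidate equilibrium profile $\ff_N$ and verify it is a genuine \NE — here I would invoke Lemma~\ref{lem:wineq}: arrange the proposals so that $|W_i(\ff_N)| = 0$ for every player $i$ (i.e. every player is a ``loser'' or tied on every edge), which by that lemma immediately certifies \NE without any KKT computation; (iii) compute $SW(\ff_N)$ and show it is $O(1)$ (or $O(\sqrt\eps)$); (iv) lower-bound $SW(OPT)$ by exhibiting one concrete feasible profile with welfare bounded below by a constant (or by $\Omega(N)$) — this need not be the true optimum, any feasible profile suffices since $SW(OPT)$ only dominates it; (v) conclude $PoA \ge SW(OPT)/SW(\ff_N) \to \infty$.

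The main obstacle is step (ii): ensuring that the deliberately wasteful profile $\ff_N$ is actually a Nash equilibrium and not merely a bad feasible point. A player with slack who is proposing low on some edge would want to raise that proposal if her neighbor is proposing higher there, so the construction must guarantee that whenever player $i$ proposes low on edge $(i,j)$, neighbor $j$ also proposes low (or equal) on that edge — i.e. both sides simultaneously ``under-propose,'' each believing the other is the bottleneck. This is exactly the mismatch phenomenon already exploited in Example~\ref{exam:nonconv}, and the same trick (asymmetric weights making each player rationally under-propose on the edges where her partner also under-proposes, so that $W_i = \emptyset$ everywhere) ports over; verifying $|W_i(\ff_N)|=0$ for all $i$ then discharges the equilibrium check cleanly via Lemma~\ref{lem:wineq}. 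The remaining computations (evaluating the concave $u_{ij}$ at $O(1/N)$ versus $\Theta(1)$) are routine.
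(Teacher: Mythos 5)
Your overall strategy is the same as the paper's: build a parametric family of games admitting a deliberately wasteful profile in which every proposal is matched, certify that it is a \NE via Lemma~\ref{lem:wineq} (every $W_i$ empty), lower-bound $SW(OPT)$ by one good feasible profile, and drive the ratio to infinity with the parameter. The paper instantiates this on a $4$-regular grid with weights $\tfrac12-\eps$ on ``vertical'' edges and $\eps$ on ``horizontal'' edges and utilities $u_{ij}(x)=x(\beta-x)$: the bad \NE puts $\tfrac\beta2-\eps$ on every horizontal edge and $\eps$ on every vertical edge (all proposals matched, no slack), the good profile does the reverse, and the ratio blows up as $\eps\to 0$.

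As written, though, your concrete instantiations have a gap at step (iii), the welfare computation at the bad equilibrium. When a player's endowment is ``drained'' onto dummy (or hub/horizontal) edges, those edges still generate utility, so $SW(\ff_N)$ contains a term of order $w_{a,\mathrm{dummy}}\,u(1-\eps)$. Unless the weights on the drain edges are themselves made to vanish with the parameter---which is exactly the paper's device of giving the wasteful edges weight $\eps$ while the valuable edges carry weight $\tfrac12-\eps$---the bad equilibrium's welfare remains $\Theta(1)$ and the ratio stays bounded; your claim that the two-player-plus-dummy gadget has $SW=\Theta(\sqrt\eps)$ is unjustified without also stipulating $w_{a,\mathrm{dummy}}=O(\sqrt{\eps})$. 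The star variant has a further problem: with hub endowment $1$, the realized frequencies on the hub's edges sum to at most $1$, and with concave utilities spreading that unit is essentially optimal, so $SW(OPT)$ cannot reach $\Theta(N)$ from those edges and the claimed gap does not materialize. Both issues are repairable (vanishing weights on the drain edges; per-player endowments on a path or grid so that the optimum scales), and your equilibrium-certification step via mutual under-proposal and Lemma~\ref{lem:wineq} is exactly the paper's mechanism, but the weight asymmetry that renders the drained endowment worthless is the one missing ingredient you must add before the ratio computation goes through.
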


\subsection{Properties of the set of Nash Equilibria}
\label{sec:proper:setNE}

In this section we look at the convexity and connectedness of the set of \NEa for our game. Specifically, we
show that the set of pessimistic \NEa is convex, which implies that the set of all \NEa of our game is connected.
We start with the following lemma.

\begin{lemma}\label{lem:opt-pes}
Every optimistic \NE $\ff$ can be transformed into a pessimistic \NE $\ff'$ with the same social welfare. Furthermore,
every frequency profile that is a convex combination of $\ff$ and $\ff'$ is also an optimistic \NE.
\end{lemma}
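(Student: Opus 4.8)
The plan is to break the statement into its two constituent claims and handle them in sequence. For the first claim, I would start from an optimistic \NE $\ff$ and apply the matching construction already developed in Lemma \ref{lem:match}: define $\ff'$ by setting $f'_{ij} = f'_{ji} = f^*_{ij}(\ff)$ on every edge. By construction the agreed-upon frequencies are unchanged, so $SW(\ff') = SW(\ff)$, and since proposals are matched on every link, $\ff'$ is a pessimistic profile with $|W_i(\ff')| = 0$ for all $i$; by Lemma \ref{lem:wineq} it is therefore a \NE, hence a pessimistic \NE. The one subtlety to check is feasibility: $\sum_{j} f'_{ij} = \sum_j f^*_{ij}(\ff) = \beta_i - Sl_i(\ff) \le \beta_i$, so $\ff' \in \CF_i$ — this is immediate but worth stating. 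This first part is essentially a bookkeeping reduction to the already-proved Lemma \ref{lem:match}, restricted to the case where the input happens to be an optimistic \NE.

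The substantive part is the second claim: every convex combination $\ff^\lambda = \lambda \ff + (1-\lambda)\ff'$, for $\lambda \in [0,1]$, is an optimistic \NE. The key observation I would exploit is that for every edge $(i,j)$, we have $f'_{ij} = f'_{ji} = f^*_{ij}(\ff) = \min\{f_{ij}, f_{ji}\}$, so on the edge where $i$ "wins" (i.e. $f_{ij} < f_{ji}$) we get $f'_{ij} = f_{ij}$ and $f'_{ji} = f_{ij} \le f_{ji}$. Hence $f^\lambda_{ij} = f_{ij}$ is unchanged, while $f^\lambda_{ji} = \lambda f_{ji} + (1-\lambda) f_{ij} \ge f_{ij} = f^\lambda_{ij}$; the direction of the inequality $f^\lambda_{ij} \le f^\lambda_{ji}$ is preserved, and crucially the agreed-upon value $f^{*\lambda}_{ij} = \min\{f^\lambda_{ij}, f^\lambda_{ji}\} = f_{ij} = f^*_{ij}(\ff)$ is the same as at $\ff$. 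So on every edge, $\ff^\lambda$ induces exactly the same vector $\ff^*$ of interaction frequencies as $\ff$ (and as $\ff'$). This gives three things at once: (i) $SW(\ff^\lambda) = SW(\ff)$; (ii) the win/lose structure is a "sub-structure" of that at $\ff$, and in particular $W_i(\ff^\lambda) \subseteq W_i(\ff)$ while each player's leftover slack is still allocated on her lose-set; (iii) feasibility of $\ff^\lambda_i$ holds because $\CF_i$ is convex and both $\ff_i, \ff'_i \in \CF_i$.

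It then remains to argue that $\ff^\lambda$ is actually a \NE — i.e. that no player $i$ can improve — and that it is an optimistic (not merely pessimistic) \NE. For the \NE property, I would argue that any unilateral deviation available to $i$ at $\ff^\lambda$ is also available at $\ff$ and produces the same $\ff^*_i$-response from the neighbors, because what a deviating $i$ can force on edge $(i,j)$ depends only on $f_{ji}$ through the quantity $\min\{\cdot, f_{ji}\}$, and increasing $f^\lambda_{ji}$ relative to $f_{ji}$ (which only happens on $i$'s win-edges, where $f^*$ is pinned at $f_{ij}$ regardless) cannot help $i$ raise $f^*_{ij}$ beyond what she could already get. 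More carefully: $i$'s best-response value is $\sum_j w_{ij} u_{ij}(\min\{g_{ij}, f_{ji}\})$ maximized over feasible $g_i$, and since $f^\lambda_{ji} \ge f^*_{ij}(\ff)$ always, while $i$ at equilibrium $\ff$ was already getting $f^*_{ij}(\ff)$ on her win-edges and matching on her lose-edges, the optimal attainable utility at $\ff^\lambda$ equals that at $\ff$, which $i$ already achieves. Hence $\ff^\lambda$ is a \NE. For optimism: since $\ff$ is an optimistic \NE there is some edge with $f_{ij} > f_{ji}$ (so $j \in W_i(\ff)$... wait, $i$ loses — $j \in L_i$); on that edge $f'_{ij} = f'_{ji} = f_{ji}$, and for $\lambda < 1$, $f^\lambda_{ij} = \lambda f_{ij} + (1-\lambda) f_{ji} > f_{ji} = f^\lambda_{ji}$, so the strict inequality $f^\lambda_{ij} > f^\lambda_{ji}$ persists for all $\lambda \in (0,1]$, witnessing that $\ff^\lambda$ is optimistic. (At $\lambda = 0$ we land on $\ff'$, which is pessimistic, consistent with the statement being about the convex combinations of $\ff$ and $\ff'$ and the genuinely optimistic ones being those with $\lambda > 0$.)

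I expect the main obstacle to be the \NE verification for $\ff^\lambda$: one must be careful that raising a neighbor's proposal on $i$'s win-edge does not open up a profitable deviation for $i$ (intuitively it cannot, because $i$ was already "winning" there, so the binding constraint on that edge from $i$'s side is her own proposal, not the neighbor's), and that on $i$'s lose-edges the neighbor's proposal is unchanged so nothing changes. Making this rigorous amounts to invoking the best-response characterization (\eqref{eq:BR}, or the KKT conditions referenced later) and checking that the optimal objective value of $i$'s program is the same at $\ff^\lambda$ as at $\ff$ and is attained by $\ff^\lambda_i$ itself. Everything else — feasibility, social welfare invariance, the optimism witness — is a short direct computation using $f^*_{ij}(\ff^\lambda) = f^*_{ij}(\ff)$ on every edge.
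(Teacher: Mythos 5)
Your proposal is correct and follows essentially the same route as the paper: the first claim is reduced to Lemma \ref{lem:match}, and the second is proved by observing that on every edge the convex combination preserves the agreed-upon frequencies $f^*_{ij}$ (so social welfare is unchanged), keeps each neighbor's proposal between $f^*_{ij}$ and its original value so that no player's best attainable utility increases, and retains a strict mismatch $f^\lambda_{ij} > f^\lambda_{ji}$ for $\lambda \in (0,1]$, witnessing optimism. Your explicit domination argument (neighbors' proposals only weakly decrease, hence the best-response value cannot exceed the utility already attained) is in fact a slightly more careful formalization of the paper's informal ``$j$ still gains nothing by deviating'' step, but it is the same underlying idea.
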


We now show that the set of pessimistic \NEa is convex.

\begin{theorem}\label{theor:pes-cvx}
Consider a game $\game$ with two pessimistic \NEa $\ff$ and $\ff'$. Any convex combination of $\ff$ and $\ff'$ is
also a pessimistic \NE.
\end{theorem}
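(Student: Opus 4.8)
The plan is to observe that a convex combination of two pessimistic \NEa is again a \emph{matched} profile (proposals equal on every edge) and that any matched, feasible profile is automatically a \NE by Lemma~\ref{lem:wineq}; hence such a combination is a pessimistic \NE. Concretely, I would fix pessimistic \NEa $\ff$ and $\ff'$, a scalar $\lambda \in [0,1]$, and set $\ff'' = \lambda\ff + (1-\lambda)\ff'$, meaning $f''_{ij} = \lambda f_{ij} + (1-\lambda) f'_{ij}$ for every ordered pair $(i,j)$.

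First I would check that $\ff''$ is a legal strategy profile: each strategy set $\CF_i = \{\ff_i \ge 0 : \sum_{j\in\CN_i} f_{ij} \le \beta_i\}$ is cut out by linear inequalities, hence convex, so $\ff_i, \ff'_i \in \CF_i$ implies $\ff''_i \in \CF_i$. Next, since $\ff$ and $\ff'$ are pessimistic equilibria, Definition~\ref{def:pess-opt} gives $f_{ij}=f_{ji}$ and $f'_{ij}=f'_{ji}$ on every edge $(i,j)\in E$, and taking the convex combination preserves this equality:
\[
f''_{ij} = \lambda f_{ij} + (1-\lambda)f'_{ij} = \lambda f_{ji} + (1-\lambda)f'_{ji} = f''_{ji}.
\]
Hence $\ff''$ is matched on every edge, so $f^*_{ij}(\ff'') = f''_{ij}$ there and, since no strict inequality $f''_{ij} < f''_{ji}$ can occur, the win-set $W_i(\ff'')$ is empty for every player $i$.

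Finally, since $|W_i(\ff'')| = 0$ for all $i \in \players$, Lemma~\ref{lem:wineq} shows that $\ff''$ is a \NE; combined with matchedness, $\ff''$ is a pessimistic \NE. As $\lambda$ was arbitrary, every convex combination of $\ff$ and $\ff'$ is a pessimistic \NE, which is the claim.

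The argument is short, and essentially the only point worth flagging is that — in contrast with the optimistic case of Lemma~\ref{lem:opt-pes}, where one has to track how a player redistributes her slack along the convex path — matchedness together with feasibility already supplies the hypothesis of Lemma~\ref{lem:wineq}, so the \NE inequalities come for free and no continuity/limit argument is needed. If one preferred to avoid invoking Lemma~\ref{lem:wineq}, the same conclusion is immediate from first principles: at a matched profile every $f^*_{ij}$ is already at its cap $f''_{ji}$, so raising a proposal cannot increase any $f^*_{ij}$ while lowering one only decreases the increasing concave $u_{ij}$; hence no unilateral deviation is profitable.
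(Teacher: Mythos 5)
Your proof is correct and takes essentially the same route as the paper's: both show that the convex combination $\ff''$ remains matched on every edge, i.e.\ $f''_{ij}=f''_{ji}$, and conclude that it is therefore a pessimistic \NE. Your explicit feasibility check (convexity of each $\CF_i$) and explicit appeal to Lemma~\ref{lem:wineq} for the equilibrium property simply spell out steps the paper leaves implicit.
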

\begin{proof}
The proof is almost identical to the proof of Lemma \ref{lem:opt-pes}. We want to show that every convex combination of
$\ff$ and $\ff'$ is also a pessimistic \NE. Let $\alpha \in [0, 1]$, and $\ff'' = \alpha \ff + (1 - \alpha) \ff'$. Consider
a pair of players $i, j \in \players$. Since $\ff$ and $\ff'$ are pessimistic \NEa, we have $f_{ij} = f_{ji}$ and
$f'_{ij} = f'_{ji}$. Thus

\[
f''_{ij} = \alpha f_{ij} + (1 - \alpha) f'_{ij} = \alpha f_{ji} + (1 - \alpha) f'_{ji} = f''_{ji}
\]

and $i,j$ make matching frequency proposals to each other at $\ff''$. Since this holds for every such pair
$i, j \in \players$, we understand that $\ff''$ is a pessimistic \NE.
\end{proof}

\begin{corollary}\label{cor:cvx-set}
The set of pessimistic \NEa of a game $\game$ is convex.
\end{corollary}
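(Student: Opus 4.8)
The plan is to observe that Corollary~\ref{cor:cvx-set} is essentially a repackaging of Theorem~\ref{theor:pes-cvx}. Recall that a set $S \subseteq \prod_{i \in \players} \CF_i$ is convex precisely when, for every pair of points $\ff, \ff' \in S$ and every $\alpha \in [0,1]$, the point $\alpha \ff + (1-\alpha)\ff'$ again lies in $S$. Taking $S$ to be the set of pessimistic \NEa of $\game$, Theorem~\ref{theor:pes-cvx} asserts exactly that this closure property holds: any convex combination of two pessimistic \NEa is itself a pessimistic \NE. Hence $S$ satisfies the definition of convexity and the corollary follows immediately.

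Concretely, I would first restate the definition of convexity to make the logical step transparent, then invoke Theorem~\ref{theor:pes-cvx} with arbitrary $\ff, \ff' \in S$ and arbitrary $\alpha \in [0,1]$ to conclude $\alpha\ff + (1-\alpha)\ff' \in S$. For completeness I would also remark that $S$ is nonempty---Lemma~\ref{lem:match} guarantees at least one pessimistic \NE (obtained, for instance, by making every player match her neighbors' proposals starting from any frequency profile)---so the statement is not vacuous, although the empty set would be convex anyway.

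There is no real obstacle here: the mathematical content lives entirely in Theorem~\ref{theor:pes-cvx}, whose proof in turn reduces to the elementary observation that if $f_{ij}=f_{ji}$ and $f'_{ij}=f'_{ji}$ on every edge $(i,j)\in E$, then the identity $f''_{ij} = \alpha f_{ij} + (1-\alpha) f'_{ij} = \alpha f_{ji} + (1-\alpha) f'_{ji} = f''_{ji}$ shows the matching property is preserved by the linear map $\ff'' = \alpha\ff + (1-\alpha)\ff'$, and matching on every edge forces $W_i(\ff'') = \emptyset$ for all $i$, which by Lemma~\ref{lem:wineq} is sufficient for $\ff''$ to be a \NE. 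The only point one might sanity-check is that convex combinations of three or more pessimistic \NEa also remain in the set; this is immediate by a trivial induction on the number of points, repeatedly applying the pairwise statement of Theorem~\ref{theor:pes-cvx}.
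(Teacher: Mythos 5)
Your proof is correct and follows the paper's approach exactly: the corollary is an immediate restatement of Theorem~\ref{theor:pes-cvx} via the definition of convexity, which is precisely how the paper treats it. The additional remarks on nonemptiness and combinations of more than two equilibria are harmless but not needed.
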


The following corollary now follows from Lemma \ref{lem:opt-pes} and Theorem \ref{theor:pes-cvx}.

\begin{corollary}\label{cor:conn-set}
The set of \NEa of a game $\game$ is connected.
\end{corollary}

Finally, we provide a complete characterization of a player's best-response in Appendix \ref{app:br}.


\section{Experimental Results}
\label{sec:exper}
\begin{figure}[!htbp]
 \includegraphics[width=0.75\textwidth]{./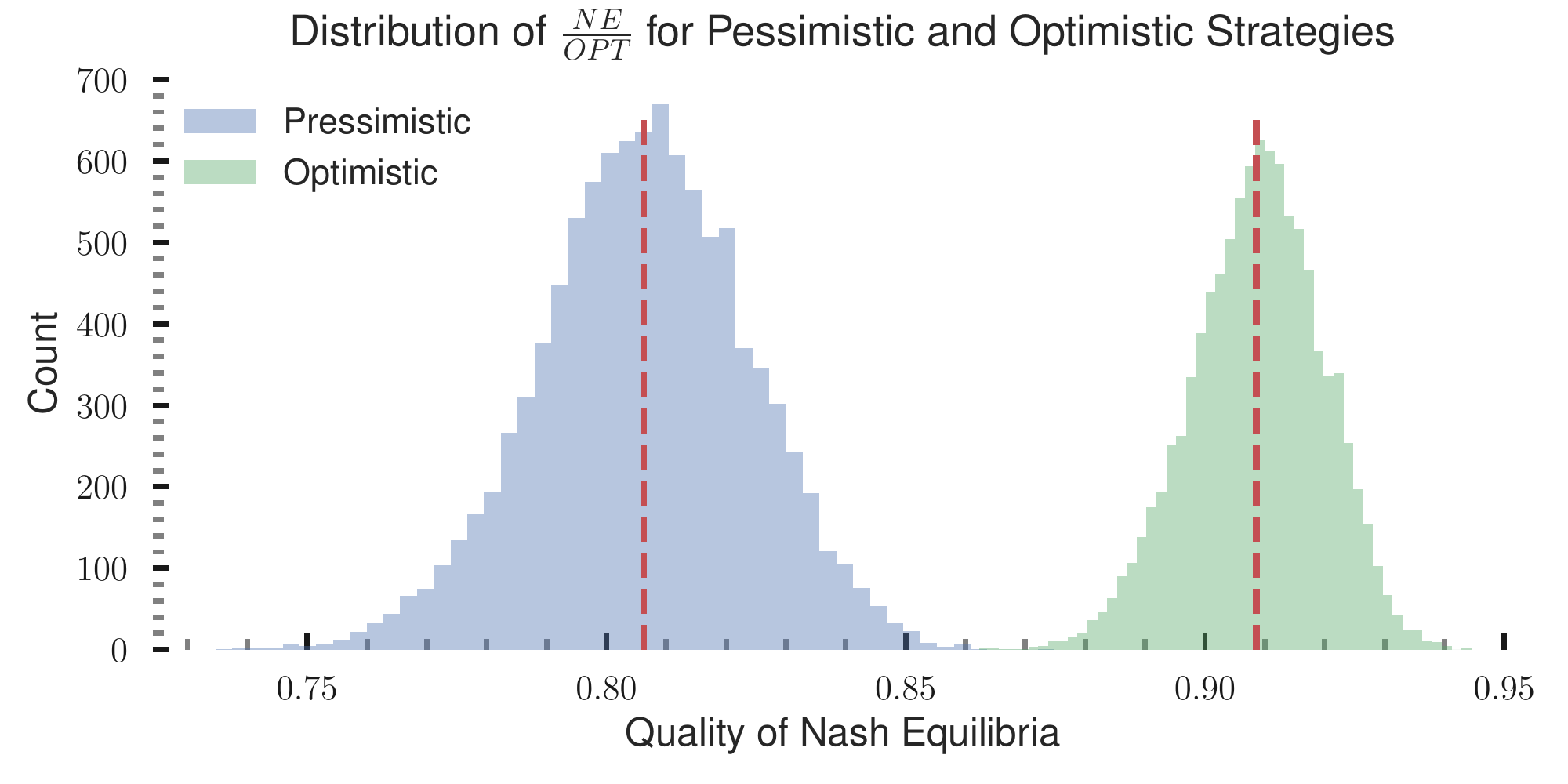}
 \caption{We show the difference between the quality of Nash Equilibria, when players adopt either pessimistic (i.e. do not redistribute slack), or optimistic (distribute slack over winning set) strategies (see ~\Cref{sec:prelim:opt-pes}). We conducted 10,000 simulations on a stylized 4-regular graph with random weights, uniform endowment ($\beta=1000$) and random initial proposal frequencies. The histogram ($y$-axis shows the number of simulations that converge to a particular bin; $x$-axis is ratio of NE quality to OPT, appropriately binned) shows that the optimistic strategies outperform pessimistic strategies, with a higher mean quality (dashed lines) and slightly lower variance.}
 \label{fig:comparision}
\end{figure}

We present results of simulations of best-response dynamics on a stylized graph. We would like to compute the distribution of Nash Equilibria when players adopt either the pessimistic or the optimistic strategies to distribute their slack (see ~\Cref{sec:prelim:opt-pes}). The pessimistic strategy for a player $i$ involves not distributing slack, while the optimistic strategy involves re-distributing slack over the winning set (i.e. set of neighbors $j \in \mathcal{N}_i$ whose proposals are higher than $i$'s proposals to her neighbors).

For the simulation, we use a $10 \times 10$ grid (i.e. a 4-regular graph), with uniform resource constraint of $\beta=1000$, random edge weights. The simulation involves using random frequency proposals to start and then compute best-response dynamics using a random sequence of player updates. We ran 10,000 simulations and noted the quality of Nash Equilibrium for each run for each strategy. Note that, in our experiments, either all players use the pessimistic strategy or all players use the optimistic strategy. We show the results in~\Cref{fig:comparision}, by creating a histogram of the ratio of the Nash Equilibrium quality to the global optimum, for each strategy. The results show that on average the quality of Equilibria for the optimistic strategy ($\mu_o=0.908$) is significantly better than the pessimistic strategy ($\mu_p=0.806$), and the quality of equilibria for the optimistic strategy has slightly lower variance ($\sigma_o=0.011, \sigma_p=0.017$). It is gratifying that the optimistic strategy does so well. Notice that the distribution is unimodal in for each strategy---we conjecture that this may be an outcome of the connectedness of the Nash Equilibria (see Corollary~\ref{cor:conn-set}).


\section{Conclusion}
\label{sec:Conclusion}
The problem of time allocation is one of longstanding interest to Economics and to Sociology given its importance to economic output and to the sustenance of social networks. In this paper, we formally studied the resource allocation game, where agents have private, asymmetric interaction preferences and make decisions on time allocation over their social network. The game was challenging to analyze since it is not in general, a weighted potential game, and its best response dynamics are not differentiable. First we showed that a restricted subclass of games where the interaction preferences are  related to the social rank is a weighted potential game. Then, for the general case, we used a novel two-level potential function approach to show that the best response dynamics converge to Nash Equilibrium. Our proof is general, and makes no assumptions on the form of the utility function beyond that it is concave, increasing and non-negative, which are reasonable and standard assumptions. We showed that the Price of Anarchy is unbounded, and that the Price of Stability is unity. Furthermore, we showed that the Nash Equilibria form a connected set. Towards understanding the quality \NE where best response converges, 
through extensive simulation of a stylized graph, we showed that the distribution of quality of Nash Equilibria are unimodal, which we conjecture is related to the connectedness of Nash Equilibria.

We identify two assumptions that limit the generalizability of our results. Our analysis of time focuses on costly communication (e.g. a conversation over a phone, or meeting in person). In online social networks, communication may be asymmetric---agent $i$ may send more messages to agent $j$ than does agent $j$ send to $i$. Second we assume that strength of the tie doesn't change over time---$w_{ij}$ remains the same. In real-world networks, tie strengths improve and degrade over time~\cite{Miritello2013}. We can incorporate weight changes by allowing weight update of $w_{ij}$ depending on how the neighbor $j$ of agent $i$ reciprocates to her proposals. 

Finally, it would be interesting to further understand the quality of \NE to which the resource allocation game converges, through the lens of average price of anarchy \cite{PP16}.


\bibliographystyle{abbrvnat}
\bibliography{rutarefs.bib,harirefs.bib}

\newpage


\appendix

{\Huge \textbf{Appendix}}

\section{Missing Proofs of Section \ref{sec:conv}}\label{app:conv}

\subsection{Proof of Theorem \ref{theor:potential}}\label{app:potential}

Because the game adopts global-ranking weighting scheme, every weight $w_{ij}$ can be written as

\[
w_{ij} = \frac{R(j)}{\sum_{k \in \CN_i} R(k)} = \frac{R(j)}{\nr(i)}
\]

Recall the potential function we proposed in \eqref{eq:potential-function}. If player $i$ deviates from $\ff_i$, which
is the vector of frequency proposals to all $j \in \CN_i$, to $\ff'_i$, then

\begin{align*}
\Phi(\ff_{-i}, \ff'_i) - \Phi(\ff) & =
\sum_{i \in \players} { \left( R(i) \nr(i) \sum_{j \in \CN_i} {w_{ij} (u_{ij}(f'^*_{ij}) - u_{ij}(f^*_{ij}))} \right) } \\
& = \sum_{i \in \players} { \left( R(i)  \sum_{j \in \CN_i} { R(j) (u_{ij}(f'^*_{ij}) - u_{ij}(f^*_{ij}))} \right) }
\end{align*}

Notice that only $i$'s neighbors are affected by this deviation, therefore

\begin{align*}
\Phi(\ff_{-i}, \ff'_i) - \Phi(\ff) & = \sum_{j \in \CN_i} {R(i) R(j) (u_{ij}(f'^*_{ij}) - u_{ij}(f^*_{ij}))} \\
& + \sum_{j \in \CN_i} {R(j) R(i) (u_{ji}(f'^*_{ij}) - u_{ji}(f^*_{ij}))}
\end{align*}

Since $u_{ij}(f^*_{ij}) = u_{ji}(f^*_{ij})$, we have

\begin{align*}
\Phi(\ff_{-i}, \ff'_i) - \Phi(\ff) & = \sum_{j \in \CN_i} {R(i) R(j) (u_{ij}(f'^*_{ij}) - u_{ij}(f^*_{ij}))} \\
                                   & + \sum_{j \in \CN_i} {R(j) R(i) (u_{ij}(f'^*_{ij}) - u_{ij}(f^*_{ij}))} \\
                                   & = 2 R(i) \sum_{j \in \CN_i} {R(j) (u_{ij}(f'^*_{ij}) - u_{ij}(f^*_{ij}))} \\
                                   & = 2 R(i) \nr(i) \sum_{j \in \CN_i} {w_{ij} (u_{ij}(f'^*_{ij}) - u_{ij}(f^*_{ij}))} \\
                                   & = 2 R(i) \nr(i) u_i(\ff'_i) - u_i(\ff_i)
\end{align*}

where $u_i(\ff)$ denotes the total utility of player $i$ at strategy profile $\ff$.

The difference of in $\Phi$ after the move of player $i$ is the difference of $i$'s utility, scaled by a constant.
Therefore $\game$ is a weighted potential game and always admits a \NE \cite{monderer:96}.

\subsection{Proof of Lemma \ref{lem:wineq}}\label{app:wineq}

We know that $\forall i \in \players$, $|W_i(\ff)| = 0$. This implies that $\forall j \in \CN_i$,
$f_{ij} \geq f_{ji}$. Therefore, $i$ is matching the frequency proposals of all her neighbors, meaning that when $i$
calculates her best-response, she satisfies all constraints of the form $f^*_{ij} \leq f_{ji}$ in \eqref{eq:BR} for
some $j \in \CN_i$ with equality. It follows that $i$ cannot increase her utility by unilaterally deviating from $\ff$.
Since this condition holds for every $i \in \players$, we understand that $\ff$ is a \NE.

\subsection{Proof of Lemma \ref{lem:slack-decr}}\label{app:slack}

Consider that at time $t$ it is $i$'s turn in our model. Obviously, we have $Sl_j(t+1) = Sl_j(t)$ for all
$j \notin \CN_i$, since $i$'s turn did not affect them at all. We first prove the following claims which relate the
slack of $i$ at time $t$, before she plays her best-response, and at time $t+1$, after $i$ has made new frequency
proposals according to her best-response.

\begin{claim}\label{clm:empty-win}
If at time $t$ it is $i$'s turn to play her best-response strategy and $|W_i(t)| = 0$, then $Sl_i(t+1) = Sl_i(t)$.
Moreover, $Sl(t+1) = Sl(t)$.
\end{claim}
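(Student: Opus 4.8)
The plan is to show that, under the hypothesis $|W_i(t)| = 0$, player $i$'s best-response at time $t$ leaves the agreed-upon frequency $f^*_{ij}$ unchanged on every edge incident to $i$; both equalities in the claim then follow by a direct count.

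First I would unpack the hypothesis. $|W_i(t)| = 0$ means $f_{ij}(t) \ge f_{ji}(t)$ for every $j \in \CN_i$, so $f^*_{ij}(t) = f_{ji}(t)$ and $i$'s current utility equals $\sum_{j \in \CN_i,\, j \ne i} w_{ij} u_{ij}(f_{ji}(t))$. During $i$'s turn the neighbors' proposals $f_{ji}(t)$ are constants and $f^*_{ij} \le f_{ji}(t)$ always holds, so each edge term is bounded above by $w_{ij} u_{ij}(f_{ji}(t))$. The strategy ``propose exactly $f_{ji}(t)$ to each neighbor $j$'' is feasible for $i$, because $\sum_{j} f_{ji}(t) \le \sum_j f_{ij}(t) \le \beta_i$ (she was already dominating those proposals within budget), and it attains every edge's upper bound; hence it is a best-response, and by Lemma~\ref{lem:wineq} no best-response can beat it.

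Next I would argue that \emph{any} best-response $\ff_i(t+1)$ satisfies $f^*_{ij}(t+1) = f_{ji}(t)$ for all $j \in \CN_i$. If instead $f^*_{ik}(t+1) < f_{ki}(t)$ for some neighbor $k \in \CN_i$, then (using that $u_{ik}$ is strictly increasing and $w_{ik} > 0$) that edge's contribution strictly drops while no other edge's contribution can rise above its fixed cap $w_{ij} u_{ij}(f_{ji}(t))$, so the total is strictly below the value of the fully-matched strategy, contradicting optimality. An optimistic $i$ may still dump slack onto $L_i$, but that only pushes proposals above these caps and does not move any $f^*_{ij}$. Hence $f^*_{ij}(t+1) = f^*_{ij}(t)$ for every $j \in \CN_i$. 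The accounting is then routine: $Sl_i(t+1) = \beta_i - \sum_{j \ne i} f^*_{ij}(t+1) = \beta_i - \sum_{j \ne i} f^*_{ij}(t) = Sl_i(t)$; for a neighbor $j \in \CN_i$ the only term of $Sl_j$ that could change is the one from edge $(i,j)$, and it did not; and for $j \notin \CN_i$ nothing incident to $j$ moved. Summing over all players gives $Sl(t+1) = Sl(t)$.

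The one point that needs care is the middle step — ruling out that a best-response ``unmatches'' an already-matched edge in order to reroute frequency elsewhere. This is exactly where I use (i) feasibility of the fully-matched strategy, which comes for free from the current profile already dominating the neighbors' proposals, and (ii) strict monotonicity of each $u_{ij}$, so that abandoning a matched edge is a strict loss with no possible compensation. Everything past that is arithmetic on the definition of $Sl$.
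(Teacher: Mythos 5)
Your proof is correct and follows essentially the same route as the paper: show that under $|W_i(t)|=0$ any best-response of player $i$ leaves every agreed frequency $f^*_{ij}$ unchanged, then do the slack accounting for $i$, her neighbors, and the unaffected players. The only difference is that you spell out (via feasibility of the fully-matched proposal and strict monotonicity of $u_{ij}$) the step the paper merely asserts, namely that no best-response can ``unmatch'' an already-matched edge.
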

\begin{proof}
Since $|W_i(t)| = 0$, $i$ is matching the proposal of all of her neighbors. Thus, every strategy $\bm{f}_i(t+1)$
where $f_{ij}(t+1) \geq f_{ji}(t)$ for all $j \in \CN_i$ is a best-response strategy. This implies that in any
best-response strategy, $f^*_{ij}(t+1) = f^*_{ij}(t)$, for all $j \in \CN_i$. We know that
$Sl_i(t) = \beta_i - \sum_{j \in \CN_i} {f^*_{ij}(t)}$, and also
\[
Sl_i(t+1) = \beta_i - \sum_{j \in \CN_i} {f^*_{ij}(t+1)} = \beta_i - \sum_{j \in \CN_i} {f^*_{ij}(t)} = Sl_i(t)
\]
where the second equality follows from the fact that the utility function of $i$ is increasing.

Finally, since $i$'s proposals did not force any change in the interaction frequency with any player $j \in \CN_i$,
we have
\begin{align*}
Sl_j(t+1) & = \beta_j - \sum_{k \in \CN_j} {f^*_{jk}(t+1)} = \beta_j - \sum_{k \in \CN_j \setminus \{i\}} {f^*_{jk}(t+1)} - f^*_{ji}(t+1) \\
& = \beta_j - \sum_{k \in \CN_j \setminus \{i\}} {f^*_{jk}(t)} - f^*_{ji}(t) = \beta_j - \sum_{k \in \CN_j} {f^*_{jk}(t)} = Sl_j(t)
\end{align*}

Combining the two observations above, we understand that $Sl(t+1) = Sl(t)$.
\end{proof}

\begin{claim}\label{clm:personal}
If at time $t$ it is $i$'s turn to play her best-response strategy, $|W_i(t)| > 0$ and $Sl_i(t) \geq \eta$,
then, $Sl(t+1) < Sl(t)$.
\end{claim}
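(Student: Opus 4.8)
The plan is to reduce player $i$'s best-response in round $t$ to a single-player allocation problem, to pin down exactly whose slack can move, and then to derive a contradiction from the assumption that $i$'s total agreed frequency fails to strictly grow. Throughout, $j$ ranges over $\CN_i \setminus \{i\}$. For the bookkeeping: since $i$ is the only player who moves in round $t$, every player $k \notin \CN_i$ with $k \neq i$ has $Sl_k(t+1) = Sl_k(t)$, and $f_{ji}(t+1) = f_{ji}(t)$ for each neighbour $j$. For such a $j$ the only incident edge whose agreed frequency can change is $(i,j)$, so $Sl_j(t+1) - Sl_j(t) = f^*_{ij}(t) - f^*_{ij}(t+1)$, while $Sl_i(t+1) - Sl_i(t) = -\sum_j \bigl(f^*_{ij}(t+1) - f^*_{ij}(t)\bigr)$. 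Adding all of these gives
\[
Sl(t) - Sl(t+1) = 2 \sum_{j \in \CN_i \setminus \{i\}} \bigl(f^*_{ij}(t+1) - f^*_{ij}(t)\bigr),
\]
so it suffices to show that $i$'s total agreed interaction frequency strictly increases in this round.

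Next I would note that \eqref{eq:BR} for $i$ is equivalent to choosing her agreed frequencies $x_j := f^*_{ij}(t+1)$ so as to maximize $\sum_j w_{ij} u_{ij}(x_j)$ subject to $0 \le x_j \le f_{ji}(t)$ for every $j$ and $\sum_j x_j \le \beta_i$; whatever padding the player subsequently applies to her proposals on $L_i$ (for instance when she is optimistic) leaves the $x_j$'s, and hence every player's slack, unchanged. The round-$t$ vector $\bigl(f^*_{ij}(t)\bigr)_j$ is feasible for this program, since $f^*_{ij}(t) = \min\{f_{ij}(t), f_{ji}(t)\} \le f_{ji}(t)$ and $\sum_j f^*_{ij}(t) = \beta_i - Sl_i(t) \le \beta_i - \eta$. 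Let $x$ denote the optimal choice, i.e.\ $x_j = f^*_{ij}(t+1)$.

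For the contradiction, suppose $\sum_j x_j \le \sum_j f^*_{ij}(t)$. Then $\sum_j x_j \le \beta_i - \eta$, so at $x$ the budget has at least $\eta$ of unused room. Because all proposals are non-negative integer multiples of $\eta$ and each $u_{ij}$ is strictly increasing, if some edge had $x_j < f_{ji}(t)$ we could raise $x_j$ by $\eta$, remain feasible, and strictly increase the objective — contradicting optimality; hence $x_j = f_{ji}(t)$ for every $j$. But then $\sum_j f_{ji}(t) = \sum_j x_j \le \sum_j f^*_{ij}(t) = \sum_j \min\{f_{ij}(t), f_{ji}(t)\} \le \sum_j f_{ji}(t)$, which forces $f_{ij}(t) \ge f_{ji}(t)$ for every $j \in \CN_i$, that is $|W_i(t)| = 0$ — contradicting the hypothesis. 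Hence $\sum_j x_j > \sum_j f^*_{ij}(t)$, and the displayed identity gives $Sl(t+1) < Sl(t)$.

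I expect the last step to be the crux. It is tempting to argue only that $i$'s utility strictly rises — she can always move $\eta$ of her slack onto a winning edge, whose gap is at least $\eta$ — but rising utility alone does not suffice, since $i$ could increase utility while \emph{lowering} her total agreed frequency by reallocating toward a more heavily weighted neighbour. The hypothesis $Sl_i(t) \ge \eta$ is exactly what rules this out: a best response that still has budget to spare must already saturate every neighbour's proposal, which cannot coexist with a nonempty win set. Some care is also needed to justify the reduction of \eqref{eq:BR} to the $x_j$-program and to check that the optimistic slack-padding is invisible to the slack bookkeeping.
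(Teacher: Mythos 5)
Your proof is correct, but it takes a genuinely different route from the paper's. The paper argues by contradiction at time $t+1$: assuming $Sl_i(t+1) \ge Sl_i(t)$, it decomposes $\sum_j f^*_{ij}$ over the win/lose sets at rounds $t$ and $t+1$, uses the fact that only $i$ moved to cancel the $\sum_j f_{ji}$ terms, and deduces that the new strategy would still leave $|W_i(t+1)| > 0$ together with $Sl_i(t+1) \ge \eta$; since $i$ could then shift slack onto a winning edge and strictly gain, this contradicts optimality of her best response, giving $Sl_i(t+1) < Sl_i(t)$ (the passage from $i$'s personal slack to the total slack, which you make explicit via your factor-$2$ identity, is left implicit there). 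You instead stay inside the round-$t$ optimization: you reduce \eqref{eq:BR} to a program over the agreed frequencies $x_j \le f_{ji}(t)$, check that optimistic padding on $L_i$ is invisible to every player's slack, and use the $\eta$-grid to show that an optimum with at least $\eta$ of unused budget must saturate every cap $x_j = f_{ji}(t)$, which combined with $\sum_j x_j \le \sum_j f^*_{ij}(t)$ forces $W_i(t) = \emptyset$ --- a contradiction already at time $t$. Your version buys the explicit accounting identity $Sl(t) - Sl(t+1) = 2\sum_j \bigl(f^*_{ij}(t+1) - f^*_{ij}(t)\bigr)$, which in the discretized setting immediately yields a decrease of at least $2\eta$ (convenient for Lemma \ref{lem:slack-stabilize}), and it pinpoints cleanly where the hypothesis $Sl_i(t) \ge \eta$ is used; the paper's version does not need the $\eta$-grid inside this claim (any positive slack suffices for its improvement step), though the grid is assumed in the surrounding setting anyway. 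Both arguments share the paper's implicit assumptions that edge weights are strictly positive and the $u_{ij}$ are strictly increasing, so this is not a gap relative to the original.
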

\begin{proof}
Notice that the proposals $\bm{f}_i(t)$ of $i$ at time $t$, before $i$'s turn, is a feasible strategy. It follows easily
that the utility of the best-response strategy of $i$ is at least as good as the utility of $\bm{f}_i(t)$. Suppose
that $Sl_i(t+1) \geq Sl_i(t)$. We have
\begin{align}\label{eq:app:1}
& Sl_i(t+1) \geq Sl_i(t) \Leftrightarrow \beta_i - \sum_{j \in \CN_i} {f^*_{ij}(t+1)} \geq \beta_i - \sum_{j \in \CN_i} {f^*_{ij}(t)} \Leftrightarrow \sum_{j \in \CN_i} {f^*_{ij}(t)} \geq \sum_{j \in \CN_i} {f^*_{ij}(t+1)} \nonumber \\
& \Leftrightarrow \sum_{j \in W_i(t)} {f_{ij}(t)} + \sum_{j \notin W_i(t)} {f_{ji}(t)} \geq \sum_{j \in W_i(t+1)} {f_{ij}(t+1)} + \sum_{j \notin W_i(t+1)} {f_{ji}(t+1)}
\end{align}
Now, since at time $t$ only $i$ played her best response, it must be that
\begin{align}\label{eq:app:2}
& \sum_{j \in \CN_i} {f_{ji}(t)} = \sum_{j \in \CN_i} {f_{ji}(t+1)} \nonumber \\
& \Leftrightarrow \sum_{j \in W_i(t)} {f_{ji}(t)} + \sum_{j \notin W_i(t)} {f_{ji}(t)} = \sum_{j \in W_i(t+1)} {f_{ji}(t+1)} + \sum_{j \notin W_i(t+1)} {f_{ji}(t+1)} \nonumber \\
& \Leftrightarrow \sum_{j \notin W_i(t)} {f_{ji}(t)} = \sum_{j \in W_i(t+1)} {f_{ji}(t+1)} + \sum_{j \notin W_i(t+1)} {f_{ji}(t+1)} - \sum_{j \in W_i(t)} {f_{ji}(t)}
\end{align}
Substituting \eqref{eq:app:2} into \eqref{eq:app:1} gives us
\begin{align}\label{eq:app:3}
& \sum_{j \in W_i(t)} {f_{ij}(t)} + \sum_{j \in W_i(t+1)} {f_{ji}(t+1)} + \sum_{j \notin W_i(t+1)} {f_{ji}(t+1)} - \sum_{j \in W_i(t)} {f_{ji}(t)} \nonumber \\
& \geq \sum_{j \in W_i(t+1)} {f_{ij}(t+1)} + \sum_{j \notin W_i(t+1)} {f_{ji}(t+1)} \nonumber \\
& \Leftrightarrow \sum_{j \in W_i(t+1)} {f_{ji}(t+1)} - \sum_{j \in W_i(t+1)} {f_{ij}(t+1)} \geq \sum_{j \in W_i(t)} {f_{ji}(t)} - \sum_{j \in W_i(t)} {f_{ij}(t)}
\end{align}
Since $|W_i(t)| > 0$, we have by definition that
\begin{align}\label{eq:app:4}
& \exists \: j \in \CN_i : f_{ji}(t) > f_{ij}(t) \Leftrightarrow \sum_{j \in W_i(t)} {f_{ji}(t)} > \sum_{j \in W_i(t)} {f_{ij}(t)}
\end{align}
Thus, combining \eqref{eq:app:3} and \eqref{eq:app:4}, we get
\begin{align}\label{eq:app:4}
& \sum_{j \in W_i(t+1)} {f_{ji}(t+1)} - \sum_{j \in W_i(t+1)} {f_{ij}(t+1)} > 0
\end{align}
This implies that there exists at least one player $j \in \CN_i$ such that $f_{ij}(t+1) < f_{ji}(t+1)$, 
which directly gives us that $|W_i(t+1)| > 0$. Furthermore, since $Sl_i(t) \geq \eta$ and
$Sl_i(t+1) \geq Sl_i(t)$, we have that $Sl_i(t+1) \geq \eta$, which gives us that $i$ at time $t+1$ has both
a non-empty winning set and slack at least $\eta$. Since $i$'s utility function is increasing, she can
``spend'' her slack on her winning set and increase her utility which directly contradicts that $i$ played her
best-response strategy. Thus, $Sl_i(t+1) < Sl_i(t)$.

\end{proof}

\begin{claim}\label{clm:reshuffle}
If at time $t$ it is $i$'s turn to play her best-response strategy, $|W_i(t)| > 0$ and $Sl_i(t) = 0$,
then $Sl(t+1) \leq Sl(t)$.
\end{claim}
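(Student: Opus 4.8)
The plan is to establish the stronger fact that $i$'s best-response move leaves the total slack \emph{unchanged}, $Sl(t+1)=Sl(t)$, which in particular yields $Sl(t+1)\le Sl(t)$. The first ingredient is an accounting identity. On $i$'s turn only $Sl_i$ and the slacks of $i$'s genuine neighbors $j\in\CN_i\setminus\{i\}$ can change, and since the agreed frequency is attached to the undirected edge $(i,j)$ (so $f^*_{ij}=f^*_{ji}$), a short calculation using $Sl_k(\ff)=\beta_k-\sum_{m\in\CN_k,\,m\neq k}f^*_{km}$ gives
\[
Sl(t+1)-Sl(t) \;=\; 2\sum_{\substack{j\in\CN_i\\ j\neq i}}\bigl(f^*_{ij}(t)-f^*_{ij}(t+1)\bigr).
\]
Hence it suffices to show $\sum_{j\neq i}f^*_{ij}(t+1)\ge\sum_{j\neq i}f^*_{ij}(t)$. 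Because $Sl_i(t)=0$ the right-hand side equals $\beta_i$, whereas feasibility of $i$'s new proposal gives $\sum_{j\neq i}f^*_{ij}(t+1)\le\sum_{j\neq i}f_{ij}(t+1)\le\beta_i$; so in fact equality would follow and we would be done.

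Second, I would unpack the hypothesis $Sl_i(t)=0$. Writing $c_j:=f_{ji}(t)$ for $i$'s neighbors' proposals (constant during $i$'s turn), the equalities $\sum_{j\neq i}f^*_{ij}(t)=\beta_i$, $f^*_{ij}(t)\le f_{ij}(t)$, and $\sum_{j\neq i}f_{ij}(t)\le\beta_i$ force $f_{ij}(t)=f^*_{ij}(t)\le c_j$ for every neighbor $j$; and since $W_i(t)\neq\emptyset$ there is a $j_0$ with $f^*_{ij_0}(t)<c_{j_0}$ strictly, whence $\sum_{j\neq i}c_j>\beta_i$. Now $i$'s best response, as described by \eqref{eq:BR}, amounts to choosing agreed frequencies $g_j:=f^*_{ij}(t+1)=\min\{f_{ij}(t+1),c_j\}\in[0,c_j]$ subject to $\sum_{j\neq i}g_j\le\beta_i$ so as to maximize $\sum_{j\neq i}w_{ij}u_{ij}(g_j)$; the leftover degrees of freedom in the $f_{ij}(t+1)$ are exactly the optimistic/pessimistic choice and do not affect the $g_j$ or the utility.

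The heart of the argument is then to show every best response satisfies $\sum_{j\neq i}g_j=\beta_i$. If not, then $\beta_i-\sum_{j\neq i}g_j>0$, and combined with $\sum_{j\neq i}c_j>\beta_i$ this yields a coordinate $j^*$ with $g_{j^*}<c_{j^*}$. One can then raise $f_{ij^*}(t+1)$ by a small $\eps>0$ --- paid for either from genuinely unspent budget, or by first shaving an equal amount off some proposal that exceeds its cap (which leaves all $g_j$ unchanged) --- while staying feasible; this strictly increases $u_{ij^*}(g_{j^*})$ since $u_{ij^*}$ is increasing, contradicting optimality. Therefore $\sum_{j\neq i}f^*_{ij}(t+1)=\beta_i$, and substituting into the identity gives $Sl(t+1)=Sl(t)$. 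Note the argument is oblivious to whether $i$ is optimistic or pessimistic --- in this regime there is simply no leftover budget to redistribute. I expect the main obstacle to be making the accounting identity airtight given the bookkeeping self-loop $\{i,i\}$ that stores $Sl_i$, and ensuring ``$u_{ij}$ increasing'' is invoked in the strict sense so that a best response cannot deliberately waste budget; both points are routine but should be spelled out carefully, the key enabler in each case being the strict inequality $\sum_{j\neq i}c_j>\beta_i$ guaranteed by $W_i(t)\neq\emptyset$.
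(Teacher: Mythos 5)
Your proof is correct, and it reaches a stronger conclusion than the claim via a route that differs from the paper's in its bookkeeping. The paper argues locally: it splits $i$'s move into a set $S\subseteq W_i(t)$ of raised proposals and a set $S'$ of lowered ones, uses the fact that a best response cannot increase $i$'s own slack (the argument of Claim~\ref{clm:personal}) to bound the net transfer between $S$ and $S'$, converts proposal changes into agreed-frequency changes, and then sums the resulting slack changes over $S\cup S'$ to get $Sl(t+1)\le Sl(t)$. You instead prove a global accounting identity, $Sl(t+1)-Sl(t)=2\sum_{j\in\CN_i,\,j\neq i}\bigl(f^*_{ij}(t)-f^*_{ij}(t+1)\bigr)$, and then show that in the regime $Sl_i(t)=0$, $|W_i(t)|>0$ the caps satisfy $\sum_j f_{ji}(t)>\beta_i$, so by an exchange argument every best response exhausts the budget on agreed frequencies, i.e.\ $Sl_i(t+1)=0$; this yields the sharper statement $Sl(t+1)=Sl(t)$ and makes explicit that the optimistic/pessimistic distinction is vacuous here. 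Both arguments run on the same engine---optimality plus increasing utilities forbid wasting budget while there is unmet demand in $W_i$---but yours avoids the $S/S'$ case analysis and the appeal to Claim~\ref{clm:personal}, at the price of needing the budget-exhaustion step spelled out (your Case~A/Case~B perturbation, which is sound since $g_{j^*}<c_{j^*}$ forces $f_{ij^*}(t+1)<c_{j^*}$). The two caveats you flag---strictness of ``increasing'' so a best response cannot deliberately waste budget, and excluding the bookkeeping self-loop $f^*_{ii}=Sl_i$ from the slack sums---are exactly the implicit conventions the paper's own proofs (e.g.\ Claim~\ref{clm:personal} and Lemma~\ref{lem:slack-decr}) already rely on, so they do not constitute a gap relative to the paper.
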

\begin{proof}
Suppose that $i$'s best-response dictates that she has to increase her proposal to a set of agents $S \subseteq W_i(t)$.
The slack of $i$ is zero, and thus, in order for $i$ to increase her proposals to $S$, she will have to decrease her
proposal to a set $S'$ of agents in her neighborhood. Therefore, we say that $i$ has to perform some
\textit{reshuffling} of her proposals. First, note that $Sl_j(t+1) = Sl_j(t)$ for all players $j \notin S \cup S'$,
since they are not affected by $i$'s reshuffling.

If $i$ decreases her proposals to $S'$ more than how much she wants to increase her proposals to $S$, her slack will
increase, which implies that this strategy is not a best-response. Thus, we have that
\[
\sum_{j \in S'} {\left(f_{ij}(t) - f_{ij}(t+1)\right)} + Sl_i(t) = \sum_{j \in S} {\left(f_{ij}(t+1) - f_{ij}(t)\right)} + Sl_i(t+1)
\]
By Claim \ref{clm:personal}, we have that $Sl_i(t+1) \leq Sl_i(t)$, which implies
\[
\sum_{j \in S'} {\left(f_{ij}(t) - f_{ij}(t+1)\right)} \leq \sum_{j \in S} {\left(f_{ij}(t+1) - f_{ij}(t)\right)}
\]
Notice here that upon this reshuffling, the interaction frequency of $i$ with any agent in $S \cup S'$ will change
immediately. Specifically, we have
\[
\begin{cases}
f^*_{ij}(t+1) = f_{ij}(t+1) < f^*_{ij}(t) & \forall j \in S' \\
f^*_{ij}(t+1) = f_{ij}(t+1) > f^*_{ij}(t) & \forall j \in S
\end{cases}
\]
Thus, we get that
\begin{align*}
\sum_{j \in S'} {\left(f^*_{ij}(t) - f^*_{ij}(t+1)\right)} \leq \sum_{j \in S} {\left(f^*_{ij}(t+1) - f^*_{ij}(t)\right)} \Leftrightarrow \\
\sum_{j \in S \cup S'} {f^*_{ij}(t)} \leq \sum_{j \in S \cup S'} {f^*_{ij}(t+1)} \Leftrightarrow \\
\sum_{j \in S \cup S'} {\left(\beta_j  - f^*_{ij}(t)\right)} \geq \sum_{j \in S \cup S'} {\left(\beta_j - f^*_{ij}(t+1)\right)} \Leftrightarrow \\
\sum_{j \in S \cup S'} {Sl_j(t)} \geq \sum_{j \in S \cup S'} {Sl_j(t+1)}
\end{align*}
Using this fact along with Claim \ref{clm:personal} and the fact that agents not in $S \cup S'$ are unaffacted by $i$'s
reshuffling, we get
\[
\begin{array}{clll}
Sl(t) & = Sl_i(t) & + \sum_{j \notin S \cup S'} {Sl_j(t)} & + \sum_{j \in S \cup S'} {Sl_j(t)} \\
& \geq Sl_i(t+1) & + \sum_{j \notin S \cup S'} {Sl_j(t+1)} & + \sum_{j \in S \cup S'} {Sl_j(t+1)} \\
& = Sl(t+1)
\end{array}
\]
\end{proof}

We conclude that since at every time $t$, $Sl(t)$ either decreases or remains constant, $Sl(t)$ is monotonically
decreasing.

\subsection{Proof of Lemma \ref{lem:slack-stabilize}}\label{app:slack-stabilize}

We will prove the lemma by contradiction. Assume that for any time $t$, there exists a time $t' > t$ such that
$Sl(t') < Sl(t)$. Since every new proposal at $t'$ must be at least $\eta$ greater or lesser than the previous
proposal at time $t'-1$, we know that $Sl(t'-1) - Sl(t') \geq \eta$. Since the total slack is monotonically
decreasing by Lemma \ref{lem:slack-decr}, we have that $Sl(t'-1) \leq Sl(t)$, and thus $Sl(t) - Sl(t') \geq \eta$.
Now, the argument can be repeated once again. We know that after at most $\frac{\sum_{i = 1}^n {\beta_i}}{\eta} + 1$
such repeats of this argument, we will reach a time $t^*$ such that $Sl(t^*) < 0$, and we arrive at a contradiction,
based on the definition of the total slack of the game. Thus, we conclude that there exists a time $t_0$ such that
$Sl(t) = Sl(t_0)$ for all times $t \geq t_0$.

\subsection{Proof of Lemma \ref{lem:winslack-decr}}\label{app:winslack}

Suppose we are at time $t \geq t_0$, after the total slack has stabilized. Since $\eta$ is the minimum denomination of the resource, it is without loss of generality to assume that all $\beta_i$'s are multiple of $\eta$. Furthermore, players' proposals are in multiples of $\eta$, and therefore for any player $i$, $Sl_i(t)$ is either $0$ or at least $\eta$ at any time $t$.

First of all, note that at time $t$ there must exist some player $i$ with $|W_i(t)| > 0$, because if $W_i(t) = 0$
for all players $i \in \players$, then, by Lemma \ref{lem:wineq}, we have reached a \NE. Also, note that for every
player $i$ that has $|W_i(t)| > 0$, it must be that $Sl_i(t)=0$. Otherwise, by Claim \ref{clm:personal},
we get that the total slack decreases, which contradicts our hypothesis. Thus, we can
partition the players into two sets, $V_1(t)$ and $V_2(t)$, where $V_1(t)$ is the set of players $i$ that have
$|W_i(t)| = 0$ and $Sl_i(t) \ge 0$, 
while $V_2(t)$ is the set of players $j$ that have $|W_j(t)| > 0$ and $Sl_j(t)=0$. Note that, sets $V_1(t)$ and $V_2(t)$ covers all the players.

We first show that players in $V_1(t)$ are always playing best-response after time $t_0$. 

\begin{claim}\label{clm:v1stasis}
For all $t\ge t_0$, and for all $i \in V_1(t)$, $\ff_i(t)$ is a best-response against $\ff(t)$.
\end{claim}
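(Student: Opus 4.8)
The plan is to reduce the statement to the single-player version of the argument behind Lemma~\ref{lem:wineq}: a player whose winning set is empty is already matching all of her neighbours, so her realised interaction frequencies --- and hence her utility --- are the largest possible against the current profile, which makes her current proposal a best-response. In fact this holds at every round, and the restriction $t \ge t_0$ is only the regime in which the claim is invoked.

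Concretely, first I would fix $t \ge t_0$ and $i \in V_1(t)$, so that $|W_i(t)| = 0$. By definition of $W_i$ this means $f_{ij}(t) \ge f_{ji}(t)$ for every neighbour $j \in \CN_i$, so $f^*_{ij}(t) = f_{ji}(t)$ and $u_i(\ff(t)) = \sum_{j \in \CN_i} w_{ij} u_{ij}(f_{ji}(t))$. For any alternative strategy $\ff'_i \in \CF_i$ the realised frequency on edge $(i,j)$ is $\min\{f'_{ij}, f_{ji}(t)\} \le f_{ji}(t)$, so since each $u_{ij}$ is increasing we get $u_i(\ff'_i, \ff_{-i}(t)) \le u_i(\ff(t))$. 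Hence no unilateral deviation strictly improves $i$, i.e.\ $\ff_i(t)$ maximises $i$'s utility given $\ff_{-i}(t)$ and is a best-response --- this is exactly the reasoning used inside the proof of Lemma~\ref{lem:wineq}, now applied to the single player $i$.

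The one point that needs a word of care is the optimistic/pessimistic refinement when $Sl_i(t) > 0$: a pessimistic $i$ keeps the slack as $f^*_{ii} = Sl_i(t)$, an optimistic $i$ may push part of it onto edges to her losing set, and a priori these look different. The observation that settles it is that redistributing leftover slack is a \emph{null} move: since $|W_i(t)| = 0$, every neighbour of $i$ lies in $L_i(t)$, and on such an edge $f^*_{ij} = f_{ji}(t)$ regardless of how much $\ge f_{ji}(t)$ player $i$ proposes; consequently no realised frequency $f^*$, no player's utility, and not the total slack $Sl(t)$ is affected by how $i$ allocates her slack. In particular such a re-allocation is consistent with $\ff_i(t)$ being a best-response, cannot be the reason $Sl(t)$ would drop, and leaves the $V_1/V_2$ partition intact --- so it is entirely compatible with the assumption $t \ge t_0$. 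I expect this small bookkeeping step around leftover slack to be the only (mild) obstacle; granting it, the claim is immediate from the utility-maximality argument above.
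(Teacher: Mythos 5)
Your argument is correct and is essentially the paper's own proof: since $|W_i(t)|=0$ means $f_{ij}(t)\ge f_{ji}(t)$ for all $j\in\CN_i$, player $i$ already attains the cap $f^*_{ij}=f_{ji}(t)$ on every edge, and as each $u_{ij}$ is increasing no unilateral deviation can realize a higher frequency, so $\ff_i(t)$ solves \eqref{eq:BR}. The additional remark about slack redistribution being a utility-null move is harmless but not needed for the claim as stated.
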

\begin{proof}
Since $i \in V_1(t)$, we understand that $|W_i(t)| = 0$. By the definition of $W_i(t)$, we have that
$f_{ij}(t) \geq f_{ji}(t)$, for all players $j \in \CN_i$. Thus, if we try to compute best-response of player $i$ against $\ff(t)$ using the (local) convex program \eqref{eq:BR}, then all inequalities of the form $f^*_{ij} \le f_{ji}(t)$ will be tight, since $i$ has extra budget but no neighbor to spend on to and $u_{ij}$s are increasing in $f^*_{ij}$. Thus, it follows that $\ff_i(t)$ is a best-response of player $i$ against $\ff(t)$. 
%
\end{proof}

The above claim implies the strategies of the agents in $V_1(t)$ do not change at time $t$. 

\begin{claim}\label{clm:v1non-decr}
$V_1(t) \subseteq V_1(t+1)$.
\end{claim}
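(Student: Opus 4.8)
Here is the plan I would follow.

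\smallskip
\noindent\textbf{Setup and reduction to one inequality.} Let $p$ be the unique player who updates at time $t$ (if $\ff(t)$ is already a \NE nothing moves and the claim is trivial). The first observation is that $p \notin V_1(t)$: by Claim~\ref{clm:v1stasis} every player in $V_1(t)$ is already playing a best-response against $\ff(t)$, so the updating player cannot lie in $V_1(t)$; hence $p \in V_2(t)$, which gives $|W_p(t)| > 0$ and $Sl_p(t) = 0$. Now fix an arbitrary $i \in V_1(t)$. Since $Sl_i(t+1) \ge 0$ holds automatically, it suffices to prove $|W_i(t+1)| = 0$. Because only $p$ moves, $\ff_i(t+1) = \ff_i(t)$, and $f_{ji}(t+1) = f_{ji}(t)$ for every $j \in \CN_i \setminus \{p\}$; combined with $|W_i(t)| = 0$ this already yields $f_{ij}(t+1) \ge f_{ji}(t+1)$ for all $j \ne p$. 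If $p \notin \CN_i$ we are done, so assume $p \in \CN_i$. Since $f_{ip}(t+1) = f_{ip}(t) \ge f_{pi}(t)$ (as $p \notin W_i(t)$), the whole claim reduces to the single inequality $f_{pi}(t+1) \le f_{ip}(t)$.

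\smallskip
\noindent\textbf{Key step: $p$ is forced to keep winning over someone.} From $Sl_p(t)=0$ we have $\beta_p = \sum_{k \in \CN_p} f^*_{pk}(t) = \sum_{k \in \CN_p} \min\{f_{pk}(t), f_{kp}(t)\}$, and since $|W_p(t)| > 0$ at least one summand equals $f_{pk}(t) < f_{kp}(t)$, so $\sum_{k \in \CN_p} f_{kp}(t) > \beta_p$. As $p$'s neighbours do not move, $\sum_{k \in \CN_p} f_{kp}(t+1) = \sum_{k \in \CN_p} f_{kp}(t) > \beta_p \ge \sum_{k \in \CN_p} f_{pk}(t+1)$, and therefore $\ff_p(t+1)$ must contain some neighbour $k$ with $f_{pk}(t+1) < f_{kp}(t+1)$, i.e. $W_p(t+1) \ne \emptyset$ regardless of how $p$ plays.

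\smallskip
\noindent\textbf{Contradiction.} Suppose $f_{pi}(t+1) > f_{ip}(t)$. Since $f_{ip}$ is unchanged, $f^*_{pi}(t+1) = f_{ip}(t)$, so $p$ draws utility $w_{pi}\,u_{pi}(f_{ip}(t))$ from edge $(p,i)$ while allocating strictly more than $f_{ip}(t)$ to it. Lowering her proposal on $(p,i)$ to exactly $f_{ip}(t)$ frees $\delta := f_{pi}(t+1) - f_{ip}(t) > 0$ units of budget, leaves every $f^*_{pj}$ unchanged (in particular keeps $i \notin W_p$), and reallocating this $\delta$ (capped at $f_{kp}(t+1)$) onto the winning-set neighbour $k$ from the previous step strictly raises $f^*_{pk}$, hence $p$'s utility, contradicting that $\ff_p(t+1)$ is a best-response. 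Thus $f_{pi}(t+1) \le f_{ip}(t) = f_{ip}(t+1)$, so $p \notin W_i(t+1)$; together with the other edges this gives $W_i(t+1) = \emptyset$, i.e. $i \in V_1(t+1)$.

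\smallskip
\noindent\textbf{Main obstacle.} The delicate point is that $p$ may be \emph{optimistic}, and an optimistic player intentionally ``wastes'' budget by over-proposing; in principle she could over-propose to $i$ and push $i$ into $W_i(t+1)$. The reason this never happens is exactly the key step above: the optimistic/pessimistic tie-breaking only ever applies once a player's winning set is empty, but $p$'s winning set can never be emptied (her endowment is strictly smaller than the total frequency her neighbours propose to her), so over-proposing to $i$ is always strictly dominated by shifting that budget onto a winning-set neighbour. This is the heart of the argument; the rest is bookkeeping on the unchanged proposals.
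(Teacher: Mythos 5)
Your proof is correct and follows essentially the same route as the paper: identify the unique mover as a $V_2(t)$ player via Claim~\ref{clm:v1stasis}, show her best response never strictly over-proposes to any (unchanged) neighbor, and conclude that $V_1(t)$ players keep empty winning sets. Your budget inequality $\sum_{k\in\CN_p} f_{kp}(t) > \beta_p$ and the exchange-to-a-winning-neighbor contradiction simply make rigorous the step the paper states informally (``her best-response can never be to exceed a proposal made to her''), and your remark that the optimistic tie-breaking never triggers because $W_p(t+1)\neq\emptyset$ is a valid clarification of the same point.
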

\begin{proof}
By Claim \ref{clm:v1stasis}, players in $V_1(t)$ will not change their strategies at time $t$. Thus, the 
player who changes strategy and plays a best-response at time $t$ has to belong to $V_2(t)$.
Let $i \in V_2(t)$ be a player that changes her strategy between times $t$ and $t+1$, {\em i.e.,} $\forall j \neq i$ $\ff_j(t+1)=\ff_j(t)$. 

Since $|W_i(t)| > 0$, $i$ does not have enough budget to match every
proposal made to her, we understand that $i$'s best-response strategy forces $f_{ij}(t) \leq f_{ji}(t)$
for all $j \in \CN_i$. Since $Sl_i(t)=0$, her best-response strategy for $t+1$ can never be to exceed a
proposal made to her, since this yields no utility, while there is utility to be gained for $i$ by trying to
match the players in $W_i(t)$. Thus, we understand that $i$'s best-response strategy forces
$f_{ij}(t+1) \leq f_{ji}(t+1)$ for all $j \in \CN_i$. Since there is no $j \in \CN_i$ such that
$f_{ij}(t+1) > f_{ji}(t+1)$, every player $k \in V_1(t)$ will still have $|W_k(t+1)| = 0$, and thus $k \in V_1(t+1)$.
\end{proof}

Next, we show that whenever a player in $V_2(t)$ increases her utility, it increases by at least some fixed constant.

\begin{claim}\label{cl:util-inc}
If at time $t$ it is $i$'s turn to play her best-response strategy and $i \in V_2(t)$, then either
$u_i(\ff(t+1)) = u_i(\ff(t))$, or
\[
u_i(\ff(t+1)) \geq u_i(\ff(t)) + \Delta_i
\]
where $\Delta_i > 0$ is a fixed constant.
\end{claim}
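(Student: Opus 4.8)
The plan is to exploit the $\eta$-discretization of the strategy space; concavity of the $u_{ij}$'s will in fact play no role. For context, since we are in the regime $t \ge t_0$ where the total slack has stabilized and $i \in V_2(t)$ has $Sl_i(t) = 0$, on her turn $i$ can only reshuffle her already $\eta$-quantized proposals (creating positive slack would be suboptimal, as she has a nonempty winning set $W_i(t)$ to spend on). The key structural fact is this: because every proposal is a multiple of $\eta$ and, without loss of generality, every $\beta_k$ is a multiple of $\eta$, each agreed-upon frequency $f^*_{ij}(t') = \min\{f_{ij}(t'), f_{ji}(t')\}$ is a multiple of $\eta$ at every time $t'$; and since $\sum_{j \in \CN_i} f^*_{ij}(t') \le \sum_{j \in \CN_i} f_{ij}(t') \le \beta_i$, the vector $(f^*_{ij}(t'))_{j \in \CN_i,\, j \ne i}$ always lies in the \emph{finite} set $D_i$ of nonnegative, $\eta$-quantized vectors indexed by $\CN_i \setminus \{i\}$ whose entries sum to at most $\beta_i$.

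Consequently the set of utility values that player $i$ can ever attain, namely $U_i = \{\, \sum_{j \in \CN_i,\, j \ne i} w_{ij}\, u_{ij}(x_j) : (x_j)_j \in D_i \,\}$, is finite, and it depends only on the fixed game data $\CN_i, (w_{ij}), (u_{ij}), \beta_i$ and $\eta$ — not on $t$. I then set
\[
\Delta_i = \min\{\, |a - b| : a, b \in U_i,\ a \ne b \,\}
\]
(taking $\Delta_i$ to be any positive number if $|U_i| = 1$, in which case the first alternative of the claim holds trivially for all $t$). By construction $\Delta_i > 0$ is a fixed constant, independent of $t$.

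To finish, note that when $i$ plays at time $t$ her previous strategy $\ff_i(t)$ is still feasible for the best-response program \eqref{eq:BR} against the unchanged profile $\ff_{-i}(t) = \ff_{-i}(t+1)$, so $u_i(\ff(t+1)) \ge u_i(\ff(t))$; moreover both $u_i(\ff(t))$ and $u_i(\ff(t+1)) = \sum_{j} w_{ij}\, u_{ij}(\min\{f_{ij}(t+1), f_{ji}(t)\})$ lie in $U_i$, all their arguments being on the $\eta$-grid. If the two values coincide we are in the first case of the claim; otherwise they are distinct elements of $U_i$, hence differ by at least $\Delta_i$, which gives $u_i(\ff(t+1)) \ge u_i(\ff(t)) + \Delta_i$. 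The only point requiring any care — the ``main obstacle,'' such as it is — is verifying that the $\eta$-quantization is preserved under the $\min$ defining $f^*_{ij}$ and that $D_i$ (hence $\Delta_i$) is genuinely uniform in $t$; the uniform bound $\sum_j f^*_{ij}(t') \le \beta_i$ is exactly what confines the profile of agreed frequencies to the fixed finite set $D_i$ at every time step.
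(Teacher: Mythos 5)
Your proposal is correct and follows essentially the same route as the paper's proof: both exploit the $\eta$-quantization of proposals to make the set of attainable outcomes finite and define $\Delta_i$ as the minimum positive utility gap over that finite set, then use feasibility of $\ff_i(t)$ in \eqref{eq:BR} to get $u_i(\ff(t+1)) \ge u_i(\ff(t))$. If anything, your formulation via the finite set of agreed-frequency vectors $D_i$ and attainable utility values $U_i$ is slightly more careful than the paper's (which takes the minimum over pairs of $i$'s own strategies, glossing over the dependence of $u_i$ on the neighbors' proposals through the $\min$), but the underlying idea is identical.
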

\begin{proof}
At time $t$, $i$ calculates her best-response strategy $\ff_i(t+1)$ by solving \eqref{eq:BR}.
Since the solution to \eqref{eq:BR} maximizes $i$'s utility at time $t+1$, we know that
\[
u_i\left(\ff(t+1)\right) - u_i\left(\ff(t)\right) \geq 0
\]
Clearly, if $u_i\left(\ff(t+1)\right) = u_i\left(\ff(t)\right)$, then $\ff_i(t)$ is also a best-response strategy
at time $t+1$, and $i$ can simply not change strategies and get the same utility. Suppose now that
$u_i\left(\ff(t+1)\right) > u_i\left(\ff(t)\right)$. We know that $f_{ij}(t) = \lambda_{ij}(t) \cdot \eta$ and
$f_{ij}(t+1) = \lambda_{ij}(t+1) \cdot \eta$. Thus, there exist only ${\left( \frac{\beta_i}{\eta} \right)}^2$ choices
for the pair $(f_{ij}(t), f_{ij}(t+1))$. By the same argument, there exist only
${\left( \frac{\beta_i}{\eta} \right)}^{2 |\CN_i|}$ choices for the pair of strategies $(\ff_i(t), \ff_i(t+1))$, which
are finite. Recall that $\mathcal{F}_i$ denotes the set of all possible strategies of $i$. Let $\mathcal{F}^{2>}_i$ be
the set of all possible pairs of strategies of $i$ such that the first has higher utility than the second. In other
words,
\[
\mathcal{F}^{2>}_i = \left\{ \left( \ff_i, \ff'_i \right) \ \bigg| \ \ff_i, \ff'_i \in \CF_i \ \ \text{and} \ \ u_i(\ff) > u_i(\ff') \right\}
\]
We define
\[
\Delta_i = \: \min_{\left(\ff_i, \ff'_i\right) \in \mathcal{F}^{2>}_i} \: {u_i(\ff) - u_i(\ff')}
\]
This is the minimum possible increase in utility between two strategy profiles of $i$, and it is a fixed positive
constant, since ${|\mathcal{F}_i|}^2$ is finite. Clearly, for any pair of strategy profiles $(\ff_i(t), \ff_i(t+1))$
of $i$, if $u_i\left(\ff(t+1)\right) > u_i\left(\ff(t)\right)$, then $u_i(\ff(t+1)) \geq u_i(\ff(t)) + \Delta_i$.
\end{proof}

Finally, we show that utility of players in $V_2$ is non-decreasing.
\begin{claim}\label{cl:util-nodec}
For all $i \in V_2(t)$, $u_i(\ff(t+1)) \geq u_i(\ff(t))$.
\end{claim}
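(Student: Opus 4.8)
The plan is to split on the identity of the player $p$ who plays her best-response at time $t$. By Claim~\ref{clm:v1stasis} every member of $V_1(t)$ is already at best-response, so $p \in V_2(t)$. Fix an arbitrary $i \in V_2(t)$. If $i = p$, then $u_i(\ff(t+1)) \ge u_i(\ff(t))$ is immediate: $i$ plays a utility-maximizing strategy against $\ff(t)$ and her old proposal $\ff_i(t)$ is feasible, which is exactly the first alternative of Claim~\ref{cl:util-inc}. If $i \neq p$ and $i \notin \CN_p$, then $p$'s move only alters the agreed frequencies $f^*_{pj}$ on edges incident to $p$, so $f^*_{ij}(t+1) = f^*_{ij}(t)$ for every $j \in \CN_i$ and $u_i$ is unchanged. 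Hence the only case that needs work is $i \in V_2(t) \cap \CN_p$ with $i \neq p$.

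For that case I would first record a structural fact: an edge joining two players of $V_2(t)$ carries a matched proposal. Since $Sl_p(t) = 0$, player $p$ cannot over-propose on any edge, so $f_{pj}(t) \le f_{jp}(t)$ for all $j \in \CN_p$; symmetrically $Sl_i(t) = 0$ gives $f_{ip}(t) \le f_{pi}(t)$; therefore $f_{pi}(t) = f_{ip}(t) =: x$ and $f^*_{ip}(t) = x$. Because $p$ is the only player to move, $f^*_{ip}(t+1) = \min\{f_{pi}(t+1), x\}$ and $f^*_{ij}(t+1) = f^*_{ij}(t)$ for every other $j \in \CN_i$. So it suffices to prove $f_{pi}(t+1) \ge x$: then $f^*_{ip}(t+1) = x = f^*_{ip}(t)$, and since $u_{ip}$ is increasing, $u_i(\ff(t+1)) \ge u_i(\ff(t))$.

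The heart of the argument — and the step I expect to be the main obstacle — is proving $f_{pi}(t+1) \ge x$, i.e.\ that a best-responding $V_2(t)$ player never cuts her proposal to a $V_2(t)$ neighbor. I would do this by contradiction against the stabilization hypothesis. Suppose $f_{pi}(t+1) = x - \delta$ with $\delta \ge \eta$ (all proposals are multiples of $\eta$). Then $f^*_{ip}(t+1) = x - \delta$, so $Sl_i(t+1) = Sl_i(t) + \delta = \delta > 0$ while $i$'s own proposals are untouched; moreover $p \notin W_i(t+1)$ because now $f_{ip}(t+1) = x > x - \delta = f_{pi}(t+1)$, so $W_i(t+1) = W_i(t)$ and $|W_i(t+1)| = |W_i(t)| > 0$. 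Thus at time $t+1 \ge t_0$ player $i$ simultaneously has a non-empty winning set and slack at least $\eta$; but, as observed in the proof of Lemma~\ref{lem:winslack-decr}, Claim~\ref{clm:personal} then forces the total slack to strictly decrease at some later round, contradicting $Sl(t) = Sl(t_0)$ for all $t \ge t_0$. Hence $f_{pi}(t+1) \ge x$, which finishes the proof. The one subtlety to treat carefully here is that $i$ need not be scheduled to move at $t+1$, so to apply Claim~\ref{clm:personal} one should invoke the invariant, established just before these claims, that $|W_j(t)| > 0 \Rightarrow Sl_j(t) = 0$ holds at every time $t \ge t_0$ (equivalently, note that the only players who can act are those in $V_2$, and none of their moves can dissolve $i$'s bad configuration before $i$ herself acts).
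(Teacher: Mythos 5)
Your proof is correct and follows essentially the same route as the paper's: restrict attention to a neighbor $i \in V_2(t)$ of the mover, use zero slack at both endpoints to get matched proposals, and argue by contradiction that a cut in the mover's proposal would leave $i$ with slack at least $\eta$ and an unchanged non-empty winning set, which via Claim~\ref{clm:personal} forces a later strict decrease of the total slack, contradicting stabilization. Your explicit treatment of the $i=p$ case, the non-neighbor case, and the scheduling subtlety merely spells out details the paper leaves implicit.
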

\begin{proof}
At time $t$ suppose $k \in V_2(t)$ plays a best-response. By Claim \ref{cl:util-inc}, $u_k(\ff(t+1)) > u_k(\ff(t))$. Now consider an $i\in V_2(t)$ other than $k$. The only way $i$'s utility decreases is if $i$ is a neighbor of $k$, and $k$ decreases her proposal to $i$. Since, $Sl_i(i)=0$, it must be the case that $f_{ki}(t)=f_{ik}(t)$, and after the $k$ changes we have $f_{ki}(t+1) \le f_{ik}(t)-\eta$. This implies $Sl_i(t+1)\ge \eta$. Since $i \in V_2(t)$ we already knew that $|W_i(t)|>0$. Since $k\notin W_i(t)$, at time $(t+1)$ the set $W_i$ is same as at $t$. Thus, we have $Sl_i(t+1)>0$ and $|W_i(tِ+1)|>0$. Then, by Lemma \ref{clm:personal} the total slack has to decrease after some time, a contradiction to the hypothesis that the total slack is constant. 
\end{proof}

To conclude, by Claim \ref{clm:v1non-decr} set $V_1$ is monotonically increasing, while for players in set $V_2$ the utility is non-decreasing (Claim \ref{cl:util-nodec}). Furthermore, since all players in $V_1$ are at their best-response, in every round a player in $V_2$ changes strategy and by Claim \ref{cl:util-inc} increases her utility by a non-trivial amount. Now, if the maximum utility an agent can achieve is bounded, then the dynamics has to converge in finite time and the proof follows.

Next, we calculate the maximum possible utility that $i$ could ever obtain. This
is given by the following convex program which describes the ideal frequency allocation according to player $i$, ignoring
contraints put by the other players
\begin{equation} \label{eq:localconv}
\begin{array}{lll}
max  & \qquad \displaystyle{\sum_{\mathclap{\substack{j \in \CN_i \\ j \neq i}}} {w_{ij} u_{ij}(f^*_{ij}) } } & \\
s.t. & \qquad \sum_{j \in \CN_i} f_{ij} \leq \beta_i & \\
     & \qquad f_{ij} \geq 0 & \forall j \in \CN_i
\end{array}
\end{equation}
Let $OPT_i$ be the solution to \eqref{eq:localconv}. Notice that $i$'s proposals here are not restricted by $j$'s proposals,
which implies that $i$ cannot get more utility than $OPT_i$ with budget $\beta_i$. Therefore, player $i\in V_2$ can only change
strategies to increase her utility at most $\frac{OPT_i}{\Delta_i}$ times, after which she cannot increase her utility anymore.
Thus the proof follows using Claims \ref{cl:util-inc} and \ref{cl:util-nodec}.


\section{Missing Proofs of Section \ref{sec:proper}}\label{app:proper}

\subsection{Proof of Lemma \ref{lem:match}}\label{app:match}

We construct a new frequency profile $\ff'$ from $\ff$ such that $SW(\ff') = SW(\ff)$ and $\ff'$ is a \NE, in
the following way. For every player $i \in \players$, we look at $i$'s proposals to her neighbors in $\ff$. For every
$j \in \CN_i$ such that $f_{ij} > f_{ji}$, we adjust $i$'s proposal to $j$ so that $i$ matches $j$'s proposal to her,
thus having $f'_{ij} = f_{ji}$. In other words, we make $i$ into a pessimistic player.

Since this procedure is done for all players, it follows that every player in $W_i(\ff)$ lowered their proposal to $i$
to match $i$'s proposal to them. Thus, $\forall j \in W_i(\ff)$, in $\ff'$, we have $f'_{ji} = f_{ij} = f'_{ij}$.
Since $i$ is now matching all frequency proposals of her neighbors, it follows that $|W_i(\ff')| = 0$. Since this
condition holds for every $i \in \players$, from Lemma \ref{lem:wineq} we get that $\ff'$ is a \NE.

We finally show that $SW(\ff') = SW(\ff)$. Indeed, since the only difference between $\ff'$ and $\ff$ are the
proposals of player $i$ to $j$ for the players $j$ such that $f_{ij} > f_{ji}$, we understand that in $i$'s utility
calculation, $i$ already satisfied the constraint $f^*_{ij} \leq f_{ji}$ with equality. Thus, this decrease in
proposal from $f_{ij} > f_{ji}$ to $f'_{ij} = f_{ji}$, does not affect $i$'s utility at all. Thus, for every player
$i \in \players$, we get that $u_i(\ff') = u_i(\ff)$, which implies $SW(\ff') = SW(\ff)$.

\subsection{Proof of Theorem \ref{theor:poa}}\label{app:poa}

Consider the game with players on a two-dimensional grid presented in Fig. \eqref{fig:bad-poa-grid}. The graph in this game
is $4$-regular and every player has exactly $4$ neighbors. Given a player $i$ in row $k$, she has one neighbor in row $k+1$,
two neighbors in row $k$, and one neighbor in row $k-1$. We assume that, for every player $i$ in row $k$,
$w_{ij} = \frac{1}{2} - \eps$ if $j$ is a vertical neighbor (either in row $k-1$ or in row $k+1$), while $w_{ij} = \eps$ if
$j$ is a horizontal neighbor, i.e. $j$ is in row $k$. Thus, every player wants to interact more with her vertical
neighbors and wants to have almost no interaction with her horizontal neighbors. Furthermore, we assume that all
players $i \in \players$ have the same resource constraint $\beta_i = \beta$, and that for all players $i,j$ where $j \in \CN_i$,
the utility that $i$ gets from $j$ is $u_{ij}(f^*_{ij}) = f^*_{ij}(\beta - f^*_{ij})$ for $0 \leq f^*_{ij} \leq \frac{\beta}{2}$.

\begin{figure}[!htbp]
 \centering
 \begin{subfigure}[b]{0.45\linewidth}
  \centering
  \includegraphics[width=\linewidth]{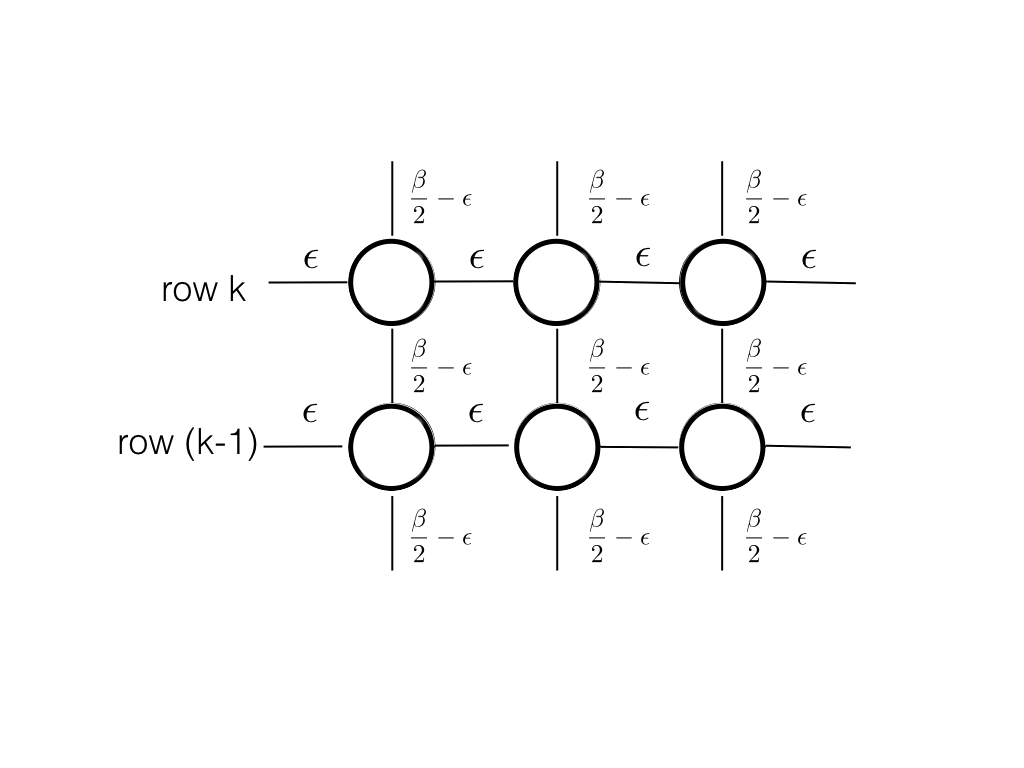}
  \caption{Frequency profile with high social welfare.}\label{fig:bad-poa-grid:good}
 \end{subfigure}
 ~
 \begin{subfigure}[b]{0.45\linewidth}
  \centering
  \includegraphics[width=\linewidth]{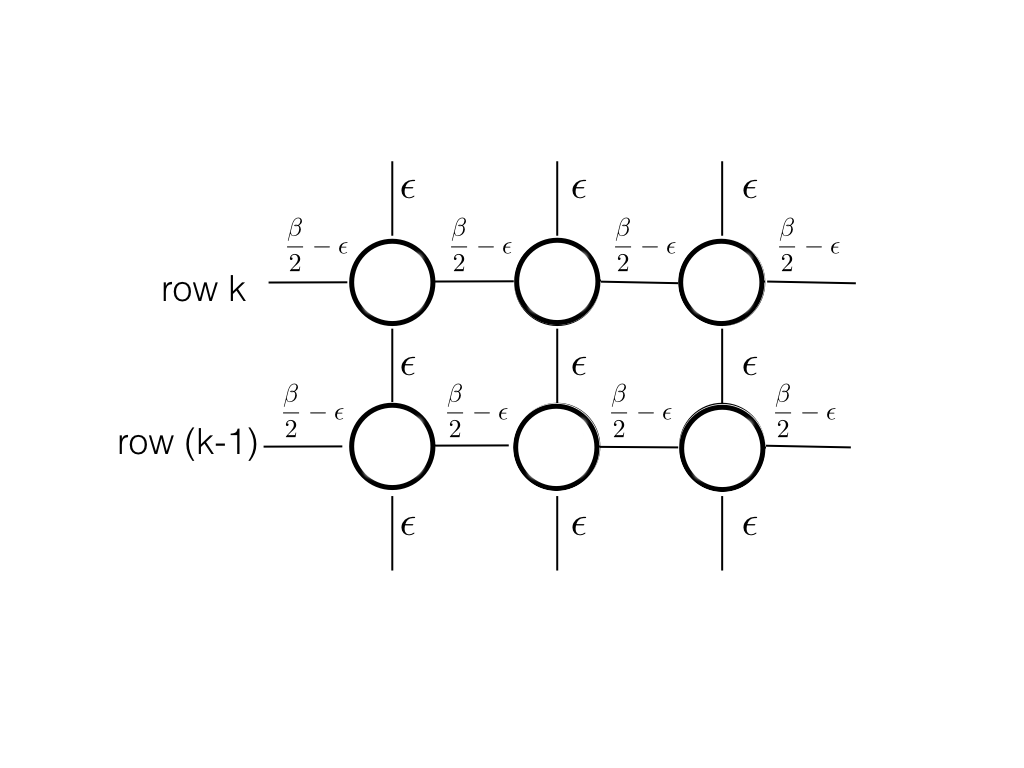}
  \caption{NE with low social welfare.}\label{fig:bad-poa-grid:bad}
 \end{subfigure}
 \caption{Example of a game with arbitrarily bad PoA.}\label{fig:bad-poa-grid}
\end{figure}

Consider two frequency profiles $\ff_{good}$ and $\ff_{bad}$. In $\ff_{good}$, we have every player $i$ assigning
frequency $\frac{\beta}{2} - \eps$ to her vertical neighbors and $\eps$ to her horizontal neighbors, as shown in
Fig. \eqref{fig:bad-poa-grid:good}. In $\ff_{bad}$, every player $i$ assigns frequency $\eps$ to her vertical neighbors
and $\frac{\beta}{2} - \eps$ to her horizontal neighbors, as shown in Fig. \eqref{fig:bad-poa-grid:bad}. We show that
$\ff_{good}$ corresponds to a frequency profile with good social welfare, while $\ff_{bad}$ corresponds to a \NE with
bad social welfare.

First of all, notice that in $\ff_{bad}$ every player is matching all her neighbors' frequency proposals. Thus, for
every player $i \in \players$, we have $|W_i(\ff_{bad})| = 0$, which implies that $\ff_{bad}$ is a \NE, by Lemma
\ref{lem:wineq}. No player has positive slack and thus, no player can gain more utility by unilaterally deviating from
her proposals.

We proceed by calculating the social welfare in $\ff_{good}$ and $\ff_{bad}$. In $\ff_{good}$, every player $i$ gets utility

\[
u_i(\ff_{good}) = 2 \left( \frac{1}{2} - \eps \right) \left( \frac{\beta}{2} - \eps \right) \left( \frac{\beta}{2} + \eps \right) + 2 \eps^2 \left( \beta - \eps \right)
\]

while in $\ff_{bad}$, every player $i$ gets utility

\[
u_i(\ff_{bad}) = 2 \eps \left( \frac{\beta}{2} - \eps \right) \left( \frac{\beta}{2} + \eps \right) + 2 \left( \frac{1}{2} - \eps \right) \eps \left( \beta - \eps \right)
\]

Therefore, the social welfare in the two profiles is

\[
SW(\ff_{good}) = n \left( 2 \left( \frac{1}{2} - \eps \right) \left( \frac{\beta}{2} - \eps \right) \left( \frac{\beta}{2} + \eps \right) + 2 \eps^2 \left( \beta - \eps \right) \right)
\]

and

\[
SW(\ff_{bad}) = n \left( 2 \eps \left( \frac{\beta}{2} - \eps \right) \left( \frac{\beta}{2} + \eps \right) + 2 \left( \frac{1}{2} - \eps \right) \eps \left( \beta - \eps \right) \right)
\]

Let $OPT$ be the solution profile that maximizes the social welfare, and let $NE_{worst}$ be the \NE that minimizes the
social welfare. We have

\begin{equation}\label{eq:poa}
PoA = \frac{SW(OPT)}{SW(NE_{worst})} \geq \frac{SW(\ff_{good})}{SW(\ff_{bad})} = \frac{n \left( 2 \left( \frac{1}{2} - \eps \right) \left( \frac{\beta}{2} - \eps \right) \left( \frac{\beta}{2} + \eps \right) + 2 \eps^2 \left( \beta - \eps \right) \right)}{n \left( 2 \eps \left( \frac{\beta}{2} - \eps \right) \left( \frac{\beta}{2} + \eps \right) + 2 \left( \frac{1}{2} - \eps \right) \eps \left( \beta - \eps \right) \right)}
\end{equation}

It is easy to see that $\displaystyle{\lim_{\eps \to 0} {PoA} = + \infty}$. Thus, we can make the PoA arbitrarily large
by decreasing $\eps$. Notice that this result does not depend on the size of the grid, so the game can have bad \NEa even
with a small number of players.

\subsection{Proof of Lemma \ref{lem:opt-pes}}\label{app:opt-pes}

The first part of the statement follows immediately from Lemma \ref{lem:match}. We want to show that every 
convex combination of $\ff$ and $\ff'$ is also an optimistic \NE. Let $\alpha \in (0, 1]$, and
$\ff'' = \alpha \ff + (1 - \alpha) \ff'$. If $\alpha = 0$, then $\ff'' = \ff'$ and it is a pessimistic \NE, thus we focus
on all other values of $\alpha$. Consider a pair of players $i, j \in \players$ such that $f_{ij} > f_{ji}$ in $\ff$.
Obviously, $f'_{ij} = f'_{ji}$ in $\ff'$.

By our construction of $\ff'$ from Lemma \ref{lem:match}, we have that $f_{ji} = f'_{ji}$. Thus, we understand that
$f''_{ji} = f'_{ji} = f_{ji}$ in $\ff''$. For $i$'s frequency proposal we have

\begin{equation}\label{eq:opt-pes}
f''_{ij} = \alpha f_{ij} + (1 - \alpha) f'_{ij}
\end{equation}

Since $f_{ij} > f_{ji}$ in $\ff$, let $f_{ij} = f_{ji} + \delta$, where $\delta > 0$. Then, \eqref{eq:opt-pes} becomes

\begin{align*}
f''_{ij} & = \alpha (f_{ji} + \delta) + (1 - \alpha) f'_{ij} = \alpha (f_{ji} + \delta) + (1 - \alpha) f'_{ji} = \alpha (f_{ji} + \delta) + (1 - \alpha) f_{ji} \\
& = f_{ji} + \alpha \cdot \delta
\end{align*}

We see that $i$'s proposal to $j$ in $\ff''$ is $f''_{ij} = f_{ji} + \alpha \cdot \delta > f_{ji}$. Since $\ff$ is a \NE,
player $j$ gained no utility by deviating from their strategy and increasing their proposal to $i$. In $\ff''$ $i$ still
proposes higher than $j$ in their interaction, but her proposal is a lower one overall. Because $j$ did not gain any utility
by deviating in $\ff$, she will still not gain any utility by deviating in $\ff''$. Since this holds for every such pair
$i, j \in \players$, we understand that $\ff''$ is an optimistic \NE.

\section{Characterization of the Best-Response}\label{app:br}

In this section, we attempt to understand the players' best-response in depth. As stated in section \ref{sec:prelim},
at each turn $t$, the player that is picked to update her proposals solves the local convex program \eqref{eq:BR} in order
to maximize her utility. We look at the KKT conditions of \eqref{eq:BR} in order to obtain better intuition as to how this
update is performed.

Let $i$ be the player that updates her opinion at time $t$, for a given profile $\ff_{-i}$. We look at the dual of
\eqref{eq:BR}. Let $\lambda_{ij}$ be the dual variable corresponding to the constraint $f^*_{ij} \leq f_{ij}$,
$\lambda_{ji}$ be the dual variable corresponding to $f^*_{ij} \leq f^C_{ji}$, and $\delta_i$ be the dual variable
corresponding to $\sum_{j \in \CN_i} {f_{ij}} \leq \beta_i$. The KKT conditions of \eqref{eq:BR} are the following

\begin{align}\label{eq:br-kkt}
\begin{array}{rlr}
w_{ij} \frac{\prt u_{ij}}{\prt f^*_{ij}} \leq \lambda_{ij} + \lambda_{ji} & \bot \quad f^*_{ij} \geq 0 & \qquad \forall j \in \CN_i \\
\lambda_{ij} \leq \delta_i & \bot \quad f_{ij} \geq 0 & \qquad \forall j \in \CN_i
\end{array}
\end{align}

where $\bot$ denotes the complementarity of the conditions, meaning that one of the two inequalities on the left and on
the right side of $\bot$ have to be tight, i.e. hold with equality. Let us reorder the players in $\CN_i$ in the following
order, where

\[
w_{i1} \frac{\prt u_{i1}}{\prt f^*_{i1}}\bigg|_{f^*_{i1} = 0} \geq w_{i2} \frac{\prt u_{i2}}{\prt f^*_{i2}}\bigg|_{f^*_{i2} = 0} \geq \cdots \geq w_{ik} \frac{\prt u_{ik}}{\prt f^*_{ik}}\bigg|_{f^*_{ik} = 0}
\]

where $k = |\CN_i|$.

Initially, let all $f_{ij} = 0$. Now $i$ tries to maximize her utility, by increasing $f_{i1}$, because
$w_{i1} \frac{\prt u_{i1}}{\prt f^*_{i1}}\Big|_{f^*_{i1} = 0}$ is the highest in her neighbourhood. $i$ will keep
increasing $f_{i1}$ alone, until
$w_{i1} \frac{\prt u_{i1}}{\prt f^*_{ij}} = w_{i2} \frac{\prt u_{i2}}{\prt f^*_{i2}}\Big|_{f^*_{i2} = 0}$. At that point,
$i$ will get the same increase in utility by both players $1$ and $2$. Thus, $i$ will start increasing both $f_{i1}$ and
$f_{i2}$ simultaneously. Again, this will be optimal until
$w_{i1} \frac{\prt u_{i1}}{\prt f^*_{i1}} = w_{i2} \frac{\prt u_{i2}}{\prt f^*_{i2}} = w_{i3} \frac{\prt u_{i3}}{\prt f^*_{i3}}\Big|_{f^*_{i3} = 0}$.
This process continues in a similar manner.

If during the process we get $f_{ij} = f_{ji}$ for some $j \in \CN_i$ as $f_{ij}$ increases, then $i$ will stop increasing
$f_{ij}$ because $f^*_{ij} = f_{ji}$, and $i$'s utility will not increase by further increasing $f_{ij}$, as indicated by
\eqref{eq:BR}. Instead $i$ will start increasing $\lambda_{ij}$ to accommodate for that fact. In the end, we either have
$\sum_{j \in \CN_i} {f_{ij}} = \beta_i$ and $i$ has allocated fully her resource, or
$f_{ij} = f_{ji} \quad \forall j \in \CN_i$, and thus $Sl_i > 0$.

It is clear that via this analysis we get a full characterization of $i$'s best-response. This iterative process helps us
understand our game more as well as provides some intuition as to why best-response dynamics converge. Finally, this
analysis help us model the best-response in our experiments, as discussed in section \ref{sec:exper}.

\end{document}